\documentclass[runningheads]{llncs}
\usepackage{graphicx}
\usepackage{pbox}
\usepackage{amsmath}
\usepackage{rotating}
\usepackage{amssymb}
\usepackage{geometry}
\usepackage{multirow}
\usepackage{algorithm}
\usepackage{algpseudocode}
\usepackage{mathtools}
\usepackage{comment}
\usepackage{bbm}
\usepackage[misc,geometry]{ifsym}
\usepackage[british]{babel}

%

\newcommand{\BigO}{\mathcal{O}}
\newcommand{\BigOs}{\mathcal{O}^*}
\newcommand{\DFT}{\textrm{DFT}}
\newcommand{\Field}{\mathbb{F}}
\newcommand{\InvDFT}{\DFT^{-1}}
\newcommand{\Int}{\mathbb{Z}}
\newcommand{\Nat}{\mathbb{N}}
\newcommand{\NP}{\mathcal{NP}}
\newcommand{\tens}{\vec}

\begin{document}
\title{Fast Algorithms for Join Operations on Tree Decompositions}
%
%
\author{Johan M. M. van Rooij (\Letter)}
\authorrunning{J.M.M. van Rooij}
%
\institute{Department of Information and Computing Sciences, Utrecht University\\PO Box 80.089, 3508 TB Utrecht, The Netherlands\\\email{J.M.M.vanRooij@uu.nl}}
\maketitle

\begin{abstract}
	Treewidth is a measure of how tree-like a graph is.
	It has many important algorithmic applications because many NP-hard problems on general graphs become tractable when restricted to graphs of bounded treewidth.
	Algorithms for problems on graphs of bounded treewidth mostly are dynamic programming algorithms using the structure of a tree decomposition of the graph.
	The bottleneck in the worst-case run time of these algorithms often is the computations for the so called join nodes in the associated nice tree decomposition.
	
	In this paper, we review two different approaches that have appeared in the literature about computations for the join nodes: one using fast zeta and M\"obius transforms and one using fast Fourier transforms.
	We combine these approaches to obtain new, faster algorithms for a broad class of vertex subset problems known as the $[\sigma,\rho]$-domination problems.	
	Our main result is that we show how to solve $[\sigma,\rho]$-domination problems in $\BigO(s^{t+2} t n^2 (t\log(s)+\log(n)))$ arithmetic operations.
	Here, $t$ is the treewidth, $s$ is the (fixed) number of states required to represent partial solutions of the specific $[\sigma,\rho]$-domination problem, and $n$ is the number of vertices in the graph.
	This reduces the polynomial factors involved compared to the previously best time bound (van Rooij, Bodlaender, Rossmanith, ESA 2009) of $\BigO( s^{t+2} (st)^{2(s-2)} n^3 )$ arithmetic operations.
	In particular, this removes the dependence of the degree of the polynomial on the fixed number of states~$s$.
	
	\keywords{Tree Decompositions \and Dynamic Programming \and Fast Fourier Transform \and M\"obius Transform \and Fast Subset Convolution \and Sigma-Rho Domination}
\end{abstract}

\section{Introduction}
Treewidth is an important concept in the theory of graph algorithms that measures how tree-like a graph is.
While many problems that are $\NP$-hard on general graphs become efficiently solvable when restricted to trees, this often extends to these problems being polynomial or even linear-time solvable when restricted to graphs that have bounded treewidth.
In general this is done in two steps:
\begin{enumerate}
	\item Find a tree decomposition of the input graph of small treewidth.
	\item Solve the problem by dynamic programming on this tree decomposition.
\end{enumerate}
In this paper, we focus on the second of these two steps and show how to improve the running times of algorithms on tree decompositions using algebraic transforms.
We apply these to the general case of the so called $[\sigma,\rho]$-domination problems.
This includes many well-known vertex subset problems such as {\sc Independent Set}, {\sc Dominating Set} and {\sc Total Dominating Set}, but also problems such as {\sc Induced Bounded Degree Subgraph} and {\sc Induced $p$-Regular Subgraph}.

If we assume that a graph~$G$ is given with a tree decomposition~$T$ of~$G$ of width~$t$, then the running time of an algorithm on tree decompositions is typically polynomial in the size of graph~$G$, but exponential in the treewidth~$t$.
Early examples of such algorithms include algorithms on vertex partitioning problems (including the $[\rho,\sigma]$-domination problems)~\cite{TelleP97}, edge colouring problems such as {\sc Chromatic Index}~\cite{Bodlaender90}, or other problems such as {\sc Steiner Tree}~\cite{KorachS90}.
Often the worst-case running time of these algorithms involve large factors that depend on the treewidth~$t$.
This lead researchers to look for algorithms where these factors grow as slow as possible as a function of~$t$.
For several {\sc Dominating Set}-like problems such as {\sc Independent Dominating Set}, {\sc Total Dominating Set}, {\sc Perfect Dominating Set} and {\sc Perfect Code}, Alber et al.~\cite{AlberBFKN02} give improved algorithms with special attention to the exponential dependence on the treewidth~$t$: for example, they showed how to solve {\sc Dominating Set} in $\BigOs(4^t)$ time.
This was improved by Van Rooij et al.~\cite{vanRooijBR09} who first showed how to solve {\sc Dominating Set} in $\BigOs(3^t)$ time by giving an $\BigO(3^tt^2n)$-time algorithm.
Van Rooij et al.~also generalised this result solving the $[\rho,\sigma]$-domination problems in  $\BigO( s^{t+2} (st)^{2(s-2)} n^3 )$ time.
The result for {\sc Dominating Set} seems to be optimal in some sense, as Lokshtanov et al.~\cite{LokshtanovMS10} showed that any $\BigOs((3-\epsilon)^t)$-time algorithm would violate the \emph{Strong Exponential-Time Hypothesis}; we expect the same for the other $[\rho,\sigma]$-domination problems.

Since then, several results have appeared improving running times of dynamic programming algorithms on tree decompositions.
For example, the algorithm by Van Rooij et al.~\cite{vanRooijBR09} has been generalised to {\sc Distance-$r$ Dominating Set}~\cite{BorradailH17} and {\sc Distance-$r$ Independent Set}~\cite{KatsikarelisLP18}.
The most notable new results are the \emph{Cut and Count} technique~\cite{CyganNPPRW11} giving randomised $\BigOs(c^t)$-time algorithms for many graph connectivity problems, mostly supported by matching lower bounds based on the Strong Exponential-Time Hypotheses, the \emph{rank-based approach}~\cite{BodlaenderCKN15,CyganKN13} and the \emph{determinant-based approach}~\cite{BodlaenderCKN15,Wlodarczyk19} that derandomise these results at the cost of a greater base of the exponent~$c$.

For many of these algorithms, the computations in the so called join nodes of a nice tree decomposition are the bottleneck of the worst-case run time.
To speed up these computations, several approaches have been used, often based on algebraic transforms.
One such method is using fast zeta and M\"obius transforms in a way that is similar to the well-known fast subset convolution algorithm by Bj\"orklund et al~\cite{BjorklundHKK07}.
This method was first used in the context of tree decompositions by Van Rooij et al.~\cite{vanRooijBR09} who also generalised the approach to work for the $[\sigma,\rho]$-domination problems.
At the same time, Cygan and Pilipczuk, showed that the fast subset convolution result could also be based on Fourier transforms~\cite{CyganP10}; they also generalised it in a different way.
A variant to this approach that we follow in this paper, directly applied to tree decompositions, can be found in the appendix of~\cite{CyganNPPRW11a}.
We will discuss both these approaches in more detail in this paper.
Finally, faster joins are also obtained based on Clifford algebras~\cite{Wlodarczyk19}, but these are beyond the scope of this paper.

\subsection{Goal of this paper}
The goal of this paper is twofold.
Firstly, we want to present a faster algorithm for the $[\sigma,\rho]$-domination problems.
This algorithm uses $\BigO(s^{t+2} t n^2 (t\log(s)+\log(n)))$ arithmetic operations: this improves the polynomial factors compared to our earlier result~\cite{vanRooijBR09} and removes the dependency of the degree of the polynomial on $s$, where $s$ is the (fixed) number of states used.
Secondly, we want to give a comprehensible overview of how Fourier and M\"obius transforms can be used to obtain faster algorithms on tree decompositions.

We choose to take an algebraic perspective that allows for easier generalisation and easier combination of Fourier and M\"obius transform-based approaches than that in~\cite{vanRooij11,vanRooijBR09}.
However, we consider the approach in~\cite{vanRooij11,vanRooijBR09} to be more intuitive: it relies only on counting arguments (this is especially true for the first algorithm for {\sc Dominating Set} in~\cite{vanRooij11} that does not explicitly use any algebraic transform).
In our overview, we will not give details on our earlier generalised convolution approach from~\cite{vanRooijBR09}: after the initial examples, we directly go to the new and improved algorithm.

\section{Preliminaries} \label{sec:prelim}

\subsection{Graphs and Tree Decompositions} \label{sec:treewidth}
Let $G=(V,E)$ be an $n$-vertex graph with $m$ edges.
A \emph{terminal graph}\footnote{This is also known as a $k$-boundary graph} $G_X = (V,E,X)$ is a graph $G=(V,E)$ with an ordered sequence of distinct vertices that we call its terminals: $X=\{x_1,x_2,\ldots,x_k\}$ with each $x_j \in V$.
Two terminal graphs $G_X = (V_1,E_1,X_1)$ and $H_X=(V_2,E_2,X_2)$ with the same number of terminals~$k$, but disjoint vertex and edge sets, can be \emph{glued} together to form the terminal graph $G_X \oplus H_X$ by identifying each terminal $x_i$ from $X_1$ with $x_i$ from $X_2$, for all $1 \leq i \leq k$.
That is, if $X = X_1 = X_2$ through identification, then $G_X \oplus H_X = ( V_1 \cup V_2, E_1 \cup E_2, X)$.
A \emph{completion} of a terminal graph~$G_X$ is a non-terminal graph~$G$ that can be obtained from~$G_X$ by gluing a terminal graph~$H_X$ on $G_X$ and then ignoring which vertices are terminals in the result.

The treewidth of a (non-terminal) graph is a measure of how-tree like the graph is.
From an algorithmic viewpoint this is a very useful concept because, where many $\mathcal{NP}$-hard problems on general graphs are linear time solvable on trees by dynamic programming, often similar style dynamic programming algorithms exist for graphs whose treewidth is bounded by a constant.
We outline the basics on treewidth and specifically on dynamic programming on tree decompositions below.
More information can, amongst other places, be found in work by Bodlaender~\cite{Bodlaender88,Bodlaender93,Bodlaender97,Bodlaender98,BodlaenderK08}.
\begin{definition}[tree decomposition and treewidth] \label{def:tw}
	A \emph{tree decomposition} of an undirected graph~$G=(V,E)$ is a tree~$T$ in which each node $i \in T$ has an associated set of vertices $X_i \subseteq V$ (called a \emph{bag}), with $\bigcup_{i \in T} X_i = V$, such that the following properties hold:
	\begin{itemize}
		\item for every edge $\{u,v\} \in E$, there exist a bag $X_i$ such that $\{u,v\} \subseteq X_i$;
		\item for every vertex $v$~in $G$, the bags containing $v$ form a connected subtree: i.e., if $v \in X_i$ and $v \in X_j$, then $v \in X_k$ for all nodes $k$ on the path from $i$ to $j$ in $T$.
	\end{itemize}
	The \emph{width} of a tree decomposition~$T$ is defined as $\max_{i \in T}\{ |X_i| \} -1$: the size of the largest bag minus one.
	The \emph{treewidth} of a graph~$G$ is the minimum width over all tree decomposition of~$G$.
\end{definition}
For a tree decomposition~$T$ with assigned root node $r \in T$, we define the terminal graph $G_i = (V_i,E_i,X_i)$ for each node $i \in T$: let $V_i$ be the union of $X_i$ with all bags $X_j$ where~$j$ is a descendant of $i$ in $T$, and let $E_i \subseteq E$ be the set of edges with at least one endpoint in $V_i \setminus X_i$ (and as a result of Definition~\ref{def:tw} with both endpoints in $V_i$).
Now, $G_i$ contains all edges between vertices in $V_i \setminus X_i$, and all edges between $V_i \setminus X_i$ and $X_i$, but no edges between two vertices in $X_i$.\footnote{Often $G_i$ is defined \emph{including} all edges between vertices in $X_i$.
We choose the alternative definition as it makes formulating the join algorithms in Section~\ref{sec:joins} easier: no bookkeeping of number of neighbours between vertices in $X_i$ needs to be done, as they only become neighbours higher up in the tree.}
Observe that, $G$ is the completion of $G_i$ formed through $G_i \oplus ( (V \setminus V_i) \cup X_i, E \setminus E_i, X_i)$, and $X_i$ can be seen as the \emph{separator} separating $V_i \setminus X_i$ from $V \setminus V_i$ in $G$ (where either side of the separator can be empty). 

We now describe dynamic programming on a tree decomposition~$T$.
Given a graph problem that we are trying to solve~$\mathcal{P}$, define a \emph{partial solution} of~$\mathcal{P}$ on $G_i$ to be the \emph{restriction} to the subgraph~$G_i$ of a solution of $\mathcal{P}$ on a completion of $G_i$ (any completion of $G_i$, not only $G$ itself).
We say that the partial solution~$S'$ on $G_i$ can be \emph{extended} to a full solution~$S$ on a completion of $G_i$, where $S \setminus S'$ is the \emph{extension} of $S'$.
As an example, consider the {\sc Minimum Dominating Set} problem: a solution for this problem is a vertex subset~$D$ in $G$ such that for all $v \in V$ there is a $d \in D$ with $v \in N[d]$.
A partial solution is a subset $D \subseteq V_i$ such that for all vertices in $v \in V_i \setminus X_i$  there is a $d \in D$ with $v \in N[d]$: for vertices in $X_i$ there does not need to be a dominating neighbour in $d \in D$ as $d$ can also be in an extension of $D$.
A dynamic programming algorithm on a tree decomposition computes, for each node $i \in T$ in a bottom-up fashion, a \emph{memoisation table}~$A_i$ containing all \emph{relevant} (described in the next paragraph) partial solutions on $G_i$ obtaining a solution to $\mathcal{P}$ in the root of $T$.

To restrict the number of partial (relevant) solutions stored, an equivalence relation is defined on them: two partial solutions $S_1'$ and $S_2'$ on $G_i$ are \emph{equivalent} with respect to~$\mathcal{P}$ if any extension of $S_1$ also is an extension of $S_2'$ and vice versa.
When given two equivalent partial solutions $S_1'$ and $S_2'$ for an optimisation problem (minimisation or maximisation), we say that $S_1'$ \emph{dominates} $S_2'$ if for any extension $S_E$ of $S_1'$ and $S_2'$, the solution value of $S_1' \cup S_E$ is equal or better than the solution value of $S_2' \cup S_E$.
Clearly, a dynamic programming algorithm on a tree decomposition needs to store only one partial solution per equivalence class, and if we consider an optimisation problem it can store a partial solution that dominates all other partial solutions within its equivalence class.

Mostly, it is convenient to formulate a dynamic programming algorithm on a special kind of tree decomposition called a \emph{nice tree decomposition}~\cite{Kloks94}.\footnote{Different version of the original definition~\cite{Kloks94} exists in literature (e.g, \cite{CyganNPPRW11,vanRooij11}): the restrictions on the vertices in a bag of a leaf node and the root node often vary, and sometimes an additional type of node called an \emph{edge introduce node} is used.}
\begin{definition}[nice tree decomposition] \label{def:nicetd}
	A \emph{nice tree decomposition} is a tree decomposition~$T$ with assigned root node $r \in T$ with $X_r = \emptyset$, in which each node is of one of the following types:
	\begin{itemize}
		\item \emph{Leaf node}: a leaf $i$ of $T$ with $X_i = \emptyset$.
		\item \emph{Introduce node}: an internal node $i$ of $T$ with one child node $j$ and $X_i = X_j \cup \{v\}$ for some $v \in V \setminus V_j$.
		\item \emph{Forget node}: an internal node $i$ of $T$ with one child node $j$ and $X_i = X_j \setminus \{v\}$ for some $v \in X_j$.
		\item \emph{Join node}: an internal node $i$ of $T$ with two child nodes $l$ and $r$ with $X_i = X_l = X_r$.
	\end{itemize}
\end{definition}
Given a tree decomposition consisting of $\BigO(n)$ nodes, a nice tree decomposition of $\BigO(n)$ nodes of the same width can be found in $\BigO(n)$ time~\cite{Kloks94}.
Consequently, a dynamic programming algorithm on a nice tree decomposition can be used on general tree decompositions by applying this transformation.
After computing $A_i$ for all nodes $i \in T$, the solution to~$\mathcal{P}$ can be found as the unique value in $A_r$, where~$r$ is the root of $T$: here $G_i = G$ and there is only a single equivalence class as $X_i =\emptyset$.

This paper focuses on computing $A_i$ for a \emph{join node}~$i$ of a nice tree decomposition.
This node is the most interesting as often it dominates the running time of the entire dynamic programming algorithm.
For an example, consider~\cite{AlberBFKN02} where an $\BigOs(4^t)$ algorithm for {\sc Minimum Dominating Set} for graphs with a tree decomposition of width~$t$ is given, while all computations except the computation for the join nodes can be performed in $\BigOs(3^t)$ time.

\subsection{Dynamic Programming for $[\sigma,\rho]$-Domination Problems} \label{sec:sigmarho}
The $[\sigma,\rho]$-domination problems are a class of vertex-subset problems introduced by Telle~\cite{Telle94,Telle94a,TelleP97} that generalise many well-known graph problems such as {\sc Maximum Independent Set}, {\sc Minimum Dominating Set}, and {\sc Induced Bounded Degree Subgraph}.
See Table~\ref{tab:sigmarho} for an overview.
\begin{definition}[{$[\sigma,\rho]$}-dominating set] \label{def:sigmarho}
	Let $\sigma, \rho \subseteq \Nat$, a $[\sigma,\rho]$-dominating set in a graph $G=(V,E)$ is a subset $D \subseteq V$ such that:
	\begin{itemize}
		\item for every $v \in D$: $|N(v) \cap D| \in \sigma$;
		\item for every $v \in V \setminus D$: $|N(v) \cap D| \in \rho$.
	\end{itemize}
\end{definition}
We consider only $\sigma, \rho \subseteq \Nat$ that both are either finite or cofinite.

\begin{table}[tbp]
	\begin{center}
		\begin{tabular}{|l|l||l|}
			\hline
			$\sigma$ & $\rho$ & Standard description \\
			\hline
			$\{0\}$ & $\{0,1,\ldots\}$ & Independent Set/Stable Set\\
			$\{0,1,\ldots\}$ & $\{1,2,\ldots\}$ & Dominating Set\\
			$\{0\}$ & $\{0,1\}$ & Strong Stable Set/2-Packing/Distance-2 Independent Set\\
			$\{0\}$ & $\{1\}$ & Perfect Code/Efficient Dominating Set\\
			$\{0\}$ & $\{1,2,\ldots\}$ & Independent Dominating Set\\
			$\{0,1,\ldots\}$ & $\{1\}$ & Perfect Dominating Set\\
			$\{1,2,\ldots\}$ & $\{1,2,\ldots\}$ &Total Dominating Set\\
			$\{1\}$ & $\{1\}$ & Total Perfect Dominating Set\\
			$\{0,1,\ldots\}$ & $\{0,1\}$ & Nearly Perfect Set\\
			$\{0,1\}$ & $\{0,1\}$ & Total Nearly Perfect Set\\
			$\{0,1\}$ & $\{1\}$ & Weakly Perfect Dominating Set\\
			$\{0,1,\ldots,p\}$ & $\{0,1,\ldots\}$ & Induced Bounded Degree Subgraph\\
			$\{0,1,\ldots\}$ & $\{p,p+1,\ldots\}$ & $p$-Dominating Set\\
			$\{p\}$ & $\{0,1,\ldots\}$ & Induced $p$-Regular Subgraph\\
			\hline
		\end{tabular}
	\end{center}
	\caption{Examples of $[\sigma,\rho]$-domination problems (taken from \cite{Telle94,Telle94a,TelleP97}).}
	\label{tab:sigmarho}
\end{table}

For given $\sigma, \rho \subseteq \Nat$ and the corresponding definition of a $[\sigma,\rho]$-dominating set, one can define several different problem variants.
\begin{itemize}
	\item \emph{Existence} problem: given a graph $G$, does $G$ have a $[\sigma, \rho]$-dominating set?
	\item \emph{Optimisation} problem (minimisation or maximisation): given a graph $G$, what is the smallest $[\sigma, \rho]$-dominating set in $G$, or what is the largest $[\sigma, \rho]$-dominating set in $G$?
	\item \emph{Counting} problem: given a graph $G$, how many $[\sigma, \rho]$-dominating sets exist in $G$?
	\item \emph{Counting optimisation} problem (minimisation or maximisation): given a graph $G$, how many $[\sigma, \rho]$-dominating sets in $G$ exist of minimum/maximum size?
\end{itemize}
Many well-known NP-hard vertex subset problems in graphs correspond to the existence or optimisation variant of a $[\sigma,\rho]$-domination problem, as can be seen from Table~\ref{tab:sigmarho}.

When solving a $[\sigma,\rho]$-domination problem by dynamic programming on a tree decomposition, the equivalence classes for partial solutions stored in the memoisation table~$A_i$ (as defined in Section~\ref{sec:treewidth}) can be uniquely identified by the following:
\begin{itemize}
	\item the vertices in $X_i$ that are in the partial solution~$D$;
	\item for every vertex in $X_i$ (both in $D$ and not in $D$), the number of neighbours in~$D$.
\end{itemize}
This corresponds exactly to the bookkeeping required to verify whether a partial solution locally satisfies the requirements imposed by the specific $[\sigma,\rho]$-domination problem.
As such, we can identify every equivalence class using an assignment of \emph{labels} (sometimes also called \emph{states}) that capture the above properties to the vertices in $X_i$: such an assignment is called a \emph{state colouring}.
Given $\sigma, \rho \subseteq \Nat$, define the set of labels $C = C_\sigma \cup C_\rho$ as follows (the meaning of a label is explained below): 
\begin{align*}
	C_\sigma &= \left\{ 
	\begin{array}{lll}
		\{ |0|_\sigma, |1|_\sigma, |2|_\sigma, \ldots, |\ell - 1|_\sigma, |\ell|_\sigma \} & \textrm{ if $\sigma$ finite} & \textrm{ where $\ell = \max\{\sigma\}$}\\
		\{|\!\geq\!0|_\sigma\} & \textrm{ if $\sigma = \Nat$}& \\
		\{ |0|_\sigma, |1|_\sigma, |2|_\sigma, \ldots, |\ell-1|_\sigma, |\!\geq\!\ell|_\sigma \} & \textrm{ if $\sigma\neq\Nat$ cofinite} & \textrm{ where $\ell = \max\{\Nat\setminus\sigma\}+1$}
	\end{array}\right. \\
	C_\rho &= \left\{ 
	\begin{array}{lll}
		\{ |0|_\rho, |1|_\rho, |2|_\rho, \ldots, |\ell - 1|_\rho, |\ell|_\rho \} & \textrm{ if $\rho$ finite} & \textrm{ where $\ell = \max\{\rho\}$}\\
		\{|\!\geq\!0|_\rho\} & \textrm{ if $\rho = \Nat$} & \\
		\{ |0|_\rho, |1|_\rho, |2|_\rho, \ldots, |\ell-1|_\rho, |\!\geq\!\ell|_\rho \} & \textrm{ if $\rho\neq\Nat$ cofinite} & \textrm{ where $\ell = \max\{\Nat\setminus\rho\}+1 $}	
	\end{array}\right.
\end{align*}
We will use the $||_\rho$ and $||_\sigma$ notation to denote labels from $C_\rho$, respectively $C_\sigma$.
In general, when we write $|l|_\rho$ or $|l|_\sigma$, with a variable~$l$, we mean the labels that are not equal to $|\!\geq\!\ell|_\rho$ or $|\!\geq\!\ell|_\sigma$.
This allows us to refer to other labels by expressions such as $|l-1|_\rho$.
The symbol~$\ell$ is reserved to indicate the last labels $|\ell|_\sigma$, $|\!\geq\!\ell|_\sigma$, $|\ell|_\rho$, $|\!\geq\!\ell|_\rho$ and is used similarly to form labels such as $|\ell-1|_\rho$.

Let $C^{X_i}$ be the set of assignments of labels from~$C$ to the vertices in~$X_i$.
A label from $C_\sigma$ for a vertex $v \in X_i$ indicates that~$v$ is in the solution set~$D$ in the partial solution, a label from $C_\rho$ indicates that~$v$ is not.
Furthermore, the numbers in the labels indicate the number of neighbours that~$v$ has in $D$; the $\geq$ symbol in the label $|\!\geq\!1|_\rho$ indicates that~$v$ has this number of neighbours (one in this case) in $D$ or more.
For an example, consider {\sc Minimum Dominating Set} for which $\sigma = \Nat$ and $\rho = \Nat \setminus \{0\}$; for this problem $C = \{|\!\geq\!0|_\sigma, |0|_\rho, |\!\geq\!1|_\rho \}$.

Now, the elements from $C^{X_i}$ bijectively correspond to the above defined equivalence classes of partial solutions on $G_i$.
Consequently, we can index the memoisation table $A_i$ by $C^{X_i}$.
To keep the dynamic programming recurrences in this paper simple, we will not store partial solutions in $A_i$, only the required partial solution values or counts.
That is, from here on, let the table $A_i$ be a function $A_i : C^{X_i} \rightarrow \{0,1,..,M\} \cup \{\infty\}$ that assigns a number to each equivalence class of partial solutions.
In an existence variant of a problem, we let $A_i(\vec{c})$, for $\vec{c} \in C^{X_i}$, be 0 or 1 indicating whether a partial solution of this equivalence class exists.
In an optimisation variant, $A_i(\vec{c})$ indicates the size of a dominating partial solution in this equivalence class, or $\infty$ if no such partial solution exists.
For convenience reasons\footnote{In this way, we do not have to correct for double counting in join nodes in the rest of this paper.}, we let $A_i(\vec{c})$, for $\vec{c} \in C^{X_i}$, contain the size of the partial solution~$D'$ restricted to $V' \setminus X'$, i.e., the size of a corresponding partial solution equals $A_i(\vec{c})$ plus the number of $\sigma$ labels in $\vec{c}$.
In a counting variant, $A_i(\vec{c})$ indicates the number of partial solutions in the equivalence class of $\vec{c}$.
Notice that for an existence variant, we can bound $M$ by $1$; for an optimisation variant, we can bound $M$ by $n$; and for a counting variant, we can bound $M$ by $2^n$.

Below, we give explicit recurrences for $A_i$ for solving a minimisation variant of a $[\sigma,\rho]$-dominating problem by dynamic programming on a nice tree decomposition~$T$.
Modifying the recurrences to the existence or counting variant of the problem is an easy exercise.
Extensions to the recurrences in which partial solutions are stored (for existence and optimisation variants) are easy to make, but tedious to write down formally.
This is also to true for the extension to the optimisation counting variant where one needs to keep track of both the size and the number of such partial solutions.

\paragraph{Leaf node.}
Let $i$ be a leaf node of~$T$.
Since $X_i = \emptyset$, the only partial solution is $\emptyset$ with size zero: this size is stored for the empty vector $[]$.
\[
	A_i( [] ) = 0
\]

\paragraph{Introduce node.}
Let $i$ be an introduce node of~$T$ with child node~$j$.
Let $X_i = X_j \cup \{v\}$ for some $v \in V \setminus V_j$.
For $\vec{c} \in C^{X_j}$ and $c_v \in C$ the label for vertex~$v$ denote by $[\vec{c}, c_v]$ the vector $\vec{c}$ with the element $c_v$ appended to it such that $[\vec{c}, c_v] \in C^{X_i}$.
Now:
\[ 
	A_i( [\vec{c}, c_v] ) = \left\{ \begin{array}{ll} 
		A_j([\vec{c}]) & \textrm{ if $c_v \in \{|0|_\sigma,|\!\geq\!0|_\sigma\}$ or $c_v \in  \{|0|_\rho,|\!\geq\!0|_\rho\}$} \\ 
		\infty & \textrm{ otherwise}
	\end{array}\right. 
\]
Here, $G_i$ equals $G_j$ with one added isolated vertex~$v$.
Hence, $v$ can be in the partial solution or not, and both choices do not influence the partial solution size on $V_i \setminus X_i$ (which equals $V_j \setminus X_j$).
Note that only one of the labels from $\{|0|_\sigma,|\!\geq\!0|_\sigma\}$ and one from $\{|0|_\rho,|\!\geq\!0|_\rho\}$ is used, and which depends on the specific $[\sigma,\rho]$-domination problem that we are solving.

\paragraph{Forget node.}
Let $i$ be a forget node of~$T$ with child node~$j$.
Let $X_i = X_j \setminus \{v\}$ for some $v \in X_j$.

By definition of $G_i$, $G_i$ contains edges between $v$ and vertices in $X_i$ while $G_j$ does not.
To account for these edges, we start by updating the given table $A_j$ such that it accounts for the additional edges: that is, for an edge $\{u,v\}$ with $u \in X_i$, we adjust the counts of the number of neighbours expressed in the state colourings for $u$ and~$v$.
We do so before we construct table $A_i$.

Let $[\vec{c},c_u,c_v] \in C^{X_j}$ be such that $c_u$ and $c_v$ are labels for $u$ and $v$ respectively.
For every edge $\{u,v\}$ with $u \in X_i$, we update $A_j$ twice, once for $u$ and once for $v$.
We update $A_j$ for $u$ as follows:
\[ 
	A_j([\vec{c},c_u,c_v]) \!:=\! \left\{\begin{array}{ll}
		A_j([\vec{c},c_u,c_v]) & \textrm{ if $c_v \in C_\rho$} \\
		\infty & \textrm{ if $c_v \in C_\sigma$, $c_u \in \{|0|_\rho, |0|_\sigma\}$} \\
		A_j([\vec{c},|l - 1|_\rho,c_v]) & \textrm{ if $c_v \in C_\sigma$, $c_u = |l|_\rho$, $l > 0$} \\
		A_j([\vec{c},|l - 1|_\sigma,c_v]) & \textrm{ if $c_v \in C_\sigma$, $c_u = |l|_\sigma$, $l > 0$} \\
		\min\{ A_j([\vec{c},|\ell - 1|_\rho,c_v]), A_j([\vec{c},|\!\geq\!\ell|_\rho,c_v]) \} & \textrm{ if $c_v \in C_\sigma$, $c_u = |\!\geq\!\ell|_\rho$} \\
		\min\{ A_j([\vec{c},|\ell - 1|_\sigma,c_v]), A_j([\vec{c},|\!\geq\!\ell|_\sigma,c_v]) \} & \textrm{ if $c_v \in C_\sigma$, $c_u = |\!\geq\!\ell|_\sigma$} \\
	\end{array}\right.  
\]
No update needs to be done if $v$ is not in the partial solution~$D$ (first line).
If $c_u$ indicates that $u$ has no neighbours in~$D$ while $v \in D$, then no such partial solution exists (second line).
Otherwise, the counts in the label of $u$ need to account for the extra neighbour.
In the last four lines, we perform the required label update for all other labels giving special attention to the case where a $|\!\geq\!\ell|_\sigma$ or $|\!\geq\!\ell|_\rho$ label is used.
Here, the minimum needs to be taken over two equivalence classes that through the added edge become equivalent: we take the minimum because we are solving the minimisation variant.
Updating $A_j$ for $v$ goes identically with the roles of $u$ and $v$ switched, and as stated above, we perform this update for all edges incident to $v$ in $G_j$.

Next, we compute $A_i$ and start keeping track of equivalence classes based on $X_i$ instead of based on $X_j$.
To do so, we select a dominating solution from the partial solution equivalence classes for which $v$ has a number of neighbours in~$D$ that corresponds to the specific $[\sigma,\rho]$-domination problem:
\[
	A_i(\vec{c}) = \min_{\textrm{$c_v$ a valid label} } A_j([\vec{c},c_v])
\]
Here, a valid label $c_v$ is any label that corresponds to having the correct number of neighbours in~$D$ as defined by the specific $[\sigma,\rho]$-domination problem: $c_v$ is a label $|l|_\sigma$ or $|l|_\rho$ for which $l \in \sigma$ or $l \in \rho$, respectively, or $c_v$ is a label $|\!\geq\!\ell|_\rho$ or $|\!\geq\!\ell|_\sigma$ in case of cofinite $\sigma$ or $\rho$.

\paragraph{Join node.}
Let $A_i$ be the memoisation table for a join node $i$ of $T$ with child nodes $l$ and $r$.
Here we give a simple algorithm for the join node; in Section~\ref{sec:joins}, we survey more involved approaches.

A trivial algorithm to compute $A_i$ would loop over all pairs of state colourings $\vec{c}_l$, $\vec{c}_r$ of $X_i$ that agree on which vertices are in the solution set $D$, and then consider two corresponding partial solutions $D_l$ on $G_l$ and $D_r$ on $G_r$ and infer the state colouring $\vec{c}_i$ of the partial solution $D_l \cup D_r$ on $G_i$.
It then stores in $A_i$ the minimum size of a solution for each equivalence class for~$G_i$.

Note that the agreement on which vertices are in $D$ is necessary for $D_l \cup D_r$ to be a valid partial solution: otherwise vertices that are no longer in $X_i$ can obtain additional neighbours in $D$.
At the same time the agreement is not a too tight restriction as any partial solution~$D$ on $G_i$ can trivially be decomposed into partial solutions on $G_l$ and $G_r$ that agree on which vertices on $X_i$ are in $D$.

\paragraph{Root node.}
In the root node~$r$ of $T$ (which is a forget node), $X_r = \emptyset$, $G_r = G$ and consequently $A_r([])$ is the minimum size of a $[\sigma,\rho]$-dominating set on $G$.
The result we set out to compute!

\begin{lemma} \label{lem:bottleneck}
	Let $\mathcal{P}$ be the minimisation variant of a $[\sigma,\rho]$-domination problem with label set $C$ using $s=|C|$ labels.
	Let $\mathcal{A}$ be an algorithm for the computations in a join node for problem~$\mathcal{P}$ that, given a join node~$i$ with $|X_i|=k$ and the memoisation tables $A_l$ and $A_r$ for its child nodes, computes the memoisation table $A_i$ in $\BigO(f(n,k))$ arithmetic operations.
	Then, given a graph $G$ with a tree decomposition~$T$ of width $t$, $\mathcal{P}$ can be solved on $G$ in $\BigO( ( s^{t+1}t + f(n,t+1))n )$ arithmetic operations.
\end{lemma}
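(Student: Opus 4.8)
The plan is to convert the given tree decomposition $T$ into a nice tree decomposition and then bound the total work over all node types, using $\mathcal{A}$ for the join nodes and the explicit recurrences from Section~\ref{sec:sigmarho} for the others. First I would invoke the standard fact (cited after Definition~\ref{def:nicetd}) that from a tree decomposition of width $t$ and $\BigO(n)$ nodes one obtains, in $\BigO(n)$ time, a nice tree decomposition $T'$ of the same width $t$ with $\BigO(n)$ nodes. Since any tree decomposition of width $t$ can be assumed to have $\BigO(n)$ nodes (by contracting edges between bags where one is contained in the other), this step is free up to the claimed running time. The dynamic programming algorithm then processes the $\BigO(n)$ nodes of $T'$ bottom-up, so it suffices to bound the cost at each node and multiply by $\BigO(n)$.

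The key estimates, node by node: each bag has size at most $t+1$, so a memoisation table $A_i$ has at most $s^{t+1}$ entries. For a leaf node the work is $\BigO(1)$. For an introduce node the recurrence computes each of the $\BigO(s^{t+1})$ entries in $\BigO(1)$ time, so $\BigO(s^{t+1})$ total. For a forget node, the recurrence first updates $A_j$ once for each of the at most $t$ edges between the forgotten vertex $v$ and the other bag vertices, each update touching all $\BigO(s^{t+1})$ entries in $\BigO(1)$ time, and then takes, for each of the $\BigO(s^{t})$ vectors $\vec{c} \in C^{X_i}$, a minimum over the $\BigO(s)$ labels $c_v$; this is $\BigO(s^{t+1}t)$ in total. (Here I am using that $|X_j| \le t+1$, so the edge count is at most $t$.) For a join node, by hypothesis $\mathcal{A}$ runs in $\BigO(f(n,|X_i|)) = \BigO(f(n,t+1))$ arithmetic operations. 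Hence every node costs $\BigO(s^{t+1}t + f(n,t+1))$ arithmetic operations, and summing over the $\BigO(n)$ nodes gives $\BigO((s^{t+1}t + f(n,t+1))n)$. The correctness of the output — that $A_r([])$ equals the optimum — follows from the discussion of the recurrences in Section~\ref{sec:sigmarho}, which I would cite rather than reprove.

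I expect the only real subtlety to be the bookkeeping for the forget node: one must be careful that the per-edge update is applied for every edge incident to the forgotten vertex inside $G_j$, that there are at most $t$ such edges (since the other endpoint lies in $X_i$, which has at most $t$ vertices), and that $f$ is monotone enough in its second argument that $f(n,|X_i|) \le f(n,t+1)$ for all join nodes — which holds because $|X_i| \le t+1$ and the natural reading of "computes $A_i$ in $\BigO(f(n,k))$ operations" is that $f$ is the running time as a function of $k=|X_i|$, so we simply substitute $k \le t+1$. Everything else is a routine summation; there is no genuine obstacle, only the need to state the per-node bounds cleanly and note that the $\BigO(n)$-node transformation does not increase the width.
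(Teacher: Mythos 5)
Your proposal is correct and follows essentially the same route as the paper's own proof: transform $T$ into a nice tree decomposition with $\BigO(n)$ nodes, bound the per-node cost by $\BigO(s^{t+1}t + f(n,t+1))$ (with the forget node dominating the non-join work via at most $t$ edge updates over tables of size $\BigO(s^{t+1})$), and multiply by the number of nodes. The extra remarks on the monotone substitution $k \leq t+1$ into $f$ and on the forget-node bookkeeping are exactly the points the paper's proof glosses over, so nothing is missing.
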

\begin{proof}
	First transform~$T$ into a nice tree decomposition~$T'$ with $\BigO(n)$ nodes.
	If we show that the table $A_j$ associated to any node~$j$ of~$T'$ can be computed in $\BigO( s^kk + f(n,k) )$ arithmetic operations, then the result follows as $k \leq t + 1$.
	Consider the recurrences in the dynamic programming algorithm exposed above.
	The result trivially holds for leaf and root nodes, and also for the join nodes by definition of $\mathcal{A}$.
	It is easy to see that in the recurrences for the introduce and forget nodes, every value is computed using a constant amount of work.
	Since the tables are of size $s^k$, and for a forget node we need to do at most $k$ update steps as we can add at most $k$ edges, the result follows.
\qed
\end{proof}

It is not difficult to modify the above algorithm to obtain:
\begin{proposition} \label{prop:bottleneck2}
	Lemma~\ref{lem:bottleneck} holds irrespective of $\mathcal{P}$ being a existence, maximisation, minimisation, counting, counting minimisation or counting maximisation variant of a $[\sigma,\rho]$-domination problem.
\end{proposition}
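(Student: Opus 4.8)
The plan is to revisit the proof of Lemma~\ref{lem:bottleneck} and check that each of its ingredients survives the change of problem variant, since the only parts of the dynamic programming that depend on whether $\mathcal{P}$ is an existence, counting, or optimisation (counting) problem are the explicit recurrences for the leaf, introduce, forget, join, and root nodes. First I would note that Lemma~\ref{lem:bottleneck} is really a statement about the \emph{shape} of the computation: a nice tree decomposition with $\BigO(n)$ nodes, tables of size $s^k$ indexed by $C^{X_i}$, and a per-node cost of $\BigO(s^k k + f(n,k))$ arithmetic operations. None of that structure changes between variants; what changes is the semantics of the entries $A_i(\vec c)$ and the exact arithmetic performed to combine them. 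So the proof amounts to exhibiting, for each variant, recurrences of the same form with the same asymptotic cost.

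Concretely, I would proceed node type by node type. For the \textbf{leaf node} the recurrence $A_i([]) = 0$ becomes $A_i([]) = 1$ in the existence and counting variants (one partial solution, the empty set), and similarly a size/count pair in the counting-optimisation variant; constant work either way. For the \textbf{introduce node}, the case distinction on $c_v$ is unchanged; only the value copied differs: in the existence and counting variants one copies $A_j([\vec c\,])$ unchanged (the two choices for $v$ land in different equivalence classes, so there is no summation), and the ``otherwise'' branch returns $0$ instead of $\infty$; in the counting-optimisation variant one copies the pair. For the \textbf{forget node}, the edge-update step is purely a relabelling that merges equivalence classes only through the $|\!\geq\!\ell|$ labels; replacing $\min$ by ``$+$'' for the counting variant, by $\max$ for maximisation, and by the appropriate pair-combining operation (take the better size, add counts of equal size) for counting-optimisation handles the merge, and the ``$\infty$'' sentinel becomes ``$0$'' in the counting variant. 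The subsequent projection $A_i(\vec c) = \min_{c_v\text{ valid}} A_j([\vec c, c_v])$ likewise becomes a sum, a $\max$, or a pair-combine over the valid labels. In every case the number of terms combined is $\BigO(1)$ (bounded by $|C|$) per table entry, so the $\BigO(s^k k)$ bound for the $\le k$ edge updates plus $\BigO(s^k)$ for the projection is untouched. For the \textbf{join node} the hypothesis is supplied verbatim by the definition of $\mathcal{A}$ with cost $f(n,k)$; here I would only remark that the convention of storing the size restricted to $V_i\setminus X_i$ (so that the $\sigma$-labels count their own contribution once) is exactly what makes the join combine without double counting in all variants. The \textbf{root node} argument is identical: $X_r=\emptyset$, $G_r=G$, one equivalence class, and $A_r([])$ reads off the answer to the chosen variant.

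The one point that needs a word of care — and the closest thing to an obstacle — is the bound on $M$, i.e. the size of the numbers stored: for counting variants $A_i(\vec c)$ can be as large as $2^n$, so a single arithmetic operation is on $\BigO(n)$-bit numbers rather than $\BigO(\log n)$-bit numbers. But Lemma~\ref{lem:bottleneck} and Proposition~\ref{prop:bottleneck2} are stated in terms of \emph{arithmetic operations}, not bit operations, so this does not affect the claimed bound; I would simply flag that the preliminaries already anticipate this ($M\le 2^n$ for counting), and that the uniform treatment of $A_i$ as a function into $\{0,\dots,M\}\cup\{\infty\}$ was chosen precisely so that the recurrences and their analysis are variant-agnostic. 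With that observation, the proof is a direct appeal to the proof of Lemma~\ref{lem:bottleneck}: the recurrences above, with $\min$ replaced by the semantically appropriate combining operator and $\infty$ by $0$ where counting is involved, are computable in $\BigO(s^k k + f(n,k))$ arithmetic operations per node, $k\le t+1$, over $\BigO(n)$ nodes, which is the claimed $\BigO((s^{t+1}t + f(n,t+1))n)$.

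\qed
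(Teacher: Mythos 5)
Your proposal is correct and follows exactly the route the paper intends: the paper gives no explicit proof of Proposition~\ref{prop:bottleneck2}, merely asserting that ``it is not difficult to modify the above algorithm,'' and your node-by-node substitution of the combining operators ($\min$ by $+$, $\max$, or pair-combining; $\infty$ by $0$) together with the observation that the table shape, the $\BigO(1)$ terms per entry, and hence the arithmetic-operation count are unchanged, is precisely that modification spelled out. Your remark distinguishing arithmetic operations from bit operations for the counting variants is also consistent with the paper's own bound of $M\leq 2^n$ in Section~\ref{sec:sigmarho}.
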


\section{Overview of Fast Transforms} \label{sec:transforms}
To obtain fast algorithms for the computations in join nodes of a nice tree decomposition, we use several well-known algebraic transforms, specifically the M\"obius transform and the Fourier transform.
We opted for a reasonably extensive coverage of this standard material because of completeness reasons and because the details matter for some of the arguments in Section~\ref{sec:joins} and~\ref{sec:new}.

Recall that, in the introduction on dynamic programming for $[\sigma,\rho]$-domination problems, we stored integers in the domain $\{0,1,\ldots,M\}$ for some large integer $M$.
We present the algebraic transforms using computations in $\Field_p$, the field of integers modulo a prime number $p$.
Since we know that, for a join node $i$ with child nodes $l$, $r$, all values in the memoisation tables $A_i$, $A_l$ and $A_r$ are in $\{0,1,\ldots,M\}$, we can do the computations in $\Field_p$ as long as $p > M$.

In the literature, the discrete Fourier transform is often defined on sequences in~$\mathbb{C}$.
We choose $\Field_p$ to avoid any analysis of rounding errors, especially when we combine it with the use of zeta and M\"obius transforms.
Using $\Field_p$ does require that $p$ is chosen appropriately: $\Field_p$ must contain certain roots of unit required for the Fourier transforms.
In the statements of definitions, propositions and lemmas in this section, we will sometimes say that $p$ is \emph{chosen appropriately} to state that $\Field_p$ contains the roots of unity required in the definition or in the following proof.
A short discussion on how to choose a proper prime number~$p$ such that this condition is satisfied can be found in Section~\ref{sec:choosep}.

\subsection{The Discrete Fourier Transforms Using Modular Arithmetic} \label{sec:fourier}
\begin{definition}[discrete Fourier transform] \label{def:dft}
	Let $\vec{a}=(a_i)_{i=0}^{r-1}$ be a sequence of numbers in $\Field_p$, and let $\omega_r$ be an $r$-th root of unity in $\Field_p$.
	The \emph{discrete Fourier transform} and \emph{inverse discrete Fourier transform} are transformations between sequences of length~$r$ in $\Field_p$ defined as follows:
	\begin{align*}
	\DFT(\vec{a})_i = \sum_{j=0}^{r-1} \omega_r^{ij} a_j && \InvDFT(\vec{a})_i = \frac{1}{r} \sum_{j=0}^{r-1} \omega_r^{-ij} a_j
	\end{align*}
\end{definition}
Recall that an $r$-th root of unity is an element $x \in \Field_p$ such that $x^r = 1$ while $x^l \not= 1$ for all $l < 1$.

These two transformations are inverses as their names suggest.
\begin{proposition} \label{prop:invdft}
$\InvDFT( \DFT( \vec{a} ))_i = a_i$
\end{proposition}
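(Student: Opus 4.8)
The plan is to prove the inversion formula $\InvDFT(\DFT(\vec{a}))_i = a_i$ by a direct computation: expand both transforms according to Definition~\ref{def:dft}, interchange the two finite sums, and then evaluate the inner geometric sum over the powers of $\omega_r$. First I would write
\[
	\InvDFT(\DFT(\vec{a}))_i = \frac{1}{r}\sum_{j=0}^{r-1}\omega_r^{-ij}\,\DFT(\vec{a})_j = \frac{1}{r}\sum_{j=0}^{r-1}\omega_r^{-ij}\sum_{k=0}^{r-1}\omega_r^{jk}a_k.
\]
Since both sums are finite and we are working in the field $\Field_p$, I can freely reorder them, pulling out $a_k$, to obtain
\[
	\InvDFT(\DFT(\vec{a}))_i = \frac{1}{r}\sum_{k=0}^{r-1}a_k\sum_{j=0}^{r-1}\omega_r^{j(k-i)}.
\]

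The crux is then to evaluate the inner sum $S_{k,i} = \sum_{j=0}^{r-1}\omega_r^{j(k-i)}$. When $k=i$ every term is $\omega_r^0 = 1$, so $S_{i,i} = r$ (interpreting $r$ as $r\cdot 1$ in $\Field_p$; note $r$ must be invertible mod $p$, which is part of $p$ being chosen appropriately, so this is nonzero). When $k\neq i$, write $d = k-i \not\equiv 0 \pmod r$ and treat $S_{k,i}$ as a finite geometric series with ratio $\omega_r^{d}$: since $\omega_r^{d}\neq 1$ (because $\omega_r$ is a primitive $r$-th root of unity and $0 < |d| < r$), the standard identity gives $S_{k,i} = \dfrac{\omega_r^{dr}-1}{\omega_r^{d}-1} = \dfrac{(\omega_r^{r})^{d}-1}{\omega_r^{d}-1} = \dfrac{1-1}{\omega_r^{d}-1} = 0$. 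Substituting back, only the $k=i$ term survives, and $\frac{1}{r}\cdot a_i \cdot r = a_i$, as claimed.

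The main obstacle — really the only subtle point — is making sure the geometric-series manipulation is legitimate over $\Field_p$: one needs $\omega_r^{d}-1$ to be a unit in $\Field_p$ whenever $d\not\equiv 0\pmod r$, which follows from $\omega_r$ being a primitive $r$-th root of unity (so its powers $\omega_r^0,\dots,\omega_r^{r-1}$ are distinct), and one needs $r$ itself to be invertible mod $p$, which is guaranteed by the assumption that $p$ is chosen appropriately (in particular $p > r$, or at least $p \nmid r$). I would state these two facts explicitly before the computation. Everything else is routine finite-sum bookkeeping, so I would not belabour it.
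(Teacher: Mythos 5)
Your proposal is correct and follows essentially the same route as the paper's proof: expand both transforms, interchange the finite sums, split off the $k=i$ term, and kill the $k\neq i$ terms via the geometric series $\sum_{j=0}^{r-1}(\omega_r^{k-i})^j = \frac{1-\omega_r^{(k-i)r}}{1-\omega_r^{k-i}} = 0$. Your explicit remarks on $r$ and $\omega_r^{d}-1$ being units in $\Field_p$ are a welcome touch of extra care, but do not change the argument.
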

\begin{proof}
	In the derivation below, we first fill in the definitions and rearrange the terms (\ref{eq:invdft1}).
	Then, we split the sum based on $k=i$ and $k \neq i$ (from \ref{eq:invdft1} to \ref{eq:invdft2}).
	\begin{align}
	\InvDFT( \DFT( \vec{a} ))_i 
	&= \frac{1}{r} \sum_{j=0}^{r-1} \omega_r^{-ij} \sum_{k=0}^{r-1} \omega_r^{jk} a_k 
	= \frac{1}{r} \sum_{k=0}^{r-1} a_k \sum_{j=0}^{r-1} (\omega_r^{k-i})^j \label{eq:invdft1} \\
	&= \frac{1}{r} a_i \sum_{j=0}^{r-1} (\omega_r^{i-i})^j \; + \; \frac{1}{r} \sum_{\substack{k=0 \\ k \neq i}}^{r-1} a_k \sum_{j=0}^{r-1} (\omega_r^{k-i})^j \label{eq:invdft2} = a_i \frac{1}{r} \sum_{j=0}^{r-1} 1 \; + \; \frac{1}{r} \sum_{\substack{k=0 \\ k \neq i}}^{r-1} a_k \cdot 0
	= a_i 
	\end{align}
	Finally, we use that the first part of the sum is trivial as $\omega_r^{i-i}=\omega_r^0=1$, while the second part cancels as $\sum_{j=0}^{r-1} (\omega_r^{k-i})^j = \frac{1-\omega_r^{(k-i)r}}{1-\omega_r^{k-i}}$ is a geometric series with $\omega_r^{(k-i)r}=(\omega_r^r)^{k-i}=1^{k-i}=1$.
	\qed
\end{proof}

There exist fast algorithms for the discrete Fourier transform and its inverse, called fast Fourier transforms (FFT's), e.g., see the Cooley-Tukey FFT algorithm~\cite{CooleyT65} and Rader's FFT algorithm~\cite{Rader68}.
These algorithms are not particularly difficult to understand, but beyond the scope of this paper.
\begin{proposition}[fast Fourier transform] \label{prop:fft}
	The discrete Fourier transform and its inverse for sequences of length~$r$ can be computed in $\BigO(r \log r)$ arithmetic operations.
\end{proposition}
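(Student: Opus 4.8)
The plan is to reduce the length-$r$ discrete Fourier transform to smaller DFTs and solve the resulting recurrence. I would distinguish two cases depending on the structure of $r$. The clean case is when $r$ is a power of two (or more generally highly composite), where the Cooley--Tukey divide-and-conquer strategy applies directly; for a general $r$ one falls back on Rader's reduction. Since the paper only needs the bound for the lengths it actually uses, and since it explicitly cites \cite{CooleyT65,Rader68} and says the details are ``beyond the scope'', the proof I would write is a sketch that names the two ingredients and shows the recurrence solves to $\BigO(r\log r)$.

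First, for the Cooley--Tukey step with $r=2m$: writing $\DFT(\vec{a})_i=\sum_{j=0}^{r-1}\omega_r^{ij}a_j$ and splitting the index $j$ into even indices $j=2j'$ and odd indices $j=2j'+1$, one gets
\begin{align*}
\DFT(\vec{a})_i &= \sum_{j'=0}^{m-1}\omega_r^{2ij'}a_{2j'} + \omega_r^{i}\sum_{j'=0}^{m-1}\omega_r^{2ij'}a_{2j'+1}
= E_i + \omega_r^{i} O_i,
\end{align*}
where $E$ and $O$ are the length-$m$ DFTs (using the $m$-th root of unity $\omega_r^2=\omega_m$) of the even- and odd-indexed subsequences. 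Because $\omega_r^{i+m}=-\omega_r^i$ and $E_{i+m}=E_i$, $O_{i+m}=O_i$, the $r$ output values are obtained from the $m$ values of $E$ and $O$ by $\DFT(\vec{a})_i=E_i+\omega_r^i O_i$ and $\DFT(\vec{a})_{i+m}=E_i-\omega_r^i O_i$ for $0\le i<m$, i.e.\ $\BigO(r)$ additional arithmetic operations after the two recursive calls. This gives the recurrence $T(r)=2T(r/2)+\BigO(r)$, which solves to $T(r)=\BigO(r\log r)$ by the standard master-theorem argument (unrolling $\log_2 r$ levels, each costing $\BigO(r)$). The same bound holds for $\InvDFT$ since it has the identical form with $\omega_r$ replaced by $\omega_r^{-1}$ and a final scaling by $1/r$, which needs only the existence of $r^{-1}$ in $\Field_p$ (guaranteed when $p$ is chosen appropriately, so that $p\nmid r$).

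For the main obstacle: the recursion above needs $r$ to factor nicely, and the genuinely hard case is a prime length $r$, where no such splitting exists. I would dispatch this by invoking Rader's algorithm: for prime $r$, pick a generator $g$ of the multiplicative group $(\Int/r\Int)^{*}$, reindex the nonzero input and output positions as $g^{a}$ and $g^{b}$, and observe that the off-diagonal part of the DFT becomes a cyclic convolution of length $r-1$ of the reindexed data with a fixed sequence of powers of $\omega_r$; this convolution is computed via three DFTs of length $r-1$ (which is composite, or can itself be padded to a power of two) plus $\BigO(r)$ bookkeeping, again yielding $\BigO(r\log r)$. Combining the two cases and handling composite $r$ by recursing on a nontrivial factor, every length is reduced to the prime case, and the overall cost stays $\BigO(r\log r)$. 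In the paper I would simply state that these two classical algorithms realize the bound and refer the reader to \cite{CooleyT65,Rader68} for the full analysis, noting only that all operations are field operations in $\Field_p$ and that the appropriate choice of $p$ (discussed in Section~\ref{sec:choosep}) supplies the needed roots of unity and the invertibility of $r$.
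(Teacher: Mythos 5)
The paper offers no proof of this proposition at all: it simply cites Cooley--Tukey~\cite{CooleyT65} and Rader~\cite{Rader68} and declares the algorithms beyond its scope, which is exactly the pair of ingredients you invoke. Your sketch of the radix-2 split, the recurrence $T(r)=2T(r/2)+\BigO(r)$, and Rader's reduction of a prime-length DFT to a length-$(r-1)$ cyclic convolution is the standard and correct way to substantiate the bound, and your remark that the ``appropriately chosen'' $p$ must supply the roots of unity and the invertibility of $r$ in $\Field_p$ is consistent with how the paper handles this in Section~\ref{sec:choosep}.
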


The definition of the discrete Fourier transform can be naturally extended from sequences to higher dimensional structures.
Let $\Int_r$ be the commutative ring of integers modulo~$r$ (here the modulus~$r$ can be non-prime), and let $\Int_r^k$ be the $\Int_r$-module of $k$-tuples with elements from $\Int_r$.
\begin{definition}[multidimensional discrete Fourier transform] \label{def:multidft}
	Let $Z = \Int_{r_1} \!\times \Int_{r_2} \!\times \cdots \times \Int_{r_k}$, and let $R=\prod_{i=1}^k r_i$.
	Also, let $\tens{A} \!=\! (a_{\vec{x}})_{\vec{x} \in Z}$ be a tensor of rank~$k$ with elements in $\Field_p$ indexed by the $k$-tuple $\vec{x} = [x_1,x_2,\ldots,x_k]$, where $p$ is chosen appropriately.
	The \emph{multidimensional discrete Fourier transform} and \emph{inverse multidimensional discrete Fourier transform} are defined as follows:
	\begin{align*}
	\DFT_k(\tens{A})_{\vec{x}} &= \sum_{y_1=0}^{r_1-1} \omega_{r_1}^{x_1y_1} \sum_{y_2=0}^{r_2-1} \omega_{r_2}^{x_2y_2} \cdots  \sum_{y_k=0}^{r_k-1} \omega_{r_k}^{x_ky_k} a_{\vec{y}} \\
	\InvDFT_k(\tens{A})_{\vec{x}} &= \frac{1}{R} \sum_{y_1=0}^{r_1-1} \omega_{r_1}^{-x_1y_1} \sum_{y_2=0}^{r_2-1} \omega_{r_2}^{-x_2y_2} \cdots  \sum_{y_k=0}^{r_k-1} \omega_{r_k}^{-x_ky_k} a_{\vec{y}}
	\end{align*}
	When $r = r_1 = r_2 = \ldots = r_k$, this simplifies to the following:
	\begin{align*}
	\DFT_k(\tens{A})_{\vec{x}} = \sum_{\vec{y} \in \Int_r^k} \omega_r^{\vec{x} \cdot \vec{y}} a_{\vec{y}} &&
	\InvDFT_k(\tens{A})_{\vec{x}} = \frac{1}{r^k} \sum_{\vec{y} \in \Int_r^k} \omega_r^{-\vec{x} \cdot \vec{y}} a_{\vec{y}}
	\end{align*}
	where the expressions in the exponents are the dot products on the tuples $\vec{x}$ and $\vec{y}$ in $\Int_r^k$.
\end{definition}
Note that the dot products are in exponents of which the base is an $r$-th root of unity, hence they are computed modulo~$r$: this agrees with the notation where $\vec{x}$ and $\vec{y}$ are taken from $\Int_r^k$.

\begin{proposition}[fast multidimensional discrete Fourier transform] \label{prop:multidimfft}
	Let $Z = \Int_{r_1} \!\times \Int_{r_2} \!\times \cdots \times \Int_{r_k}$, and let $R=\prod_{i=1}^k r_i$.
	Also, let $\tens{A}$ be a tensor of rank $k$ with elements in $\Field_p$, $\tens{A} \!=\! (a_{\vec{x}})_{\vec{x} \in Z}$, where $p$ is chosen appropriately.
	The multidimensional discrete Fourier transform and inverse multidimensional discrete Fourier transform of $\tens{A}$ can be computed in $\BigO(R \log(R))$ time.
\end{proposition}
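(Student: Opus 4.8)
The plan is to reduce the multidimensional transform to a sequence of one-dimensional transforms along each axis, and then invoke the one-dimensional fast Fourier transform from Proposition~\ref{prop:fft}. The key observation is that the definition of $\DFT_k$ in Definition~\ref{def:multidft} is written precisely so that the summations are \emph{nested}: the innermost sum over $y_k$ with weights $\omega_{r_k}^{x_k y_k}$ is, for each fixed value of the remaining indices $x_1,\ldots,x_{k-1}$ and $y_1,\ldots,y_{k-1}$, exactly a one-dimensional $\DFT$ of length $r_k$ applied to the corresponding ``fibre'' of the tensor. So the transform factors as: first transform along coordinate $k$, then along coordinate $k-1$, and so on down to coordinate $1$.

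The steps I would carry out are as follows. First I would make precise the notion of applying a one-dimensional $\DFT$ along axis $j$: fixing all coordinates except the $j$-th defines a length-$r_j$ subsequence of $\tens{A}$, and we replace it by its one-dimensional $\DFT$ (with root of unity $\omega_{r_j}$); doing this for all $R/r_j$ choices of the other coordinates constitutes one ``pass'' along axis $j$. Second, I would argue by a straightforward induction on $k$ (or simply by expanding the nested sums) that performing these passes successively for $j = k, k-1, \ldots, 1$ yields exactly $\DFT_k(\tens{A})$ — this is just rearranging the order of summation, which is legitimate since everything lives in the commutative ring $\Field_p$. Third, I would count operations: the pass along axis $j$ consists of $R/r_j$ one-dimensional transforms of length $r_j$, each costing $\BigO(r_j \log r_j)$ by Proposition~\ref{prop:fft}, for a total of $\BigO((R/r_j) \cdot r_j \log r_j) = \BigO(R \log r_j)$ operations. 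Summing over $j = 1,\ldots,k$ gives $\BigO(R \sum_{j=1}^k \log r_j) = \BigO(R \log(\prod_j r_j)) = \BigO(R \log R)$, as claimed. The inverse transform is handled identically, using the inverse one-dimensional $\DFT$ along each axis and noting that the global normalising factor $1/R = \prod_j (1/r_j)$ distributes as one factor $1/r_j$ per pass.

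I do not expect a serious obstacle here; the result is essentially bookkeeping once the factorisation into axis-wise passes is spelled out. The one point that needs a word of care is the choice of prime $p$: each one-dimensional transform along axis $j$ requires an $r_j$-th root of unity in $\Field_p$, so ``$p$ chosen appropriately'' must be read as: $\Field_p$ contains an $r_j$-th root of unity for every $j$ (equivalently, $r_j \mid p-1$ for each $j$, which is arranged by the discussion referenced for Section~\ref{sec:choosep}). With that understood, the induction and the operation count go through without difficulty, and no rounding analysis is needed precisely because we work in $\Field_p$ rather than in $\mathbb{C}$.
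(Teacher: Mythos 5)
Your proposal is correct and follows essentially the same route as the paper's proof: decompose $\DFT_k$ into $k$ axis-wise passes of one-dimensional FFTs (the paper iterates axes in increasing order, you in decreasing order, which is immaterial since the passes commute), and sum $\BigO(R\log r_j)$ over the axes to get $\BigO(R\log R)$. The inverse transform and the remark on the required roots of unity are likewise handled as in the paper.
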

\begin{proof}
	Denote by $\vec{x}[x_i \leftarrow y]$ the tuple $\vec{x}$ with the $i$-th coordinate of $\vec{x}$ replaced by $y$.
	We compute $DFT_k(\tens{A})$ with an algorithm that uses $k$-steps.
	Let $\tens{A}_0 = \tens{A}$.
	At the $i$-th step of the algorithm, let:
	\[ (\tens{A}_i)_{\vec{x}} = \sum_{j=0}^{r_i-1} \omega_{r_i}^{x_i j} a_{\vec{x}[x_i \leftarrow j]}  \]
	Notice that if $k=1$, this formula equals the one dimensional discrete Fourier transform.
	It is not hard to see that $\tens{A}_k$ is the $k$-dimensional Fourier transform of $\tens{A}$: if one repeatedly substitutes the formula for $\tens{A}_{i-1}$ in the formula for $\tens{A}_i$ starting at $i = k$, one obtains the (non-simplified) formula for the $k$-dimensional Fourier transform in Definition~\ref{def:multidft}.
	
	For the inverse multidimensional Fourier transform, almost the same procedure can be followed.
	Let $\tens{A}_0 = \tens{A}$ and use the following formula at the $i$-th step, finally obtaining the result $\tens{A}_k$.
	Here, again if $k=1$, this formula equals the one dimensional inverse discrete Fourier transform.
	\[ (\tens{A}_i)_{\vec{x}} = \frac{1}{r_i} \sum_{j=0}^{r_i-1} \omega_{r_i}^{-x_i j} a_{\vec{x}[x_i \leftarrow j]} \]
	
	For the running time, notice that step $i$ preforms $\frac{R}{r_i}$ standard 1-dimensional (inverse) discrete Fourier transforms on a sequence of length~$r_i$.
	By Proposition~\ref{prop:fft} this can be done in $\BigO( \frac{R}{r_i} r_i \log(r_i)) = \BigO(R \log(r_i))$ time.
	This leads to a total running time of $\BigO( R \sum_{i=1}^k \log(r_i) ) = \BigO(R \log(R))$.
	\qed
\end{proof}
In the above proof, the sequences $\tens{A}_0, \tens{A}_1, \ldots, \tens{A}_k$ are created using 1-dimensional (inverse) discrete Fourier transforms.
Because the 1-dimensional discrete Fourier transform and 1-dimensional inverse discrete Fourier transform are inverses, it directly follows that the sequence $\tens{A}_0, \tens{A}_1, \ldots, \tens{A}_k$ used in the $k$-dimensional discrete Fourier transform algorithm equals the sequence $\tens{A}_k, \tens{A}_{k-1}, \ldots, \tens{A}_0$ used in the inverse $k$-dimensional discrete Fourier transform.
I.e., as the name suggests, the $k$-dimensional inverse discrete Fourier transform is the inverse of the $k$-dimensional discrete Fourier transform.

We mainly use the multidimensional fast discrete Fourier transform in combination with the well-known convolution theorem.
\begin{lemma}[multidimensional convolution theorem] \label{lem:multidimconv}
	Let $Z = \Int_{r_1} \!\times \Int_{r_2} \!\times \cdots \times \Int_{r_k}$, and let 
	$\tens{A} \!=\! (a_{\vec{x}})_{\vec{x} \in Z}$, $\tens{B} \!=\! (b_{\vec{x}})_{\vec{x} \in Z}$ be tensors of rank~$k$ with elements in $\Field_p$, where $p$ is chosen appropriately.
	Let the tensor multiplication $\tens{A} \cdot \tens{B}$ be defined point wise, and let for $a_{\vec{x}}$ and $b_{\vec{y}}$ the sum $\vec{x} + \vec{y}$ be defined as the sum in $Z$ (coordinate-wise with the $i$-th coordinate modulo~$r_i$). 
	Then:
	\[ \InvDFT_k \left( \DFT_k( \tens{A} ) \cdot \DFT_k( \tens{B} ) \right)_{\vec{x}} = \sum_{\vec{z_1} +\vec{z_2} \equiv \vec{x}} a_{\vec{z_1}} b_{\vec{z_2}} \]
\end{lemma}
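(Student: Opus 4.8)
The plan is to reduce the multidimensional convolution theorem to the one‑dimensional argument already carried out in Proposition~\ref{prop:invdft}, exactly as Proposition~\ref{prop:multidimfft} reduced the multidimensional transform to $k$ applications of the one‑dimensional one. First I would substitute the definitions of $\DFT_k$, the point‑wise product, and $\InvDFT_k$ into the left‑hand side, using dummy index tuples $\vec{y}$ (for the outer inverse transform) and $\vec{z_1}$, $\vec{z_2}$ (for the two inner forward transforms of $\tens{A}$ and $\tens{B}$). Since the inner sums over the coordinates of $\vec{z_1}$ and $\vec{z_2}$ are finite and independent of the outer sum over $\vec{y}$, I would freely interchange the order of summation, pulling the $\sum_{\vec{z_1},\vec{z_2}} a_{\vec{z_1}} b_{\vec{z_2}}$ to the outside. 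What remains inside is $\frac{1}{R}\sum_{\vec{y}} \omega^{\text{(exponent)}}$ where the exponent, collected coordinate‑by‑coordinate, is $\sum_{i=1}^k y_i\bigl((z_1)_i + (z_2)_i - x_i\bigr)$ evaluated modulo the appropriate $r_i$ in each coordinate — i.e. the product $\prod_{i=1}^k \omega_{r_i}^{y_i((z_1)_i+(z_2)_i-x_i)}$.

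The key step is then to evaluate $\frac{1}{R}\sum_{\vec{y}\in Z}\prod_{i=1}^k \omega_{r_i}^{y_i((z_1)_i+(z_2)_i-x_i)}$. By Fubini this factors as $\prod_{i=1}^k \Bigl(\frac{1}{r_i}\sum_{j=0}^{r_i-1}\omega_{r_i}^{j((z_1)_i+(z_2)_i-x_i)}\Bigr)$, and each factor is the one‑dimensional sum analysed in the proof of Proposition~\ref{prop:invdft}: it equals $1$ when $(z_1)_i+(z_2)_i\equiv x_i \pmod{r_i}$ and $0$ otherwise, the latter via the cancelling geometric series $\frac{1-\omega_{r_i}^{(\cdot)r_i}}{1-\omega_{r_i}^{(\cdot)}}=0$ using $\omega_{r_i}^{r_i}=1$. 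Hence the whole product is $1$ precisely when $\vec{z_1}+\vec{z_2}\equiv\vec{x}$ in $Z$ (all $k$ congruences hold simultaneously) and $0$ otherwise. Multiplying back in and summing, the left‑hand side collapses to $\sum_{\vec{z_1}+\vec{z_2}\equiv\vec{x}} a_{\vec{z_1}} b_{\vec{z_2}}$, which is the claim.

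I expect the only real care‑points to be bookkeeping rather than genuine obstacles: keeping the three index tuples distinct, making sure the exponents are read modulo the correct $r_i$ in each coordinate (so that the factorisation over coordinates is legitimate and the notation $\vec{z_1}+\vec{z_2}\equiv\vec{x}$ in $Z$ means coordinate‑wise congruence), and invoking that $p$ is \emph{chosen appropriately} so that all the roots of unity $\omega_{r_i}$ exist in $\Field_p$ and $R$ is invertible there. No new idea beyond the one‑dimensional orthogonality relation is needed; the multidimensional statement is just its coordinate‑wise product. Alternatively, one could phrase the whole thing as an induction on $k$ using the factorisation $\DFT_k = \DFT_1 \circ \DFT_{k-1}$ from Proposition~\ref{prop:multidimfft} together with Lemma~\ref{lem:conv} (if one keeps the commented‑out one‑dimensional convolution theorem), but the direct computation above is shorter and self‑contained.
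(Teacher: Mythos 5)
Your proposal is correct and follows essentially the same route as the paper's proof: expand the definitions, interchange the order of summation, factor the inner sum over $\vec{y}$ coordinate-wise, and apply the one-dimensional geometric-series orthogonality relation from Proposition~\ref{prop:invdft} to each factor. The only (harmless) difference is that you carry out the computation directly for distinct moduli $r_1,\ldots,r_k$, whereas the paper writes out only the case $r_1=\cdots=r_k$ and declares the general case analogous.
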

\begin{proof}
	We prove the lemma for the simplified case where $r=r_1=r_2=\ldots=r_k$ and hence $Z = \Int_r^k$, the more general case goes analogously but is notation-wise much more tedious as one needs to differentiate between multiple moduli and their corresponding roots of unity.
	
	The proof follows the same pattern as in Proposition~\ref{prop:invdft}.
	That is, we first fill in the definitions~(\ref{eq:rankconv1}) and rearrange the terms (\ref{eq:rankconv2}).
	Next (\ref{eq:rankconv3}), we observe that the sum over all $\vec{j} \in \Int_r^k$ can be written as a product of $k$ smaller sums, each involving but one coordinate of $\vec{j}$.	
	\begin{align}
	\InvDFT_k \! \left( \DFT_k( \tens{A} ) \! \cdot \! \DFT_k( \tens{B} ) \right)_{\vec{x}}
	&= \frac{1}{r^k} \sum_{\vec{y} \in \Int_r^k} \omega_r^{-\vec{x} \cdot \vec{y}} \left( \sum_{\vec{z_1} \in \Int_r^k} \omega_r^{\vec{y} \cdot \vec{z_1}} a_{\vec{z_1}} \right) \left( \sum_{\vec{z_2} \in \Int_r^k} \omega_r^{\vec{y} \cdot \vec{z_2}} b_{\vec{z_2}} \right) \label{eq:rankconv1} \\
	&= \frac{1}{r^k} \sum_{\vec{z_1},\vec{z_2} \in \Int_r^k} a_{\vec{z_1}} b_{\vec{z_2}} \sum_{\vec{y} \in \Int_r^k} \omega_r^{\vec{y} \cdot (\vec{z_1}+\vec{z_2}-\vec{x})} \label{eq:rankconv2} \\
	&= \frac{1}{r^k} \sum_{\vec{z_1},\vec{z_2} \in \Int_r^k} a_{\vec{z_1}} b_{\vec{z_2}} \prod_{i=1}^k \left( \sum_{j=0}^{r-1} \omega_r^{j ( (\vec{z_1})_i + (\vec{z_2})_i - x_i ) } \right) \label{eq:rankconv3}
	\end{align}
	Here, $x_i$, $(\vec{z_1})_i$ and $(\vec{z_2})_i$ are the $i$-th components of $\vec{x}$, $\vec{z_1}$ and $\vec{z_2}$ respectively.
	
	When $x_i \equiv (\vec{z_1})_i + (\vec{z_2})_i$ modulo~$r$ in the parenthesised sum of Equation~\ref{eq:rankconv3}, this sum becomes $\sum_{j=0}^{r-1} \omega_r^0$ and thus equals~$r$.
	Otherwise, when $x_i \not\equiv (\vec{z_1})_i + (\vec{z_2})_i$ the parenthesised sum is again a geometric series: $\sum_{j=0}^{r-1} (\omega_r^{(\vec{z_1})_i + (\vec{z_2})_i - x_i})^j$ that solves to $\frac{1-(\omega_r^{(\vec{z_1})_i + (\vec{z_2})_i - x_i})^r}{1-\omega_r^{(\vec{z_1})_i + (\vec{z_2})_i - x_i}} = 0$ as $(\omega_r^{(\vec{z_1})_i + (\vec{z_2})_i - x_i})^r = (\omega_r^r)^{(\vec{z_1})_i + (\vec{z_2})_i - x_i} = 1$ in the numerator.
	
	Continuing from (\ref{eq:rankconv3}), we obtain:
	\begin{align}
	\InvDFT_k \! \left( \DFT_k( \tens{A} ) \! \cdot \! \DFT_k( \tens{B} ) \right)_{\vec{x}}
	&= \frac{1}{r^k} \sum_{\vec{z_1},\vec{z_2} \in \Int_r^k} a_{\vec{z_1}} b_{\vec{z_2}} \prod_{i=1}^k r [x_i = (\vec{z_1})_i + (\vec{z_2})_i] \label{eq:rankconv4} \\
	&= \sum_{\vec{z_1},\vec{z_2} \in \Int_r^k} a_{\vec{z_1}} b_{\vec{z_2}} \prod_{i=1}^k [x_i = (\vec{z_1})_i + (\vec{z_2})_i] = \sum_{\vec{z_1} + \vec{z_2} \equiv \vec{x}} a_{\vec{z_1}} b_{\vec{z_2}} \label{eq:rankconv5}
	\end{align}
	completing the proof.
	\qed
\end{proof}

Taking all the above together, we finally obtain the following result.
To distinguish from $\Int_r$, let $\Nat_{<r} = \{0,1,\ldots,r-1\}$ be the integers up to $r$ with standard operators without modulus (operations for which the result of standard operations on $\Nat_{<r}$ is outside $\Nat_{<r}$ are considered undefined, i.e., $2+2$ is undefined in $\Nat_{<3}$).
\begin{lemma}[cyclic and non-cyclic convolution] \label{lem:combinedconv}
	Let $N = \Nat_{<q_1}\! \times \Nat_{<q_2}\! \times \cdots \times \Nat_{<q_l}$, and let $Q = \prod_{i=1}^l  q_i$.
	Let $Z = \Int_{r_1} \!\times \Int_{r_2} \!\times \cdots \times \Int_{r_k}$, and let $R = \prod_{i=1}^k  r_i$.
	Let $f,g: Z \times N \rightarrow \Field_p$, where $p$ is chosen appropriately.
	And, let $h: Z \times N \rightarrow \Field_p$ be the combined (partially cyclic and partially non-cyclic) convolution of $f$ and $g$ defined as:
	\[ 
	h(\vec{x},\vec{i}) = \sum_{\vec{y_1} + \vec{y_2} \equiv \vec{x}} \sum_{\vec{j_1} + \vec{j_2} = \vec{i}} f(\vec{y_1},\vec{j_1}) g(\vec{y_2},\vec{j_2})
	\]
	where the sum $\vec{y_1} + \vec{y_2} \equiv \vec{x}$ is evaluated component-wise modulo $r_i$ at coordinate~$i$ (sum in $Z$), and the sum $\vec{j_1} + \vec{j_2} = \vec{i}$ is evaluated component-wise \emph{without} modulus (sum in $N$).
	Then, the combined convolution $h$ can be computed in $\BigO( R \, Q \, 2^l (\log(R)+\log(Q)+l) )$ arithmetic operations.
\end{lemma}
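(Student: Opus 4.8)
The plan is to reduce the combined convolution to a single purely cyclic convolution by zero-padding each non-cyclic coordinate, and then to invoke the multidimensional convolution theorem (Lemma~\ref{lem:multidimconv}) together with the fast multidimensional discrete Fourier transform (Proposition~\ref{prop:multidimfft}). Concretely, for each non-cyclic coordinate $t \in \{1,\ldots,l\}$ I replace $\Nat_{<q_t}$ by the cyclic group $\Int_{2q_t}$, set $Z' = \Int_{2q_1} \times \cdots \times \Int_{2q_l}$, and define padded tensors $\hat f, \hat g : Z \times Z' \rightarrow \Field_p$ by $\hat f(\vec x, \vec i) = f(\vec x, \vec i)$ when $i_t < q_t$ for all $t$ and $\hat f(\vec x, \vec i) = 0$ otherwise, and likewise for $\hat g$ from $g$. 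Since $p$ is chosen appropriately, $\Field_p$ contains roots of unity of all orders $r_1,\ldots,r_k,2q_1,\ldots,2q_l$, so the $(k+l)$-dimensional transforms over the product of cyclic groups $Z \times Z'$ are well defined.

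Next I apply Lemma~\ref{lem:multidimconv} to $\hat f$ and $\hat g$ over $Z \times Z'$: the tensor $\hat h := \InvDFT_{k+l}\!\left( \DFT_{k+l}(\hat f) \cdot \DFT_{k+l}(\hat g) \right)$ satisfies $\hat h(\vec x, \vec i) = \sum (\hat f(\vec y_1, \vec j_1)\, \hat g(\vec y_2, \vec j_2))$ summed over all $(\vec y_1,\vec j_1) + (\vec y_2,\vec j_2) \equiv (\vec x, \vec i)$, the congruence being taken coordinate-wise, modulo $r_m$ in the $Z$-coordinates and modulo $2q_t$ in the $Z'$-coordinates. I then show that the restriction of $\hat h$ to $Z \times N$ equals $h$. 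Fix $\vec x \in Z$ and $\vec i \in N$. Every nonzero term in the sum for $\hat h(\vec x, \vec i)$ has $j_{1,t}, j_{2,t} \in \{0,\ldots,q_t-1\}$, because $\hat f$ and $\hat g$ vanish otherwise; hence $j_{1,t} + j_{2,t} \in \{0,\ldots,2q_t-2\}$. The only element of this range congruent to $i_t$ modulo $2q_t$ is $i_t$ itself, since $i_t \le q_t - 1 \le 2q_t - 2$ while $i_t + 2q_t \ge 2q_t > 2q_t - 2$. Thus the congruence $j_{1,t} + j_{2,t} \equiv i_t \pmod{2q_t}$ is equivalent to the equality $j_{1,t} + j_{2,t} = i_t$ for every $t$, whereas the $Z$-coordinates retain their cyclic sum; this is exactly the defining expression for $h(\vec x, \vec i)$, so $\hat h$ and $h$ agree on all of $Z \times N$.

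For the running time, note that $\hat f$, $\hat g$, and $\hat h$ are indexed by $Z \times Z'$, which has $R \cdot \prod_{t=1}^{l} 2q_t = 2^l R Q$ entries. By Proposition~\ref{prop:multidimfft}, each of the two forward transforms and the single inverse transform costs $\BigO\!\left( 2^l R Q \log(2^l R Q) \right) = \BigO\!\left( 2^l R Q (\log R + \log Q + l) \right)$ arithmetic operations; the point-wise multiplication costs $\BigO(2^l R Q)$ operations, and the padding and final restriction are linear in the table sizes. Summing these contributions yields the claimed bound $\BigO\!\left( R\, Q\, 2^l (\log R + \log Q + l) \right)$.

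I do not expect a serious obstacle here: the argument is essentially the classical "pad to twice the length to turn a cyclic convolution into an ordinary one," carried out simultaneously in the $l$ non-cyclic coordinates while leaving the $k$ cyclic coordinates untouched. The only point requiring care is the coordinate-wise wrap-around check guaranteeing that the padded cyclic convolution genuinely agrees with the non-cyclic one on $N$ — in particular verifying that doubling each $q_t$ suffices and that this keeps the overhead at exactly the stated factor $2^l$.
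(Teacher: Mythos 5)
Your proposal is correct and follows essentially the same route as the paper's proof: zero-pad each non-cyclic coordinate from $\Nat_{<q_t}$ to $\Int_{2q_t}$, apply Proposition~\ref{prop:multidimfft} and Lemma~\ref{lem:multidimconv} over $Z \times Z'$, and restrict the result to $Z \times N$. Your explicit verification that no wrap-around can occur (since $j_{1,t}+j_{2,t} \leq 2q_t-2$) is a welcome elaboration of what the paper states only briefly.
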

\begin{proof}
	We reduce the problem to a standard multidimensional convolution (with modulus) by padding the input with zeroes.
	To be precise, let $Z' = \Int_{2q_1} \times \Int_{2q_2} \times \cdots \times \Int_{2q_l}$ ($N$ with for each coordinate twice as many values and with modulo additions), and let $f',g': Z \times Z' \rightarrow \Field_p$ be equal to $f$ and $g$ on the intersection of their domains (where $N$ is interpreted as subset of $Z'$ by interpreting each $\Nat_{<q_i}$ as subset of $\Int_{2q_i}$) and zero otherwise.
	Use Proposition~\ref{prop:multidimfft} and Lemma~\ref{lem:multidimconv} to compute the standard multidimensional convolution of $f'$ and $g'$.
	Because $Z \times Z'$ has $R Q 2^l$ elements, this requires $\BigO( R Q 2^l (\log(R)+\log(Q)+l) )$ arithmetic operations.
	Because the padded zeroes prevent the circular convolution effect, we can extract $h$ by taking the restriction of the result to $Z \times N$.
	\qed
\end{proof}
Different than for previous propositions and lemmas, we have more freedom in choosing the prime $p$ that is 'chosen appropriately' in the lemma above.
For the given proof, appropriate means that in $\Field_p$ all $r_i$-th roots of unity exists and all $2q_i$-th roots of unity exist.
However in our applications of the lemma, $l$ is often fixed.
This means that the running time does not change if we allow $Z' = \Int_{s_1} \times \Int_{s_2} \times \cdots \times \Int_{s_l}$ with, for all $i$, $2q_i \leq s_i \leq c q_i$ for a small constant $c$.
In other words, with respect to the different $q_i$, there must be some root of unity, but the order of this root of unity has a broad range in which it is acceptable for our results to be valid.

\begin{corollary}[multidimensional non-cyclic convolution] \label{cor:noncyclicconv}
	Let $N = \Nat_{<q_1}\! \times \Nat_{<q_2}\! \times \cdots \times \Nat_{<q_l}$, and let $Q = \prod_{i=1}^l q_i$.
	Let $f,g: N \rightarrow \Field_p$, where $p$ is chosen appropriately.
	Let $h: N \rightarrow \Field_p$ be the non-cyclic convolution of $f$ and $g$ defined as:
	\[ 
	h(\vec{i}) = \sum_{\vec{j_1} + \vec{j_2} = \vec{i}} f(\vec{j_1}) g(\vec{j_2})
	\]
	where the sum $\vec{j_1} + \vec{j_2} = \vec{i}$ is evaluated component-wise \emph{without} modulus (sum in $N$).
	Then, $h$ can be computed in $\BigO( Q \, 2^l (\log(Q)+l) )$ arithmetic operations.
\end{corollary}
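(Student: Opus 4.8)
The plan is to obtain this as the degenerate special case of Lemma~\ref{lem:combinedconv} in which there are no cyclic coordinates at all. Concretely, I would apply Lemma~\ref{lem:combinedconv} with $k=0$, so that $Z$ is the trivial one-element module and $R=\prod_{i=1}^{0} r_i = 1$. With this choice the index $\vec{x}$ disappears, the condition $\vec{y_1}+\vec{y_2}\equiv\vec{x}$ becomes vacuous, and the combined convolution $h(\vec{x},\vec{i})$ collapses exactly to $h(\vec{i}) = \sum_{\vec{j_1}+\vec{j_2}=\vec{i}} f(\vec{j_1})g(\vec{j_2})$. The running time $\BigO(R\,Q\,2^l(\log(R)+\log(Q)+l))$ from Lemma~\ref{lem:combinedconv} then simplifies, using $R=1$ and $\log(R)=0$, to $\BigO(Q\,2^l(\log(Q)+l))$, which is exactly the claimed bound. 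The "chosen appropriately" hypothesis on $p$ is inherited verbatim: it already only constrains the $2q_i$-th roots of unity in the proof of Lemma~\ref{lem:combinedconv}, since there are no $r_i$ to worry about.

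If one prefers not to instantiate a lemma at an empty product, the same argument can be carried out directly and is essentially a one-paragraph repetition of the proof of Lemma~\ref{lem:combinedconv}. First embed each $\Nat_{<q_i}$ into $\Int_{2q_i}$ and define $f',g': \Int_{2q_1}\times\cdots\times\Int_{2q_l}\to\Field_p$ to agree with $f,g$ on $N$ and be zero elsewhere. Then invoke Proposition~\ref{prop:multidimfft} and Lemma~\ref{lem:multidimconv} to compute the (cyclic) multidimensional convolution of $f'$ and $g'$ on a domain of size $\prod_{i=1}^l 2q_i = 2^l Q$; this costs $\BigO(2^l Q \log(2^l Q)) = \BigO(2^l Q(\log(Q)+l))$ arithmetic operations. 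Finally, since the zero-padding ensures that for $\vec{j_1},\vec{j_2}\in N$ the sum $\vec{j_1}+\vec{j_2}$ never wraps around any modulus, the restriction of this cyclic convolution to $N$ coincides with the desired non-cyclic convolution $h$.

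I do not expect a genuine obstacle here: the content is entirely in Lemma~\ref{lem:combinedconv} (and ultimately in Proposition~\ref{prop:multidimfft}), and this corollary is just the "no cyclic part" specialisation. The only points requiring a moment of care are (i) confirming that the formula and the time bound of Lemma~\ref{lem:combinedconv} both behave correctly under the empty product $R=1$, and (ii) keeping the hypothesis on the prime $p$ consistent, namely that $\Field_p$ contains $2q_i$-th roots of unity for each $i$ (and, as remarked after Lemma~\ref{lem:combinedconv}, any root of unity of order in a constant-factor range of $q_i$ would also do when $l$ is fixed). Neither of these affects the asymptotics.
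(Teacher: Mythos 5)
Your proposal is correct and matches the paper's proof, which is exactly the one-line instantiation of Lemma~\ref{lem:combinedconv} with $k=0$ (so $R=1$ and the bound collapses to $\BigO(Q\,2^l(\log(Q)+l))$). Your additional direct padding argument is a faithful unfolding of the same reasoning and raises no new issues.
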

\begin{proof}
	Direct consequence of Lemma~\ref{lem:combinedconv} with $k = 0$.
	\qed
\end{proof}

\subsection{M\"obius Inversion Using Fast Zeta and Fast M\"obius Transforms} \label{sec:mobius}
The zeta and M\"obius transforms apply to functions on partially ordered sets.

\begin{definition}[zeta and M\"obius transform]
	Let $P$ be a partially ordered set.
	Given a function $f: P \rightarrow \Field_p$, the \emph{zeta transform} $\zeta(f)$ and the \emph{M\"obius transform} $\mu(f)$ are defined as follows:
	\begin{align*}
		\zeta(f)(x) = \sum_{y \leq x} f(y) 
		&& \mu(f)(x) = \sum_{y \leq x} \mu(y,x) f(y) 
		&& \textrm{where}
		&& \mu(x,y)=\left\{ \begin{array}{ll} 
			1 & \textrm{ if $x = y$}\\
			- \sum_{x < z \leq y}\mu(z,y) & \textrm{ if $x < y$}
		\end{array} \right.
	\end{align*}
	The recursively defined function $\mu(x,y)$ on pairs $x,y \in P$ with $x \leq y$ is the M\"obius function of $P$.
\end{definition}

The \emph{zeta transform} $\zeta(f)$ and the \emph{M\"obius transform} are inverses, as we will now show.
\begin{lemma}[M\"obius inversion] \label{lem:mobiusinversion}
	Let $f: P \rightarrow \Field_p$ any function, then $\mu( \zeta( f ) )(x) = f(x)$.
\end{lemma}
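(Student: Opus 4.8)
The plan is to prove M\"obius inversion directly from the definitions by expanding $\mu(\zeta(f))(x)$, interchanging the order of summation, and then showing that the resulting coefficient of each $f(z)$ is $1$ when $z=x$ and $0$ otherwise. Concretely, I would start by writing
\[
	\mu(\zeta(f))(x) = \sum_{y \leq x} \mu(y,x)\, \zeta(f)(y) = \sum_{y \leq x} \mu(y,x) \sum_{z \leq y} f(z),
\]
and then swap the two sums so that the outer sum runs over $z \leq x$ and the inner sum runs over all $y$ with $z \leq y \leq x$. This gives $\mu(\zeta(f))(x) = \sum_{z \leq x} f(z) \left( \sum_{z \leq y \leq x} \mu(y,x) \right)$, so it suffices to show that the parenthesised coefficient equals the Kronecker delta $[z=x]$.

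The key step is therefore the identity $\sum_{z \leq y \leq x} \mu(y,x) = [z = x]$ for all $z \leq x$ in $P$. I would prove this by induction on the length of the longest chain in the interval $[z,x]$ (equivalently, on the size of the interval). The base case $z = x$ is immediate since the sum has the single term $\mu(x,x) = 1$. For $z < x$, I would peel off the term $y = x$ and use the recursive definition of the M\"obius function applied with roles arranged so that the remaining sum telescopes: the defining relation $\mu(z,x) = -\sum_{z < y \leq x} \mu(y,x)$ for $z < x$ is, after rearrangement, exactly the statement $\sum_{z \leq y \leq x} \mu(y,x) = 0$. So in fact no induction is even needed here — the identity $\sum_{z \leq y \leq x}\mu(y,x)=0$ for $z<x$ is a direct restatement of the definition of $\mu(z,x)$, and the case $z=x$ is trivial.

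The main obstacle, such as it is, is purely bookkeeping: justifying the interchange of the two summations and being careful that the M\"obius function as defined here sums over the \emph{lower} argument (i.e. $\mu(f)(x) = \sum_{y \le x}\mu(y,x)f(y)$ with the first argument varying), which matches the direction of the zeta transform $\zeta(f)(x) = \sum_{y \le x} f(y)$. One should double-check that the definition used in the paper is the one for which $\mu$ is a genuine two-sided inverse of $\zeta$ — here it is, since both transforms accumulate over down-sets. Since $P$ is finite (it is a poset of partial-solution labels), there are no convergence issues and the swap of finite sums is unconditionally valid. I would close by noting that the same argument with the sums in the opposite order shows $\zeta(\mu(f)) = f$ as well, although only the stated direction is required.
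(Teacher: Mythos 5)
Your proposal is correct and follows essentially the same route as the paper's proof: expand the definitions, interchange the two finite sums, and observe that $\sum_{z \leq y \leq x}\mu(y,x)$ equals $1$ when $z=x$ and equals $0$ when $z<x$ directly from the recursive definition of $\mu(z,x)$, with no induction needed. The only additions beyond the paper's argument are the (correct but unnecessary) remarks about justifying the sum interchange and the reverse identity $\zeta(\mu(f))=f$.
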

\begin{proof}
	Let $x, y \in P$ and consider the sum $\sum_{x \leq z \leq y} \mu(z,y)$.
	If $x = y$, then this sum equals $\mu(x,x) = 1$.
	If $x < y$, then this sum equals $\mu(x,y) + \sum_{x < z \leq y} \mu(z,y) = 0$ by definition of $\mu(x,y)$.
	As such:
	\begin{align*}
		\mu( \zeta( f ) )(x) = \sum_{y \leq x} \mu(y,x) \sum_{z \leq y} f(z) = \sum_{z \leq x} f(z) \sum_{z \leq y \leq x} \mu(y,x) = \sum_{z \leq x}f(z) [z=x] = f(x)
	\end{align*}
	The first equality is by expanding the definitions.
	The second follows by reordering terms.
	And, the third follows from the above, where $[z=x]$ is Iverson notation that is $1$ if $z=x$ and $0$ otherwise.
\qed
\end{proof}
In this paper, we will not define any M\"obius transform explicitly.
We will show how to compute zeta transforms $\zeta(f)$ of functions~$f: P \rightarrow \Field_p$ for some partial orders $P$. 
Then, given $\zeta(f)$, we show that we can reconstruct $f$.
This reconstruction (implicitly) is an algorithm for the M\"obius transform because a consequence of Lemma~\ref{lem:mobiusinversion} is that the zeta transform has a unique inverse.

M\"obius inversion is often used in relation to lattices.
A \emph{meet-semilattice} is a partial order~$P$ on which, for any two elements in $x, y \in P$, the \emph{meet} $x \wedge y$ (greatest lower bound) is properly defined.
Similarly, a \emph{join-semilattice} is a partial order~$P$ set on which, for any two elements in $x, y \in P$, the \emph{join} $x \vee y$ (smallest upper bound) is properly defined.
A lattice is a partial order~$P$ that is both a meet and a join semi-lattice. 
An example is the finite lattice~$\Nat_{<r}^k$ with the coordinate-wise natural order and where the meet and join are the coordinate-wise minimum and maximum.

We will use M\"obius inversion on partial orders that are Cartesian products~$P^k$ of a smaller partial order $P$.
For $\vec{x}, \vec{y} \in P^k$, $\vec{x} = [x_1,x_2,\ldots,x_k]$, $\vec{y} = [y_1,y_2,\ldots,y_k]$, we write $\vec{x} \leq \vec{y}$ if and only if $x_i \leq y_i$ for all $i$.
Additionally, our partial orders have the property that for every $x \in P$, the downward closed set $\{y \in P | y \leq x\}$ forms a join-semilattice.
It is not hard to see that if for every $x \in P$, $\{y \in P | y \leq x\}$ forms a join-semilattice, then for every $\vec{x} \in P^k$, $\{\vec{y} \in P^k | \vec{y} \leq \vec{x}\}$ forms a join-semilattice as well, where the join operation is defined coordinate-wise.

On the subset lattice (isomorphic to $\Nat_{<2}^k$) there are well-known fast algorithms for the zeta and M\"obius transforms, often referred to as Yates' algorithm~\cite{Yates37}, see also~\cite{BjorklundHKK07,Kennes91}.
Below, we generalise these algorithms to partial orders $P^k$ for which, for every $\vec{x} \in P^k$, the set $\{\vec{y} \in P^k | \vec{y} \leq \vec{x}\}$ forms a join-semilattice.
For fast zeta and M\"obius transforms on arbitrary finite lattices, see~\cite{BjorklundHKKNP15}.

\begin{proposition}[fast zeta and M\"obius transforms] \label{prop:fastzeta}
	The zeta transform and M\"obius transform of a function $f:\Nat_{<r}^k \rightarrow \Field_p$ can be computed in $\BigO(r^kk)$ arithmetic operations.
\end{proposition}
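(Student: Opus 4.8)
The plan is to generalise Yates' algorithm from the Boolean lattice $\Nat_{<2}^k$ to the lattice $\Nat_{<r}^k$ by processing one coordinate at a time. For the zeta transform, I would define a sequence of intermediate functions $f = \zeta_0, \zeta_1, \ldots, \zeta_k = \zeta(f)$, where $\zeta_i$ performs the partial summation over the first $i$ coordinates only:
\[ \zeta_i(\vec{x}) = \sum_{\substack{\vec{y} \leq \vec{x} \\ y_j = x_j \text{ for } j > i}} f(\vec{y}). \]
The update step from $\zeta_{i-1}$ to $\zeta_i$ fixes all coordinates except the $i$-th and sums over the $i$-th coordinate: $\zeta_i(\vec{x}) = \sum_{y_i \leq x_i} \zeta_{i-1}(\vec{x}[x_i \leftarrow y_i])$. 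First I would observe that each such step can be implemented in place: for fixed values of the other coordinates, it is just a prefix-sum over the $i$-th axis, computable in $\BigO(r)$ operations per line (each value $\zeta_i(\vec{x})$ equals $\zeta_i(\vec{x}[x_i \leftarrow x_i - 1]) + \zeta_{i-1}(\vec{x})$). There are $r^{k-1}$ lines per coordinate and $k$ coordinates, giving $\BigO(r^k k)$ total. Then I would verify by induction on $i$ that $\zeta_i$ has the claimed form, and that $\zeta_k = \zeta(f)$ since summing coordinate-wise over all $i$ produces exactly $\sum_{\vec{y} \leq \vec{x}} f(\vec{y})$ — this uses that $\vec{y} \leq \vec{x}$ in $\Nat_{<r}^k$ holds iff $y_j \leq x_j$ for every $j$, which decomposes the sum as a nested sum over the coordinates.

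For the M\"obius transform I would run the same procedure in reverse. Since each coordinate step is an invertible linear map (prefix summation, whose inverse is the first-difference map $g(\vec{x}) \mapsto g(\vec{x}) - g(\vec{x}[x_i \leftarrow x_i - 1])$), the composition of the $k$ inverse steps, applied in reverse order, inverts $\zeta$; by Lemma~\ref{lem:mobiusinversion} the zeta transform has a unique inverse, so this composition must equal $\mu$. Each inverse step is again $\BigO(r)$ per line, so the M\"obius transform is also computed in $\BigO(r^k k)$ operations. This sidesteps any need to reason directly about the M\"obius function $\mu(\vec{x},\vec{y})$ of $\Nat_{<r}^k$, which is convenient since the statement only asserts a running time for \emph{computing} the transform.

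I expect the main obstacle to be purely bookkeeping: making the in-place prefix-sum argument rigorous (specifying the order in which entries along an axis are visited so that the value being read has or hasn't yet been overwritten, as appropriate for zeta versus M\"obius), and carefully stating the induction hypothesis for the intermediate functions $\zeta_i$ so that the final equality $\zeta_k = \zeta(f)$ is immediate. A secondary point worth a remark is why the hypothesis in the paragraph preceding the proposition — that downward-closed sets form join-semilattices — is what makes the coordinate-wise decomposition legitimate in the more general $P^k$ setting; for the clean special case $P = \Nat_{<r}$ stated in the proposition this is automatic, but the proof should be phrased so it transparently extends. No genuinely hard estimate is involved; the content is the correctness of the coordinate-splitting, which follows from the product structure of the order.
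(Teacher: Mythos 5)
Your proposal is correct and follows essentially the same route as the paper: the same decomposition into $k$ coordinate-wise prefix-sum steps with the recurrence $f_i(\vec{x}) = f_i(\vec{x}[x_i \leftarrow x_i - 1]) + f_{i-1}(\vec{x})$, the same $\BigO(r^k k)$ count, and the same inversion of the step sequence via first differences, justified by the uniqueness of the inverse from Lemma~\ref{lem:mobiusinversion}. The only cosmetic difference is that you phrase the cost as $r^{k-1}$ lines of length $r$ per coordinate rather than $r^k$ constant-time updates, which is the same thing.
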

\begin{proof}
	We compute $\zeta(f)$ with an algorithm that uses $k$ steps. 
	Let $f_0 = f$, and let $\vec{x} = [x_1,\ldots,x_k]$.
	Denote by $\vec{x}[x_i \leftarrow y]$ the tuple $\vec{x}$ with the value on the $i$-th coordinate replaced by $y$.
	At the $i$-th step of the algorithm, we compute $f_i$ recursively using the left formula below. 
	\begin{align}
		f_i(\vec{x}) = \left\{\begin{array}{ll} 
			f_{i-1}(\vec{x}) & \textrm{ if $x_i = 0$} \\
			f_i(\vec{x}[x_i \leftarrow x_i - 1]) + f_{i-1}(\vec{x}) & \textrm{ if $x_i > 0$}
		\end{array} \right. 
		&&
		f_i(\vec{x}) = \sum_{j \leq x_i} f_{i-1}(\vec{x}[x_i \leftarrow j]) \label{eq:fastzeta}
	\end{align}
	The right formula above follows by induction on the left recurrence. 
	By induction on the step number~$i$, one easily sees that $f_i$ satisfies the equation below, from which we can obtain $\zeta(f)$ since $f_k = \zeta(f)$.
	The result for $\zeta(f)$ follows because each step computes $r^k$ values, each in constant time.
	\begin{align*}
		f_i(\vec{x}) = \sum_{y_1 \leq x_1} \sum_{y_2 \leq x_2} \cdots \sum_{y_i \leq x_i} 	f([y_1,y_2,\ldots,y_i,x_{i+1},x_{i+2},\ldots,x_k]) 
	\end{align*}
	For $\mu(f)$, we use that $\mu(f)$ is the inverse of $\zeta(f)$: the sequence $f_0, f_1, \ldots, f_k$ used to compute $\zeta(f)$ from $f$ can computationally be inverted to compute $f$ from $\zeta(f)$.
	That is, let $f_k = \zeta(f)$, and let:
	\begin{align}
		f_i(\vec{x}) = \left\{\begin{array}{ll} 
			f_{i+1}(\vec{x}) - f_{i+1}(\vec{x}[x_i \leftarrow x_i - 1]) & \textrm{ if $x_i > 0$} \\
			f_{i+1}(\vec{x}) & \textrm{ if $x_i = 0$}
		\end{array} \right. \label{eq:fastmobius}
	\end{align}
	Assuming that the $f_i$ were computed using Equation~\ref{eq:fastzeta}, and substituting the right part of (\ref{eq:fastzeta}) into the case where $x_i>0$ in Equation~\ref{eq:fastmobius}, we see that (\ref{eq:fastmobius}) computes the inverse of Equation~\ref{eq:fastzeta}:
	\begin{align*}
		\left(\sum_{j \leq x_i} f_{i-1}(\vec{x}[x_i \leftarrow j])\right) - \left( \sum_{j \leq x_i-1} f_{i-1}(\vec{x}[x_i \leftarrow j]) \right) =  f_i(\vec{x})
	\end{align*}
	Hence, we can reconstruct $f_0 = f$ again.
	Since the run time is the same, we obtain the result.
\qed
\end{proof}

It is easy to generalise the inductive proof above and obtain:
\begin{lemma} \label{lem:fastgenericzeta}
	Given algorithms for the zeta and M\"obius transform for functions $f:P \rightarrow \Field_p$ that use $\BigO(|P|)$ arithmetic operations, there are algorithms for the zeta and M\"obius transform for functions $f:P^k \rightarrow \Field_p$ that require $\BigO(|P^k|k)$ arithmetic operations.
\end{lemma}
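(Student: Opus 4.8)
The plan is to lift the proof of Proposition~\ref{prop:fastzeta} from $P = \Nat_{<r}$ to an arbitrary $P$, using the assumed one-dimensional transforms as black boxes in place of the explicit prefix-sum recurrences. Write $\vec{x} = [x_1,\ldots,x_k] \in P^k$ and let $\vec{x}[x_i \leftarrow y]$ denote $\vec{x}$ with its $i$-th coordinate replaced by $y \in P$. The key observation is that, for any fixed values of the coordinates other than the $i$-th, the map $y \mapsto \vec{x}[x_i \leftarrow y]$ is an order isomorphism from $P$ onto the corresponding ``line'' of $P^k$, because $\leq$ on $P^k$ is coordinate-wise. Hence the assumed $\BigO(|P|)$ algorithm for the zeta transform on $P$ can be applied along each such line.

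First I would define a $k$-step algorithm: set $f_0 = f$ and, at step $i$, obtain $f_i$ from $f_{i-1}$ by applying the one-dimensional zeta transform along every line in direction~$i$, that is,
\[ f_i(\vec{x}) = \sum_{y \leq x_i} f_{i-1}(\vec{x}[x_i \leftarrow y]), \]
where $y \leq x_i$ refers to the order in~$P$. A straightforward induction on $i$, exactly as in the proof of Proposition~\ref{prop:fastzeta}, gives
\[ f_i(\vec{x}) = \sum_{y_1 \leq x_1} \cdots \sum_{y_i \leq x_i} f([y_1,\ldots,y_i,x_{i+1},\ldots,x_k]), \]
so that $f_k(\vec{x}) = \sum_{y_1 \leq x_1} \cdots \sum_{y_k \leq x_k} f(\vec{y})$; since $\vec{y} \leq \vec{x}$ in $P^k$ holds precisely when $y_j \leq x_j$ for all $j$, this equals $\zeta(f)(\vec{x})$. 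For the running time, step~$i$ consists of $|P^k|/|P|$ independent one-dimensional zeta transforms, each costing $\BigO(|P|)$, so $\BigO(|P^k|)$ per step and $\BigO(|P^k|k)$ in total.

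For the M\"obius transform I would run the same $k$-step scheme but apply, at step~$i$, the assumed one-dimensional \emph{M\"obius} transform along the lines in direction~$i$. To see this is correct, note that by Lemma~\ref{lem:mobiusinversion} applied to $P$ the one-dimensional M\"obius transform is the inverse of the one-dimensional zeta transform; moreover the per-coordinate operations on distinct coordinates commute, so the order of the steps is immaterial and the composition over $i = 1,\ldots,k$ of the one-dimensional M\"obius transforms is the inverse of the composition of the one-dimensional zeta transforms, i.e.\ (by the previous paragraph) the inverse of $\zeta$ on $P^k$. By Lemma~\ref{lem:mobiusinversion} applied to $P^k$, that inverse is exactly $\mu$ on $P^k$, so the scheme computes $\mu(f)$, again in $\BigO(|P^k|k)$ operations.

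I expect the argument to be essentially routine, in line with the remark preceding the statement: the only point needing a little care is the first one, namely verifying that the coordinate-wise order on $P^k$ makes the $k$-dimensional zeta transform factor as a composition of $k$ one-dimensional transforms along coordinate lines, each of which is a copy of the zeta transform on $P$. In particular, the join-semilattice hypothesis used elsewhere in the surrounding discussion plays no role in this proof.
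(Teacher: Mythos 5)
Your proof is correct and is essentially the generalisation the paper has in mind: the paper gives no separate argument for this lemma beyond the remark that the inductive proof of Proposition~\ref{prop:fastzeta} generalises, and your factorisation of the $k$-dimensional transform into $k$ one-dimensional transforms along coordinate lines (with the commutativity/uniqueness-of-inverse argument for the M\"obius direction) is exactly that generalisation. Your closing observation that the join-semilattice hypothesis is not needed here is also accurate.
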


The application of the zeta and M\"obius transform that is important to us is the following lemma.
\begin{lemma}[generalised covering product] \label{lem:coverprod}
	Let $P$ be a finite partial order such that, for every $\vec{x} \in P^k$, the set $\{\vec{y} \in P^k | \vec{y} \leq \vec{x}\}$ forms a join-semilattice, and let $f, g: P^k \rightarrow \Field_p$.
	
	Define the \emph{generalised covering product} $h: P^k \rightarrow \Field_p$ of $f$ and $g$ through:
	\[ h(\vec{x}) = \sum_{\vec{y_1}\vee \vec{y_2} = \vec{x}} f(\vec{y_1}) \, g(\vec{y_2}) \]
	Then $\mu(\zeta(f) \cdot \zeta(g))(\vec{x}) = h(\vec{x})$, where the product $\zeta(f) \cdot \zeta(g)$ is defined by point-wise multiplication.
\end{lemma}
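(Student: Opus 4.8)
The plan is to mirror the proof of Proposition~\ref{prop:fastzeta} (and its generalisation Lemma~\ref{lem:fastgenericzeta}): show that the zeta transform turns the ``join'' operation appearing in the covering product into point-wise multiplication, and then invert with the M\"obius transform. Concretely, I would first expand $\zeta(f)(\vec{x}) \cdot \zeta(g)(\vec{x})$ using the definition of $\zeta$, obtaining $\sum_{\vec{y_1}\leq\vec{x}} \sum_{\vec{y_2}\leq\vec{x}} f(\vec{y_1})\,g(\vec{y_2})$. The key observation is that, since $\{\vec{y}\in P^k \mid \vec{y}\leq\vec{x}\}$ is a join-semilattice, the condition ``$\vec{y_1}\leq\vec{x}$ and $\vec{y_2}\leq\vec{x}$'' is equivalent to ``$\vec{y_1}\vee\vec{y_2}\leq\vec{x}$'' — the join exists and is itself $\leq\vec{x}$. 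Hence we may re-index the double sum by first choosing $\vec{z}=\vec{y_1}\vee\vec{y_2}$ and then summing over all pairs whose join is $\vec{z}$:
\begin{align*}
\zeta(f)(\vec{x}) \cdot \zeta(g)(\vec{x}) = \sum_{\vec{z}\leq\vec{x}} \; \sum_{\vec{y_1}\vee\vec{y_2}=\vec{z}} f(\vec{y_1})\,g(\vec{y_2}) = \sum_{\vec{z}\leq\vec{x}} h(\vec{z}) = \zeta(h)(\vec{x}).
\end{align*}
This is exactly the statement that $\zeta(f)\cdot\zeta(g) = \zeta(h)$ as functions on $P^k$.

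With that identity in hand, the remainder is immediate: apply the M\"obius transform to both sides and invoke M\"obius inversion (Lemma~\ref{lem:mobiusinversion}), which gives $\mu(\zeta(f)\cdot\zeta(g))(\vec{x}) = \mu(\zeta(h))(\vec{x}) = h(\vec{x})$, as claimed. One small point I would be careful to record is \emph{why} the join-semilattice hypothesis is needed and sufficient: without it, two elements $\vec{y_1},\vec{y_2}$ both below $\vec{x}$ might have several minimal common upper bounds, or none below $\vec{x}$, and the re-indexing step would fail — the sum on the left would no longer organise cleanly by the value of $\vec{y_1}\vee\vec{y_2}$. The hypothesis that every principal down-set is a join-semilattice guarantees $\vec{y_1}\vee\vec{y_2}$ is well-defined and lies in the down-set, which is precisely what makes the partition of the index set of pairs $(\vec{y_1},\vec{y_2})$ by the value of their join legitimate.

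The main obstacle — really the only non-routine step — is the equivalence ``$\vec{y_1}\leq\vec{x}\wedge\vec{y_2}\leq\vec{x} \iff \vec{y_1}\vee\vec{y_2}\leq\vec{x}$'' and the consequent re-indexing, so I would state that carefully. The direction $\Leftarrow$ is trivial (each $\vec{y_i}\leq\vec{y_1}\vee\vec{y_2}\leq\vec{x}$); the direction $\Rightarrow$ uses that $\vec{x}$ is a common upper bound and $\vec{y_1}\vee\vec{y_2}$ is the \emph{least} such, which requires the join to exist — hence the hypothesis. Everything else (expanding definitions of $\zeta$, swapping finite sums, recognising $\zeta(h)$, applying Lemma~\ref{lem:mobiusinversion}) is bookkeeping. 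I would also note in passing that this gives an $\BigO(|P^k|\,k)$-time algorithm for the covering product when $P$ admits linear-time zeta/M\"obius transforms, by combining this lemma with Lemma~\ref{lem:fastgenericzeta} and point-wise multiplication, though that algorithmic corollary may be stated separately rather than inside this proof.
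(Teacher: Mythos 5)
Your proof is correct and follows essentially the same route as the paper: both establish $\zeta(f)\cdot\zeta(g)=\zeta(h)$ by expanding the point-wise product into a double sum over pairs $\vec{y_1},\vec{y_2}\leq\vec{x}$ and re-indexing by the (well-defined, by the join-semilattice hypothesis) value of $\vec{y_1}\vee\vec{y_2}$, then conclude by M\"obius inversion (Lemma~\ref{lem:mobiusinversion}). Your additional remarks on why the hypothesis is needed are accurate and slightly more explicit than the paper's.
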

\begin{proof}
	We will prove that $(\zeta(f) \cdot \zeta(g))(\vec{x}) = \zeta(h)(\vec{x})$, then the result follows from Lemma~\ref{lem:mobiusinversion}.
	\begin{align} \label{eq:coverprod1}
		(\zeta(f) \cdot \zeta(g))(\vec{x}) = \left( \sum_{\vec{y} \leq \vec{x}} f(\vec{y}) \right) \left( \sum_{\vec{y} \leq \vec{x}} g(\vec{y}) \right) = \sum_{\vec{y_1},\vec{y_2} \leq \vec{x}} f(\vec{y_1})\, g(\vec{y_2}) 
	\end{align}
	Here, we first use the definition of the $\zeta$-transform and then work out all the product terms.
	The result equals $\zeta(h)(\vec{x})$ as we now show by working out the definition of $\zeta(h)(\vec{x})$.
	\begin{align} \label{eq:coverprod2}
		\zeta(h)(\vec{x}) = \sum_{\vec{z} \leq \vec{x}} \sum_{\vec{y_1} \vee \vec{y_2} = \vec{z}} f(\vec{y_1}) \, g(\vec{y_2}) = \sum_{\vec{y_1},\vec{y_2} \leq \vec{x}} f(\vec{y_1}) \, g(\vec{y_2}) 
	\end{align}
	For the last equality, we reorder terms using that for any two $\vec{y_1}, \vec{y_2} \leq \vec{x}$ there is a unique~$\vec{z}$ such that $\vec{y_1} \vee \vec{y_2} = \vec{z}$; this is well defined as the set $\{\vec{y} \in P^k | \vec{y} \leq \vec{x}\}$ forms a join-semilattice.
\qed
\end{proof}

As a direct result, we obtain a generalisation of the covering product from~\cite{BjorklundHKK07}.
\begin{corollary} \label{cor:mobiusjoin}
The generalised covering product for $f, g: \Nat_{<r}^k \rightarrow \Field_p$ defined in the statement of Lemma~\ref{lem:coverprod} can be computed in $\BigO(r^kk)$ arithmetic operations.
\end{corollary}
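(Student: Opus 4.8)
The plan is to combine the two algorithmic ingredients that have just been developed: the fast zeta/Möbius transforms on product partial orders (Proposition~\ref{prop:fastzeta}) and the identity for the generalised covering product (Lemma~\ref{lem:coverprod}). First I would instantiate Lemma~\ref{lem:coverprod} with $P = \Nat_{<r}$. For this to apply I must check the hypothesis: for every $\vec{x} \in \Nat_{<r}^k$ the downward-closed set $\{\vec{y} \le \vec{x}\}$ must be a join-semilattice. This is immediate, since $\Nat_{<r}^k$ with the coordinate-wise order is a (finite) lattice — the join of two tuples is their coordinate-wise maximum, and the maximum of two values that are each $\le x_i$ is again $\le x_i$, so the downward-closed set is closed under join. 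Hence the hypothesis holds and Lemma~\ref{lem:coverprod} gives $\mu(\zeta(f)\cdot\zeta(g)) = h$.

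With that identity in hand, the algorithm writes itself: compute $\zeta(f)$ and $\zeta(g)$, multiply them point-wise, then apply the Möbius transform to the product. I would then bound the running time term by term. By Proposition~\ref{prop:fastzeta}, each of the two zeta transforms costs $\BigO(r^k k)$ arithmetic operations, and the single Möbius transform at the end costs $\BigO(r^k k)$ as well. The point-wise multiplication touches each of the $r^k$ entries once, contributing $\BigO(r^k)$ operations. Summing these gives $\BigO(r^k k)$ overall, which is the claimed bound.

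There is essentially no obstacle here — this is a direct corollary, and the only thing that needs a moment's care is confirming that the lattice hypothesis of Lemma~\ref{lem:coverprod} is satisfied by $\Nat_{<r}^k$ (noting that Proposition~\ref{prop:fastzeta} is stated for exactly this partial order, so the zeta and Möbius transforms it refers to are precisely the ones whose composition Lemma~\ref{lem:mobiusinversion} identifies). If anything, I would also remark that the prime $p$ is unconstrained here: no roots of unity are involved, so ``chosen appropriately'' reduces to $p$ being large enough that the integer values in question do not wrap modulo $p$. The whole proof is then two or three sentences: verify the hypothesis, invoke Lemma~\ref{lem:coverprod} for the correctness of the transform-multiply-transform scheme, and invoke Proposition~\ref{prop:fastzeta} twice plus an $\BigO(r^k)$ multiplication pass for the time bound.
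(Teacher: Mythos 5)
Your proposal is correct and follows exactly the paper's own (one-line) proof: combine Lemma~\ref{lem:coverprod} with the fast transforms of Proposition~\ref{prop:fastzeta}, computing $\mu(\zeta(f)\cdot\zeta(g))$ in $\BigO(r^kk)$ operations. The extra details you supply (verifying the join-semilattice hypothesis and itemising the cost of each step) are all consistent with what the paper leaves implicit.
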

\begin{proof}
	Combine Lemma~\ref{lem:coverprod} with the fast evaluation algorithms of Proposition~\ref{prop:fastzeta}.
\qed
\end{proof}

We conclude this part on zeta and M\"obius transforms by a theorem on a combined covering product and convolution product that we will use in the sections to come.
\begin{theorem} \label{thrm:generalmobiousjoin}
	Let $P$ be a finite partial order where, for every $\vec{x} \in P^k$, the set $\{\vec{y} \in P^k | \vec{y} \leq \vec{x}\}$ forms a join-semilattice.
	Let $N = \Nat_{<q_1} \!\times \Nat_{<q_2} \!\times \cdots \times \Nat_{<q_l}$, and let $Q = \prod_{i=1}^l  q_i$.
	Let $f,g: P^k \times N \rightarrow \Field_p$, where $p$ is chosen appropriately.
	Define $h: P^k \times N \rightarrow \Field_p$ as follows:
	\[ 
		h(\vec{x},\vec{i}) = \sum_{\vec{y_1}\vee \vec{y_2} = \vec{x}} \sum_{\vec{j_1}+\vec{j_2}=\vec{i}} f(\vec{y_1},\vec{j_1}) \, g(\vec{y_2},\vec{j_2})
	\]
	If $P$ allows zeta and M\"obius transforms using $\BigO(|P|)$ arithmetic operations for functions $f':P \rightarrow \Field_p$, then $h$ can be computed in $\BigO(|P|^k \, Q \, (k + \log(Q)))$ arithmetic operations.
	
	In particular, if $P^k = \Nat_{<r}^k$, then $h$ can be computed in $\BigO(r^k \, Q \, (k + \log(Q)))$ arithmetic operations.
\end{theorem}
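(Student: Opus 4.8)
The plan is to decouple the two kinds of product by combining the fast zeta/M\"obius machinery (Proposition~\ref{prop:fastzeta}, Lemma~\ref{lem:fastgenericzeta}, Lemma~\ref{lem:coverprod}) with the fast multidimensional convolution machinery (Corollary~\ref{cor:noncyclicconv}), applied ``coordinate-block-wise''. Concretely, I would view $h$ as a generalised covering product over $P^k$ whose ``coefficients'' are themselves elements of the convolution algebra on $N$; equivalently, view it as a non-cyclic convolution over $N$ whose coefficients live in the covering-product algebra on $P^k$. The key identity to establish is a mixed analogue of Lemmas~\ref{lem:coverprod} and~\ref{lem:multidimconv}: if $\zeta$ denotes the zeta transform applied only to the $P^k$-coordinates and $\DFT$ the (suitably zero-padded) Fourier transform applied only to the $N$-coordinates, then
\[
	h = \mu\bigl( \InvDFT\bigl( \DFT(\zeta(f)) \cdot \DFT(\zeta(g)) \bigr) \bigr),
\]
where all transforms act on disjoint coordinate blocks and hence commute, and $\cdot$ is point-wise multiplication over the whole index set $P^k \times N$.

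The steps I would carry out, in order: (i) State that the zeta transform on the $P^k$-block and the (zero-padded) DFT on the $N$-block commute, since they act on disjoint coordinates and each is a linear combination taken along its own axes; this lets me freely interleave them. (ii) Prove the mixed convolution identity by a computation in the style of the proof of Lemma~\ref{lem:coverprod}: expand $\zeta(f)\cdot\zeta(g)$ along the $P^k$-coordinates to get $\sum_{\vec y_1,\vec y_2 \le \vec x} f(\vec y_1,\cdot)\,g(\vec y_2,\cdot)$ as a function of the $N$-coordinates, then apply the (already padded) convolution theorem Lemma~\ref{lem:multidimconv}/Corollary~\ref{cor:noncyclicconv} along $N$ to turn $\InvDFT(\DFT(\cdot)\cdot\DFT(\cdot))$ into the sum over $\vec j_1+\vec j_2=\vec i$ (no wraparound, thanks to padding), obtaining $\zeta_{P^k}(h)$; then apply $\mu$ on the $P^k$-block and invoke Lemma~\ref{lem:mobiusinversion} to recover $h$. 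Here $\zeta_{P^k}(h)$ means the zeta transform of $h$ taken coordinate-wise over the $P^k$-part only, and the uniqueness of this join-semilattice zeta transform (Lemma~\ref{lem:mobiusinversion} applied blockwise, using Lemma~\ref{lem:fastgenericzeta}) justifies the inversion. (iii) Bound the running time: the index set has $|P|^k\,Q$ entries (up to the constant $2^l$ blow-up from padding, absorbed since $l$ is fixed); the $P^k$-block zeta/M\"obius transforms cost $\BigO(|P|^k k)$ per $N$-slice by Lemma~\ref{lem:fastgenericzeta}, hence $\BigO(|P|^k Q\,k)$ total; the $N$-block forward/inverse DFTs cost $\BigO(Q\log Q)$ per $P^k$-slice by Proposition~\ref{prop:multidimfft}, hence $\BigO(|P|^k Q\log Q)$ total; point-wise multiplication is $\BigO(|P|^k Q)$. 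Summing gives $\BigO(|P|^k Q (k+\log Q))$, and specialising $P = \Nat_{<r}$ with $|P|=r$ gives the stated bound. (One can also cite Lemma~\ref{lem:combinedconv} directly for the $N$-side with $k=0$ there, to package the padding argument.)

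The main obstacle is the bookkeeping around the two-block structure: making precise that ``$\zeta$ on one block'' and ``$\DFT$ on the other block'' are well-defined, commuting operators on functions $P^k\times N\to\Field_p$, and that applying $\mu$ on the $P^k$-block to $\zeta_{P^k}(h)$ genuinely returns $h$ rather than merely something with the same blockwise zeta transform --- this is exactly where the join-semilattice hypothesis and the uniqueness half of M\"obius inversion (Lemma~\ref{lem:mobiusinversion}) are needed, and it must be verified uniformly across all $N$-slices. A secondary, purely technical point is the zero-padding: one must pad each $\Nat_{<q_i}$ to $\Int_{2q_i}$ (or any $\Int_{s_i}$ with $2q_i \le s_i \le cq_i$) so that the cyclic convolution from Lemma~\ref{lem:multidimconv} computes the intended non-cyclic sum $\vec j_1+\vec j_2=\vec i$, and note that since $l$ is fixed the factor $2^l$ is a constant and the ``$+l$'' term is dominated, so it does not appear in the final bound --- this is the same maneuver as in Lemma~\ref{lem:combinedconv}, and I would simply reference it rather than repeat it.
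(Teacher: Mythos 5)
Your proposal is correct and follows essentially the same route as the paper's proof: establish that the blockwise zeta transform turns $h$ into $\zeta(h(-,\vec{i}))(\vec{x}) = \sum_{\vec{j_1}+\vec{j_2}=\vec{i}} (\zeta(f(-,\vec{j_1}))\cdot\zeta(g(-,\vec{j_2})))(\vec{x})$, evaluate the $N$-side by padded FFT convolution (Corollary~\ref{cor:noncyclicconv}) per $P^k$-slice, and invert with the M\"obius transform per $N$-slice, with the identical $\BigO(|P|^k Q(k+\log Q))$ accounting. The only cosmetic difference is that you write the middle step as an explicit global $\DFT$/pointwise-product/$\InvDFT$ rather than invoking the convolution corollary slice-by-slice.
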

\begin{proof}
	For the functions $f$, $g$, $h$, with domains $P^k \times N$, we write $\zeta(f(-,\vec{i}))(\vec{x}) = \sum_{\vec{y}\leq\vec{x}} f(\vec{y},\vec{i})$ to fix the second component when using the zeta transform.
	Following the same reasoning as in Equations~\ref{eq:coverprod2} and Equation~\ref{eq:coverprod1} in the proof of Lemma~\ref{lem:coverprod}, one easily obtains:
	\begin{align*} 
		\zeta(h(-,\vec{i}))(\vec{x}) 
		& = \sum_{\vec{y} \leq \vec{x}} \sum_{\vec{z_1}\vee \vec{z_2} = \vec{y}} \sum_{\vec{j_1}+\vec{j_2}=\vec{i}} f(\vec{z_1},\vec{j_1}) \, g(\vec{z_2},\vec{j_2})
		= \sum_{\vec{j_1}+\vec{j_2}=\vec{i}} \!\left( \sum_{\vec{y} \leq \vec{x}} \sum_{\vec{z_1}\vee \vec{z_2} = \vec{y}} f(\vec{z_1},\vec{j_1}) \, g(\vec{z_2},\vec{j_2}) \right) \\
		&= \sum_{\vec{j_1}+\vec{j_2}=\vec{i}} \!\left( \sum_{\vec{z_1}, \vec{z_2} \leq x} f(\vec{z_1},\vec{j_1}) \, g(\vec{z_2},\vec{j_2}) \right) 
		= \sum_{\vec{j_1}+\vec{j_2}=\vec{i}} \left( \zeta(f(-,\vec{j_1}) )\cdot \zeta(g(-,\vec{j_2})) \right)(\vec{x})
	\end{align*}
	Consequently, we can compute $h$ by evaluating this expression and taking the M\"obius transform.
	
	That is, we can compute $h$ by taking the following steps:
	\begin{enumerate}
		\item Compute a fast zeta transform of $f(-,\vec{j})$ and $g(-,\vec{j})$, for each fixed $\vec{j}\in N$ in $\BigO(|P|^kk)$ arithmetic operations using Lemma~\ref{lem:fastgenericzeta}.
		This takes $\BigO(|P|^k Q k)$ arithmetic operations in total.
		For each $\vec{j_1}, \vec{j_2} \in N$ and $\vec{x} \in P^k$, we now have $\zeta(f(-,\vec{j_1}))(\vec{x})$ and $\zeta(g(-,\vec{j_2}))(\vec{x})$.
		\item For each fixed $\vec{x} \in P^k$, compute the sum over all $\vec{j_1} + \vec{j_2} = \vec{i}$ using the fast convolution algorithm of Corollary~\ref{cor:noncyclicconv} in $\BigO(Q \log(Q))$ arithmetic operations.
		This takes $\BigO(|P|^k Q \log(Q))$ arithmetic operations in total.
		For each $\vec{x} \in P^k$ and $\vec{i}\in N$, we now have $\zeta(h(-,\vec{i}))(\vec{x})$.
		\item Finally, for each fixed $\vec{i}$, take the M\"obius transform of $\zeta(h(-,\vec{i}))(\vec{x})$ in $\BigO(|P|^kk)$ time using Lemma~\ref{lem:fastgenericzeta}.
		Like the first step, this takes $\BigO(|P|^k Q k)$ arithmetic operations in total.
		As a result, we have the required values $h(\vec{x},\vec{i})$.
	\end{enumerate}
	The running time follows by summing the times required for each of the three steps.
\qed
\end{proof}

\subsection{Modular Arithmetic.} \label{sec:choosep}
As discussed in the introduction to this section, we embed the integers $\{0,1,\ldots,M\}$ in the larger field $\Field_p$, for a prime $p > M$.
However, we need to choose $p$ 'appropriately' such that the resulting field~$\Field_p$ has the required root(s) of unity.
Below we give a short description of how this can be done.

Let $r_1, r_2, \ldots, r_k$ be distinct integers.
We look for a prime number~$p$ such that $\Field_p$ contains, for all~$i$, an $r_i$-th root of unity.
To find the prime $p$, we consider candidates $m_j = 1 + j R$, where $R = \prod_{i=1}^k r_i$, for $j$ large enough such that $m_j > M$.
By the prime number theorem for arithmetic progressions, the sequence $(m_j)_{j=1}^\infty$ contains $\BigO(\frac{1}{\phi(R)} \frac{x}{\ln(x)})$ prime numbers less than $x$, where $\phi$ is Euler's totient function.
Since prime testing can be done in polynomial time, we can look for the first candidate $m_j > M$ that is prime and choose $p$ as such.

By Euler's theorem, for any $x \in \Field_p$, with $p$ chosen as in the previous paragraph: $1 = x^{\phi(p)} = x^{p-1} = x^{jR}$.
As such, for any $x \in \Field_p$, $x^l$ with $l = \frac{jR}{r_i}$ is an $r_i$-th root of unity if $(x^l)^i \not= 1$ for all $i < r_i$.
Finding an appropriate $x$ is not difficult for small $r_i$ as an $\frac{1}{r_i}$-th fraction of all elements $x \in \Field_p$ results in $x^l$ being an $r_i$-th root of unity.
To see this, consider a generator $g$ of the multiplicative subgroup of $\Field_p$.
The sequence $g^1,g^2,\ldots,g^{p-1}$ equals all elements in $\Field_p \setminus \{0\}$.
Putting this sequence to the power~$l$ gives $g^l, g^{2l}, \ldots, g^{(p-1)l}$ which, by choice of~$l$, equals $\omega_{r_i}^1, \omega_{r_i}^2, \ldots, \omega_{r_i}^{(p-1)}$, where $\omega_{r_i}$ is an $r_i$-th root of unity in $\Field_p$.
Clearly, this forms $l$ times the sequence $\omega_{r_i}^1, \omega_{r_i}^2, \ldots, \omega_{r_i}^{r_i}$, as  $\omega_{r_i}^{r_i} = 1$.

\section{Fast Join Operations} \label{sec:joins}
Having introduced the basics of dynamic programming on tree decompositions for $[\sigma,\rho]$-domination problems in Section~\ref{sec:prelim}, and the basics of the fast Fourier and fast M\"obius transforms in Section~\ref{sec:transforms}, we are now ready to apply the fast transforms to obtain fast join operations.
We will survey some known techniques based on both transforms.

Recall that, to compute the memoisation table for a join node $i$ of a nice tree decomposition, we are given two memoisation tables $A_l$ and $A_r$ corresponding to the (left and right) child nodes of $i$.
These tables store a number for each state colouring $\vec{c}$ with labels from $C$ (as defined in Section~\ref{sec:sigmarho}): this number indicates the existence (0 or 1) of a, the size of a, and/or the number of (minimum/maximum size) partial solution(s) on $G_l$ and $G_r$ for the partial-solution equivalence class corresponding to~$\vec{c}$.
Here, a partial-solution equivalence class is uniquely identified by~$\vec{c}$ in the following way: a label for a vertex~$v$ in $\vec{c}$ defines whether $v$ is in the solution set or not, and it defines how many neighbours $v$ has in the solution set (see Section~\ref{sec:sigmarho}).
Our goal is to compute $A_i$.

As stated in Section~\ref{sec:sigmarho}, a trivial algorithm would loop over all combinations of state colourings $\vec{c}_l$, $\vec{c}_r$ of $X_i$ that agree on which vertices are in the solution set $D$.
Then, the algorithm considers two corresponding partial solutions $D_l$ on $G_l$ and $D_r$ on $G_r$, and it infers the state colouring $\vec{c}_i$ of the partial solution $D_l \cup D_r$ on $G_i$.
Over all constructed partial solutions on $G_i$, it stores in $A_i$ the minimum size of a solution for each equivalence class~$\vec{c}_i$.
It is not hard to show that one does not need to consider the partial solutions representing an equivalence class: given the state colourings $\vec{c}_l$ and $\vec{c}_r$, the state colouring of $D_l \cup D_r$ can be inferred directly.
This is done as follows.
Since $G_l$, $G_r$ and $G_i = G_l \oplus G_r$ do not contain edges between vertices in $X_i$, for any vertex $v \in X_i$, the number of neighbours in~$D_l$ and $D_r$ add up to the resulting number in $D_r \cup D_l$.
As such, for any vertex $v$, if $v$ has label $|l|_\sigma$ in $\vec{c}_l$ and $|l'|_\sigma$ in $\vec{c}_l$, then any combined partial solution has label $|l + l'|_\sigma$ for $v$ in the state colouring that identifies the equivalence class (or label $|\!\geq\!\ell|_\sigma$ when $l + l' \geq \ell$ and the $|\!\geq\!\ell|_\sigma$ label is in $C_\sigma$).
The same holds for labels $|l|_\rho$ from $C_\rho$.

We find it insightful to make tables, which we call `join tables', that visualise the resulting label of a vertex in $\vec{c}_i$ given its labels in $\vec{c}_l$ and $\vec{c}_r$: see Figure~\ref{fig:jointablesds}.
In these tables, the patterns emerge that our fast join operations must fulfil.
Here, one can see the running time that a trivial algorithm uses to perform the join: every non-empty cell represents a combination from $A_l$ and $A_r$ that can be made on each vertex coordinate (each vertex in $X_i$).
As a result, this trivial algorithm performs the join in $\BigOs( x^k )$ time, where $k = |X_i|$ and $x$ is the number of non-empty cells in the join table.
\begin{figure}[tb]
	\begin{center}
	\scalebox{.9}{
		\begin{tabular}{c|ccc|} 
			& $|\!\geq\!0|_\sigma$ & $|0|_\rho$ & $|\!\geq\!1|_\rho$ \\ \hline 
			$|\!\geq\!0|_\sigma$ & $|\!\geq\!0|_\sigma$ & & \\  
			$|0|_\rho$ & & $|0|_\rho$ & $|\!\geq\!1|_\rho$ \\ 
			$|\!\geq\!1|_\rho$ & & $|\!\geq\!1|_\rho$ & $|\!\geq\!1|_\rho$ \\ \hline
		\end{tabular}
		\hspace{0.2cm}
		\begin{tabular}{c|ccc|} 
			& $|0|_\sigma$ & $|0|_\rho$ & $|1|_\rho$ \\ \hline 
			$|0|_\sigma$ & $|0|_\sigma$ & & \\  
			$|0|_\rho$ & & $|0|_\rho$ & $|1|_\rho$ \\ 
			$|1|_\rho$ & & $|1|_\rho$ & \\ \hline
		\end{tabular}
		\hspace{0.2cm}	
		\begin{tabular}{c|ccccc|} 
			& $|0|_\sigma$ & $|1|_\sigma$ & $|2|_\sigma$ & $|3|_\sigma$ & $|\!\geq\!0|_\rho$ \\ \hline 
			$|0|_\sigma$ & $|0|_\sigma$ & $|1|_\sigma$ & $|2|_\sigma$ & $|3|_\sigma$ & \\ 
			$|1|_\sigma$ & $|1|_\sigma$ & $|2|_\sigma$ & $|3|_\sigma$ & & \\
			$|2|_\sigma$ & $|2|_\sigma$ & $|3|_\sigma$ & & & \\
			$|3|_\sigma$ & $|3|_\sigma$ & & & & \\
			$|\!\geq\!0|_\rho$ & & & & & $|\!\geq\!0|_\rho$ \\ \hline
		\end{tabular}
		\hspace{0.2cm}	
		\begin{tabular}{c|cccc|} 
			& $|0|_\sigma$ & $|\!\geq\!1|_\sigma$ & $|0|_\rho$ & $|\!\geq\!1|_\rho$ \\ \hline 
			$|0|_\sigma$ & $|0|_\sigma$ & $|\!\geq\!1|_\sigma$ & & \\
			$|\!\geq\!1|_\sigma$ & $|\!\geq\!1|_\sigma$ & $|\!\geq\!1|_\sigma$ & & \\
			$|0|_\rho$ & & & $|0|_\rho$ & $|\!\geq\!1|_\rho$ \\
			$|\!\geq\!1|_\rho$ & & & $|\!\geq\!1|_\rho$ & $|\!\geq\!1|_\rho$ \\ \hline	
		\end{tabular}
	}\end{center}
	\caption{Join tables corresponding, from left to right, to {\sc Dominating Set}, {\sc Strong Stable Set}, {\sc 3-Regular Subgraph}, and {\sc Total Dominating Set}.}
	\label{fig:jointablesds}
\end{figure}

\subsection{M\"obius Tranforms for Dominating Set and Independent Dominating Set}
For the {\sc Dominating Set} problem, consider the first (leftmost) join table in Figure~\ref{fig:jointablesds}.
Notice that this is also the join table for the {\sc Independent Dominating Set} problem.

We first consider the (non-optimisation) counting variant of {\sc Dominating Set} or {\sc Independent Dominating Set}: the table entries $A_i(\vec{c})$ represent the number of partial solutions to (independent) dominating set in the equivalence class represented by~$\vec{c}$ (partial solutions of any size).
\begin{lemma}[based on~\cite{vanRooijBR09}] \label{lem:countds}
	The join for the counting variant of {\sc Dominating Set} can be computed in $\BigO(3^kk)$ arithmetic operations.
\end{lemma}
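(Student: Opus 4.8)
The plan is to reduce the {\sc Dominating Set} join to a family of generalised covering products over Boolean lattices, one for each choice of which vertices of $X_i$ lie in the solution set, and then to bound the total cost by the binomial theorem.

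First I would record the recurrence the join must satisfy. Since $V_i\setminus X_i$ is the disjoint union of $V_l\setminus X_l$ and $V_r\setminus X_r$ and $G_i=G_l\oplus G_r$ has no edges inside $X_i$, every partial solution $D$ on $G_i$ decomposes uniquely as $D=D_l\cup D_r$ with $D_l=D\cap V_l$, $D_r=D\cap V_r$; conversely every such pair that agrees on $X_i\cap D$ glues to a partial solution on $G_i$. For a vertex $v\notin D$ the neighbours of $v$ in $D$ lying in $V_l\setminus X_l$ and those in $V_r\setminus X_r$ together account for all of them, so $v$ carries label $|\!\geq\!1|_\rho$ in $\vec{c}_i$ iff it does in $\vec{c}_l$ or in $\vec{c}_r$; this is exactly the leftmost join table of Figure~\ref{fig:jointablesds}. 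As we count all partial solutions, this yields
\[
	A_i(\vec{c}_i)=\sum_{(\vec{c}_l,\vec{c}_r)\to\vec{c}_i}A_l(\vec{c}_l)\,A_r(\vec{c}_r),
\]
with $(\vec{c}_l,\vec{c}_r)\to\vec{c}_i$ meaning the join table sends the pair to $\vec{c}_i$ on every coordinate; all computations are done in $\Field_p$ for a prime $p$ exceeding $2^n$, which bounds every count that appears.

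Next I would exploit the block structure of the join table: a $\sigma$-label combines only with a $\sigma$-label (yielding a $\sigma$-label) and likewise for $\rho$-labels, so $\vec{c}_l$, $\vec{c}_r$ and $\vec{c}_i$ must all carry $\sigma$-labels on exactly the same set $W\subseteq X_i$. I would therefore split each table by $W$: let $A_l^W$ be the restriction of $A_l$ to colourings with $\sigma$-labels precisely on $W$, and similarly $A_r^W,A_i^W$, and compute each $A_i^W$ separately. Fixing $W$ forces the unique label $|\!\geq\!0|_\sigma$ on every coordinate of $W$, so $A_l^W,A_r^W,A_i^W$ are naturally functions on $\{|0|_\rho,|\!\geq\!1|_\rho\}^{X_i\setminus W}$, which I identify with $\Nat_{<2}^{\,j}$ (where $j=|X_i\setminus W|$) via $|0|_\rho\mapsto 0$, $|\!\geq\!1|_\rho\mapsto 1$. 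Under this identification the ``$\geq 1$ on the left or on the right'' rule is precisely the coordinate-wise join (set union) of $\Nat_{<2}^{\,j}$, so $A_i^W(\vec{u})=\sum_{\vec{u}_l\vee\vec{u}_r=\vec{u}}A_l^W(\vec{u}_l)\,A_r^W(\vec{u}_r)$ is the generalised covering product of Lemma~\ref{lem:coverprod}, computable in $\BigO(2^{j}j)$ arithmetic operations by Corollary~\ref{cor:mobiusjoin}.

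Finally I would sum over $W$. There are $\binom{k}{k-j}$ subsets $W$ with $|X_i\setminus W|=j$, so the total is $\sum_{j=0}^{k}\binom{k}{j}\,\BigO(2^{j}j)=\BigO\!\bigl(k\sum_{j=0}^{k}\binom{k}{j}2^{j}\bigr)=\BigO(k\,3^{k})$ by the binomial theorem, and reassembling the $A_i^W$ into $A_i$ costs only $\BigO(3^k)$ since the total size of all these tables is $\sum_W 2^{k-|W|}=3^k$ and each $\vec{c}_i$ belongs to a unique $W$. I do not expect a serious obstacle; the points needing care are the unique-decomposition argument justifying the product form of the recurrence, and verifying that after the $W$-split the surviving operation is genuinely a covering product over a Boolean lattice so that Corollary~\ref{cor:mobiusjoin} applies verbatim.
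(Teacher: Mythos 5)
Your proposal is correct and follows essentially the same route as the paper's proof: partition the state colourings by the set of $\sigma$-labelled vertices, recognise the remaining $\rho$-label join as the covering product on $\Nat_{<2}^{k'}$ handled by Corollary~\ref{cor:mobiusjoin}, and sum the costs via the binomial theorem. The only difference is that you spell out the unique-decomposition justification of the product recurrence and the reassembly cost, which the paper leaves implicit.
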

\begin{proof}
	We first loop over all $2^k$ subsets $X_\sigma \subseteq X_i$, and for each subset $X_\sigma$, we fix the labels of vertices in $X_\sigma$ in $\vec{c}_l$, $\vec{c}_r$ and $\vec{c}_i$ to $|\!\geq\!0|_\sigma$.
	We then consider the subproblem that remains using only the labels $|0|_\rho$ and $|\!\geq\!1|_\rho$. 
	Let $X' = X_i \setminus X_\sigma$ be the vertices without fixed label, let $k' = |X'|$, and let $A_i'$, $A_l'$, $A_r'$ be the memoisation tables $A_i$, $A_l$, and $A_r$ after fixing the vertices with label $|\!\geq\!0|_\sigma$, i.e., $A_i'$, $A_l'$, $A_r'$ are indexed by state colourings~$\vec{c}_l'$ and~$\vec{c}_r'$ on $X_i'$ using only $|0|_\rho$ and $|\!\geq\!1|_\rho$-labels.
	
	To compute the join, we now essentially want to take a coordinate-wise maximum of the state colourings~$\vec{c}_l'$ and~$\vec{c}_r'$ (identifying $|0|_\rho$ with $0$ and $|\!\geq\!1|_\rho$ with $1$) to obtain the resulting state colouring~$\vec{c}_i'$ on $X_i'$.
	That is, to compute $A_i'(\vec{c}_i')$, we want to efficiently evaluate the following formula:
	\begin{equation} \label{eq:countds}
		A_i'(\vec{c}_i') = \sum_{\vec{c}_l' \vee \vec{c}_r' = \vec{c}_i'} A_l'(\vec{c}_l') A_r'(\vec{c}_r')
	\end{equation}
	where $\vee$ is the above discussed coordinate-wise maximum (identifying $C_\rho$ with $\Nat_{<2}^k$).
	Observe that this corresponds exactly to the covering product, generalised in Lemma~\ref{lem:coverprod} with $P = \Nat_{<2}^k$ and $f = A_l'$, $g = A_r'$.
	Consequently, this join can be computed in $\BigO(2^{k'}k')$ arithmetic operations by Corollary~\ref{cor:mobiusjoin}.
	Summing up the running time over all $2^k$ subsets of fixed labels, we obtain a running time of:
	\[ \BigO \left( \sum_{X' \subseteq X_i} 2^{|X'|}|X'| \right) = \BigO \left( \sum_{k'=0}^k \binom{k}{k'} 2^{k'}k' \right) = \BigO \left( k \sum_{k'=0}^k \binom{k}{k'} 2^{k'} 1^{k-k'} \right) = \BigO( k (2 + 1)^k ) = \BigO( 3^kk ) \]
	where we group the subsets $X' =  X_i \setminus X_\sigma$ of the same size and then use the binomial theorem.
\qed
\end{proof}
\begin{corollary}
	Given a graph~$G$ with a tree decomposition~$T$ of $G$ of width~$t$, the number of (independent) dominating sets can be computed in $\BigO(3^ttn)$ arithmetic operations on $\BigO(n)$-bit numbers.
\end{corollary}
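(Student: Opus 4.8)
The plan is to combine Lemma~\ref{lem:countds} with Lemma~\ref{lem:bottleneck} (and its straightforward extension in Proposition~\ref{prop:bottleneck2}, since counting is one of the named variants). Lemma~\ref{lem:countds} provides exactly the join-node algorithm~$\mathcal{A}$ required as input to Lemma~\ref{lem:bottleneck}: for a join node~$i$ with $|X_i|=k$, it computes $A_i$ from $A_l$ and $A_r$ in $\BigO(3^k k)$ arithmetic operations, so we may take $f(n,k) = 3^k k$. Here $s = |C| = 3$ for both {\sc Dominating Set} (with $C = \{|\!\geq\!0|_\sigma, |0|_\rho, |\!\geq\!1|_\rho\}$) and {\sc Independent Dominating Set} (whose join table, as noted, is identical).

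Feeding $s = 3$ and $f(n,k) = 3^k k$ into Lemma~\ref{lem:bottleneck} yields a total of $\BigO\bigl( (s^{t+1} t + f(n, t+1)) n \bigr) = \BigO\bigl( (3^{t+1} t + 3^{t+1}(t+1)) n \bigr) = \BigO(3^t t n)$ arithmetic operations, which is the claimed bound. The only additional point to address is the word size: in the counting variant the table entries count subsets, so they are bounded by $M = 2^n$, hence each is an $\BigO(n)$-bit number, and we work in $\Field_p$ for a prime $p > 2^n$, which still has $\BigO(n)$-bit elements. Since Corollary~\ref{cor:mobiusjoin} (underlying Lemma~\ref{lem:countds}) only uses ring operations and requires no roots of unity, no special ``appropriately chosen'' prime is needed; any prime $p > 2^n$ works, so every arithmetic operation is on $\BigO(n)$-bit numbers. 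For {\sc Independent Dominating Set}, one should note only that the introduce/forget recurrences still do constant work per entry and the join table coincides with that of {\sc Dominating Set}, so the same count applies verbatim.

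I do not anticipate a real obstacle here: the statement is a packaging corollary. The mild care point is simply to invoke Proposition~\ref{prop:bottleneck2} rather than Lemma~\ref{lem:bottleneck} directly (so that the counting variant is covered), and to observe that the $\BigO(n)$-bit claim follows because counting $[\sigma,\rho]$-dominating sets over subgraphs~$G_i$ gives values at most $2^n$, matching the bound on $M$ recorded after the definition of~$A_i$.
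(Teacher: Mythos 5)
Your proposal is correct and matches the paper's own proof, which likewise just plugs Lemma~\ref{lem:countds} into Proposition~\ref{prop:bottleneck2} and notes that the counts are at most $2^n$, hence $\BigO(n)$-bit numbers suffice. The extra remarks on the choice of prime and on {\sc Independent Dominating Set} are harmless elaborations of the same argument.
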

\begin{proof}
	Plug Lemma~\ref{lem:countds} into Lemma~\ref{prop:bottleneck2}.
	We can use $\BigO(n)$-bit numbers as the result is at most $2^n$.
\qed
\end{proof}

The above construction does not work directly for the minimisation version of {\sc (Independent) Dominating Set}, as then we are no longer counting combinations of partial solutions: we want to take the minimum over the sum of partial solution sizes.
I.e., instead of Equation~\ref{eq:countds}, we need:
\begin{equation} \label{eq:minds}
	A_i'(\vec{c}_i') = \min_{\vec{c}_l' \vee \vec{c}_r' = \vec{c}_i'} A_l'(\vec{c}_l') + A_r'(\vec{c}_r')
\end{equation}
To obtain a similar result for the minimisation versions, we will embed this into a counting structure.
\begin{lemma}[based on~\cite{vanRooijBR09}] \label{lem:minids}
	The join for the minimisation variant of {\sc Dominating Set} can be computed in $\BigO(3^kn(k + \log(n)))$ arithmetic operations.
\end{lemma}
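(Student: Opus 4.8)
The plan is to reduce the minimisation join of \textsc{Dominating Set} to the counting join proved in Lemma~\ref{lem:countds}, by encoding ``minimum size'' information inside polynomials over $\Field_p$. First I would follow the same outer loop as in the proof of Lemma~\ref{lem:countds}: loop over all $2^k$ subsets $X_\sigma \subseteq X_i$, fix those vertices to the label $|\!\geq\!0|_\sigma$, and reduce to the subproblem on $X' = X_i \setminus X_\sigma$ with only the two labels $|0|_\rho$ and $|\!\geq\!1|_\rho$, identifying $C_\rho^{X'}$ with $\Nat_{<2}^{k'}$ where $k' = |X'|$. The only thing that changes is the semiring in which the covering product is evaluated: Equation~\ref{eq:minds} replaces addition by $\min$ and multiplication by $+$.

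The key trick is the standard ``min-sum to sum-product'' embedding: represent a partial-solution value $a \in \{0,1,\ldots,n\}$ (a size, after the size-normalisation discussed in Section~\ref{sec:sigmarho}, partial-solution sizes are bounded by $n$) by the monomial $z^a$, i.e.\ by the vector of coefficients of a polynomial of degree at most $n$ over $\Field_p$. Then ``take the minimum of a set of values'' becomes ``add the corresponding monomials and read off the lowest-degree term with a nonzero coefficient'', and ``sum two values $a_l + a_r$'' becomes ``multiply the monomials $z^{a_l} \cdot z^{a_r}$''. So I would define tables $\hat A_l', \hat A_r' : \Nat_{<2}^{k'} \times \Nat_{<n+1} \to \Field_p$ by $\hat A_x'(\vec c', a) = [\,A_x'(\vec c') = a\,]$ for $x \in \{l,r\}$ (and the all-zero polynomial when $A_x'(\vec c') = \infty$). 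Applying Theorem~\ref{thrm:generalmobiousjoin} with $P^k = \Nat_{<2}^{k'}$ and $N = \Nat_{<n+1}$ computes
\[
	\hat h(\vec c_i', a) = \sum_{\vec c_l' \vee \vec c_r' = \vec c_i'} \; \sum_{a_l + a_r = a} \hat A_l'(\vec c_l', a_l)\, \hat A_r'(\vec c_r', a_r),
\]
which counts, for each target colouring $\vec c_i'$ and each target size $a$, the number of ways to write a size-$a$ combined partial solution as a $\vee$-compatible pair; in particular $\hat h(\vec c_i', a) \neq 0$ in $\Field_p$ precisely when some such decomposition of size $a$ exists, and $A_i'(\vec c_i') = \min\{a : \hat h(\vec c_i', a) \neq 0\}$. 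Here I must be a little careful that ``$\hat h(\vec c_i',a) \neq 0$ in $\Field_p$'' really does detect a nonzero count: the counts are at most $2^n$, so choosing the prime $p > 2^n$ (as in Section~\ref{sec:choosep}, and compatible with the root-of-unity requirements of Theorem~\ref{thrm:generalmobiousjoin}) avoids spurious cancellation modulo~$p$.

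For the running time, Theorem~\ref{thrm:generalmobiousjoin} with $r=2$, $Q = n+1$, $k = k'$, $l = 1$ gives $\BigO(2^{k'} n (k' + \log n))$ arithmetic operations per subset $X_\sigma$, on $\BigO(n)$-bit numbers. Summing over all $2^k$ subsets exactly as in Lemma~\ref{lem:countds} — grouping by $|X'| = k'$ and applying the binomial theorem — yields
\[
	\BigO\!\left( \sum_{k'=0}^{k} \binom{k}{k'} 2^{k'} n (k' + \log n) \right) = \BigO\!\left( n(k+\log n) \sum_{k'=0}^{k} \binom{k}{k'} 2^{k'} \right) = \BigO(3^k n (k + \log n)),
\]
as claimed. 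The main obstacle I anticipate is not the algebra but the bookkeeping: making sure the normalised sizes really stay in $\{0,\ldots,n\}$ so the polynomial degree bound (and hence the factor $n$, not $2n$) is correct, handling the $\infty$ entries cleanly, and confirming that the prime $p$ can simultaneously be taken $> 2^n$ and ``chosen appropriately'' for the Fourier/M\"obius machinery invoked by Theorem~\ref{thrm:generalmobiousjoin}; all of these are addressed by the discussion in Section~\ref{sec:choosep}, so they should go through routinely.
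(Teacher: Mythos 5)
Your proposal is correct and follows essentially the same route as the paper: expanding the tables with the solution size as an extra index (your $\hat A_x'(\vec c',a)=[A_x'(\vec c')=a]$ is exactly the paper's Equation~\ref{eq:minds2}), evaluating the combined covering/convolution product via Theorem~\ref{thrm:generalmobiousjoin} with $N=\Nat_{<n+1}$, and extracting the minimum index with a nonzero entry; the monomial $z^a$ language is just a re-description of this. Your choice of $p>2^n$ is more conservative than necessary (the counts are bounded by the number of colouring pairs, so $\BigO(k+\log n)$-bit numbers suffice, as the paper notes in Corollary~\ref{cor:mids}), but this does not affect the count of arithmetic operations claimed by the lemma.
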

\begin{proof}
	The proof is identical to the proof of Lemma~\ref{lem:countds}, except that we need a different fast evaluation algorithm: one that corresponds to Equation~\ref{eq:minds}.
	To this end, we expand the memoisation tables $A_i'$, $A_l'$ and $A_r'$ by having the solution size as part of the index of the table.
	That is, we let:
	\begin{equation} \label{eq:minds2}
		A_l'( \vec{c}_l', \kappa_l ) = \left\{ \begin{array}{ll}
			1 & \textrm{ if $A_l'( \vec{c}_l' ) = \kappa_l$} \\
			0 & \textrm{ otherwise} 
		\end{array} \right. 
	\end{equation} 
	and similarly for $A_r'$.
	Let $A_i'(\vec{c}_i',\kappa_i)$ be defined as follows, which can be computed using Theorem~\ref{thrm:generalmobiousjoin}:
	\begin{equation} \label{eq:mindsjoin}
		A_i'(\vec{c}_i',\kappa_i) = \sum_{\vec{c}_l' \vee \vec{c}_r' = \vec{c}_i'} \sum_{\kappa_l + \kappa_r = \kappa_i} A_l'(\vec{c}_l',\kappa_l) \, A_r'(\vec{c}_r',\kappa_r)
	\end{equation}	
	It is easy to see that $A_i'(\vec{c}_i',\kappa_i) > 0$ if and only if there exists $\vec{c}_l', \kappa_l, \vec{c}_r', \kappa_r$ such that $\vec{c}_l' \vee \vec{c}_r = \vec{c}_i$ and $A_l'(\vec{c}_l) = \kappa_l$ $A_r'(\vec{c}_r) = \kappa_r$.
	This allows us to obtain the result required by Equation~\ref{eq:minds} by setting $A_i'(\vec{c}_i)$ equal to the minimum value of $\kappa_i$ for which $A_i'(\vec{c}_i,\kappa_i) > 0$.
	
	Observe that $\kappa_i$ can range between 0 and $n$.
	Therefore, when we apply Theorem~\ref{thrm:generalmobiousjoin} with $N=\Nat_{<n+1}$, we can perform the join in $\BigO(3^kn(k + \log(n)))$ arithmetic operations.
	\qed
\end{proof}
\begin{corollary} \label{cor:mids}
	Given a graph~$G$ with a tree decomposition~$T$ of $G$ of width~$t$, {\sc Independent Dominating Set} can be solved in $\BigO(3^tn^2(t + \log(n)))$ arithmetic operations on $\BigO(t + \log(n))$-bit numbers.
\end{corollary}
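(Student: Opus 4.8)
The plan is to combine Lemma~\ref{lem:minids} with Proposition~\ref{prop:bottleneck2}, after first observing that the join algorithm for {\sc Dominating Set} transfers without change to {\sc Independent Dominating Set}. As remarked just below Figure~\ref{fig:jointablesds}, the two problems share the same join table: for {\sc Independent Dominating Set} one has $\sigma=\{0\}$ and $\rho=\Nat\setminus\{0\}$, so $C_\sigma=\{|0|_\sigma\}$ and $C_\rho=\{|0|_\rho,|\!\geq\!1|_\rho\}$, giving $s=|C|=3$ exactly as for {\sc Dominating Set}; on the single $\sigma$-label the only admissible combination sends the identical pair to itself, and on $C_\rho\cong\Nat_{<2}$ the combination is the coordinate-wise maximum. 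Since the proof of Lemma~\ref{lem:minids} (through Lemma~\ref{lem:countds}) uses only this structure --- loop over the $2^k$ choices of the set $X_\sigma$ of $\sigma$-labelled vertices of $X_i$, fix those labels, and on the remaining coordinates apply the counting-embedded generalised covering product of Theorem~\ref{thrm:generalmobiousjoin} with $P=\Nat_{<2}$ and $N=\Nat_{<n+1}$ --- it applies verbatim. Hence the join for the minimisation variant of {\sc Independent Dominating Set} on a node with $|X_i|=k$ costs $\BigO(3^k n(k+\log n))$ arithmetic operations.

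I would then substitute $f(n,k)=3^k n(k+\log n)$ and $s=3$ into Proposition~\ref{prop:bottleneck2}, obtaining a total of
\[ \BigO((s^{t+1}t+f(n,t+1))\,n)=\BigO((3^{t+1}t+3^{t+1}n(t+1+\log n))\,n)=\BigO(3^t n^2(t+\log n)) \]
arithmetic operations, which is the claimed bound; here the cost $\BigO(s^kk)$ of each leaf, introduce, forget and root node, summed over the $\BigO(n)$ nodes of the nice tree decomposition, is absorbed exactly as in the proof of Lemma~\ref{lem:bottleneck}.

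For the bit length, every integer the algorithm must reconstruct is of magnitude $2^{\BigO(t+\log n)}$: a partial-solution size $\kappa$ is at most $n$, and a count $A_i'(\vec{c}_i',\kappa_i)$ from Equation~\ref{eq:mindsjoin} is at most the number of pairs $(\vec{c}_l',\vec{c}_r')$ with $\vec{c}_l'\vee\vec{c}_r'=\vec{c}_i'$, hence at most $3^{k}\le 3^{t+1}$; the intermediate values of the fast zeta, convolution and M\"obius transforms inside Theorem~\ref{thrm:generalmobiousjoin} exceed this only by factors $2^{\BigO(k)}$ and $\BigO(n)$, so they remain $2^{\BigO(t+\log n)}$. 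By the discussion in Section~\ref{sec:choosep} we can then pick a prime $p$ exceeding this bound with $\log p=\BigO(t+\log n)$ such that $\Field_p$ contains the $2(n+1)$-th roots of unity required by the convolution, so all arithmetic is on $\BigO(t+\log n)$-bit numbers.

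The only step that needs genuine care is this last one: bounding the intermediate quantities that arise inside the zeta, M\"obius and Fourier transforms of Theorem~\ref{thrm:generalmobiousjoin} so that they stay within $\BigO(t+\log n)$ bits, and simultaneously meeting the root-of-unity requirement when choosing $p$. The remainder is a routine substitution of one running-time bound into another, together with the observation that {\sc Dominating Set} and {\sc Independent Dominating Set} have the same join table and the same value $s=3$.
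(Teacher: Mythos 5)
Your proposal is correct and follows essentially the same route as the paper: plug the join bound of Lemma~\ref{lem:minids} into Lemma~\ref{lem:bottleneck} (the paper already notes that {\sc Independent Dominating Set} shares the join table of {\sc Dominating Set}, so the lemma applies directly), and bound the numbers by $\BigO(\log n)$ bits for solution sizes plus $\BigO(k)$ bits for the counts in Equation~\ref{eq:mindsjoin}. Your extra detail on choosing the prime $p$ with the required roots of unity is a harmless elaboration of the paper's Section~\ref{sec:choosep} discussion.
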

\begin{proof}
	Plug Lemma~\ref{lem:minids} into Lemma~\ref{lem:bottleneck}.
	We need $\log(n)$ bit numbers for the sizes of partial solutions, while the sums in Equation~\ref{eq:mindsjoin} can require up to $\BigO(k)$-bit numbers.
	\qed
\end{proof}

We can gain linear dependence on $n$ for {\sc Dominating Set} by using a replacement property (see also~\cite{vanRooijBR09}) that holds both for {\sc Dominating Set} and {\sc Total Dominating Set}.
\begin{definition}[replacement property for partial solutions] \label{def:replacementprop}
	An optimisation problem $\mathcal{P}$ has the \emph{replacement property} if the difference in size between the smallest and the largest partial solution for non-dominated equivalence classes is at most $k$.
\end{definition}
Mostly this holds for a problem~$\mathcal{P}$ when, if given two partial solutions, one can add or subtract all vertices in $X_i$ from either one to obtain a solution that is at least as good as the other.
This is the case for {\sc Dominating Set} as adding all vertices in $X_i$ to a partial solution $D$ dominates all vertices any partial solution can dominate, thus being less restrictive than any other partial solution.

\begin{corollary} \label{cor:minds}
	Given a graph~$G$ with a tree decomposition~$T$ of $G$ of width~$t$, {\sc Dominating Set} can be solved in $\BigO(3^tt^2n)$ arithmetic operations on $\BigO(t + \log(n))$-bit numbers.
\end{corollary}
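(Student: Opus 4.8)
The plan is to reuse the join algorithm of Lemma~\ref{lem:minids} essentially verbatim, but to exploit the fact that {\sc Dominating Set} has the replacement property (Definition~\ref{def:replacementprop}) in order to shrink the range of partial-solution sizes that the join has to track from $\{0,1,\ldots,n\}$ down to a window of width $\BigO(k)$, thereby removing the factors $n$ and $\log n$ from the join cost. The invariant I would maintain while running the dynamic programming is that every table $A_j$ has all of its finite entries inside an additive window of width $k$: if $\mu_j$ denotes the minimum finite value of $A_j$, then every finite entry lies in $\{\mu_j,\ldots,\mu_j+k\}$. By the replacement property this invariant is harmless — any equivalence class whose best partial solution exceeds $\mu_j+k$ is dominated (adding or removing the at most $k$ vertices of $X_j$ to the minimiser yields a partial solution of that same class of size at most $\mu_j+k$), so such entries may be set to $\infty$. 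Leaf, introduce and forget nodes do not increase the window, and right after a join the window has width at most $2k$, which we cap back to $k$ in $\BigO(3^k)$ time; hence the invariant propagates up the tree.

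For the join itself at a node $i$ with children $l,r$, I would proceed exactly as in the proof of Lemma~\ref{lem:minids}, but working with shifted indicator tables. Assuming $A_l$ and $A_r$ each have a finite entry (else $A_i\equiv\infty$), let $\mu_l,\mu_r$ be their minima and define $\tilde A_l(\vec c_l,\kappa)=1$ iff $A_l(\vec c_l)-\mu_l=\kappa$ for $\kappa\in\Nat_{<k+1}$, and likewise $\tilde A_r$. Looping over the $2^k$ choices of which vertices of $X_i$ carry the $|\!\geq\!0|_\sigma$ label (as in Lemma~\ref{lem:countds}) and, on the remaining $k'$ vertices, invoking Theorem~\ref{thrm:generalmobiousjoin} with $P^{k'}=\Nat_{<2}^{k'}$ and $N=\Nat_{<2k+1}$ to combine the $|0|_\rho/|\!\geq\!1|_\rho$ labels by coordinate-wise maximum and the shifted sizes by non-cyclic convolution, gives $\tilde A_i(\vec c_i,\kappa)$ for $\kappa\in\{0,\ldots,2k\}$. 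One then reads off $A_i(\vec c_i)=\mu_l+\mu_r+\min\{\kappa:\tilde A_i(\vec c_i,\kappa)>0\}$ (and $\infty$ if there is no such $\kappa$), which realises Equation~\ref{eq:minds} with the offset carried along, after which the window is capped as above. Each call to Theorem~\ref{thrm:generalmobiousjoin} costs $\BigO(2^{k'}\,k\,(k'+\log k))=\BigO(2^{k'}k^2)$, and summing over the $2^k$ label choices exactly as in Lemma~\ref{lem:countds} yields a join running time of $f(n,k)=\BigO(3^k k^2)$, now independent of $n$.

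Plugging $f(n,k)=\BigO(3^k k^2)$ and $s=|C|=3$ into Lemma~\ref{lem:bottleneck} gives $\BigO\bigl((3^{t+1}t+3^{t+1}(t+1)^2)n\bigr)=\BigO(3^t t^2 n)$ arithmetic operations. For the bit length: the offsets $\mu_j$ and all partial-solution sizes are at most $n$, needing $\BigO(\log n)$ bits, while inside one join the counts $\tilde A_i(\vec c_i,\kappa)$ are at most $(2^{k'}(k+1))^2=\BigO(4^k k^2)$, so the prime $p$ (taken larger than this bound and with the required small-order roots of unity, see Section~\ref{sec:choosep}) and all transform arithmetic fit in $\BigO(k)=\BigO(t)$ bits; altogether $\BigO(t+\log n)$-bit numbers suffice.

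The step I expect to be the main obstacle is the correctness of the windowing: one must argue carefully that setting entries more than $k$ above the current minimum to $\infty$ really corresponds to discarding only dominated equivalence classes in the sense of Section~\ref{sec:treewidth}, that this capping is simultaneously consistent across both child tables of a join, and that the running offsets $\mu_l+\mu_r$ accumulate to the correct global partial-solution sizes without drift through the whole bottom-up computation. The remaining ingredients — the $3^k k^2$ summation and the bit-size bookkeeping — are routine given the earlier lemmas.
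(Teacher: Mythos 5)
Your proposal is correct and follows essentially the same route as the paper: subtract the child tables' minima $\xi_l,\xi_r$ so that, by the replacement property, the size index $\kappa$ can be restricted to a range of width $\BigO(k)$, perform the join of Lemma~\ref{lem:minids} with this truncated range via Theorem~\ref{thrm:generalmobiousjoin} in $\BigO(3^kk^2)$ operations, add the offsets back, and plug the result into Lemma~\ref{lem:bottleneck}. Your treatment is somewhat more explicit than the paper's (the width-$2k$ window after the join and the propagation of the invariant up the tree), but the underlying argument is identical.
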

\begin{proof}
	We further modify the algorithm used in Lemma~\ref{lem:minids} (based on Lemma~\ref{lem:countds}).
	Let $\xi_l$ and $\xi_r$ be the minimum values from $A'_l(\vec{c}'_l)$ and $A'_r(\vec{c}'_r)$.
	If we restrict the ranges of $\kappa_i$, $\kappa_l$, $\kappa_r$ in Equation~\ref{eq:mindsjoin} to $[0,1,\ldots,k]$, then Theorem~\ref{thrm:generalmobiousjoin} allows us to evaluate the equation in $\BigO(2^{k'}{k'}^2)$ arithmetic operations.
	We can do so by subtracting $\xi_l$ from all values in $A'_l$ and $\xi_r$ from all values in $A'_r$ before adding the size-parameter to the index of the table.
	After the join, we can add $\xi_l + \xi_r$ to the results in $A_i'$.
	It is not hard to see that this does not influence the result of the algorithm, but now allows an $\BigO(3^kk^2)$-time join operation.
	The result then follows from plugging this result into Lemma~\ref{lem:bottleneck}.
	\qed
\end{proof}

\subsection{Count and Filter: Strong Stable Set, Perfect Code and Perfect Dominating Set} \label{sec:filtering}
To obtain fast joins for the next set of problems, we now introduce a filtering trick based on counting.
The algorithm we use here, is in essence the fast subset convolution algorithm by Bj\"orklund et al.~\cite{BjorklundHKK07}.
Our different presentation is chosen so that we can use the same trick in the sections to follow.

First notice that the three problems mentioned in this section's title have essentially the same join table (the second table in Figure~\ref{fig:jointablesds}): even though {\sc Perfect Dominating Set} uses the $|\!\geq\!0|_\sigma$-label while the others use the $|0|_\sigma$-label, the structure of the join tables is identical.
Compared to the join table for {\sc Dominating Set}, the difference in terms of Equation~\ref{eq:countds} is that we now want to compute:
\begin{equation} \label{eq:ssset}
	A_i'(\vec{c}_i') = \sum_{\vec{c}_l' + \vec{c}_r' = \vec{c}_i'} A_l'(\vec{c}_l') A_r'(\vec{c}_r')
\end{equation}
That is, where in Equation~\ref{eq:countds} we sum over three combinations to obtain a $|\!\geq\!1|_\rho$-label in $\vec{c}_i'$,
we may now only sum over two combinations ($|0|_\rho + |1|_\rho = |1|_\rho$, $|1|_\rho + |0|_\rho = |1|_\rho$).

\begin{lemma} \label{lem:ssset}
	The join for the maximisation variant of {\sc Strong Stable Set} can be computed in $\BigO(3^knk(k + \log(n)))$ arithmetic operations.
\end{lemma}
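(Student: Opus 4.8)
The plan is to reduce the maximisation variant of {\sc Strong Stable Set} to the counting framework of Theorem~\ref{thrm:generalmobiousjoin}, mimicking the structure of the proof of Lemma~\ref{lem:minids} but accounting for the fact that the $\rho$-labels add \emph{without} saturation rather than via a coordinate-wise maximum. First I would, exactly as in Lemma~\ref{lem:countds}, loop over all $2^k$ subsets $X_\sigma \subseteq X_i$ and fix the vertices in $X_\sigma$ to the $\sigma$-label (here $|0|_\sigma$) in $\vec{c}_l$, $\vec{c}_r$ and $\vec{c}_i$; note this is valid because the join table forbids mixing a $\sigma$-label with a $\rho$-label on any coordinate, so a vertex is either in the solution set on both sides or on neither. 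On the remaining set $X' = X_i \setminus X_\sigma$ of size $k'$, only the labels $|0|_\rho$ and $|1|_\rho$ occur, and identifying $C_\rho$ with $\Nat_{<2}^{k'}$ the required operation on state colourings is exactly the non-cyclic (coordinate-wise) sum appearing in Equation~\ref{eq:ssset}: $|0|_\rho + |1|_\rho = |1|_\rho$, $|0|_\rho + |0|_\rho = |0|_\rho$, while $|1|_\rho + |1|_\rho$ would overflow and is simply not counted towards any valid $\vec{c}_i'$.

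Next, to handle the maximisation objective rather than plain counting, I would augment the tables with a size parameter exactly as in Equation~\ref{eq:minds2}: set $A_l'(\vec{c}_l',\kappa_l) = 1$ iff $A_l'(\vec{c}_l') = \kappa_l$ and $0$ otherwise (and similarly for $A_r'$), with $\kappa$ ranging over $\Nat_{<n+1}$. Then the combined product
\[
	A_i'(\vec{c}_i',\kappa_i) = \sum_{\vec{c}_l' + \vec{c}_r' = \vec{c}_i'} \sum_{\kappa_l + \kappa_r = \kappa_i} A_l'(\vec{c}_l',\kappa_l)\, A_r'(\vec{c}_r',\kappa_r)
\]
is a combined covering/convolution product — but with the $\rho$-coordinates also using a \emph{non-cyclic convolution} (bounded sum) rather than a covering (join) product. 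So rather than applying Theorem~\ref{thrm:generalmobiousjoin} verbatim, I would apply Corollary~\ref{cor:noncyclicconv} (multidimensional non-cyclic convolution), or equivalently Lemma~\ref{lem:combinedconv} with the $\rho$-part and the size-part both treated as non-cyclic coordinates: here $N = \Nat_{<2}^{k'} \times \Nat_{<n+1}$, so $Q = 2^{k'}(n+1)$ and $l = k'+1$. This gives a join on $X'$ in $\BigO(2^{k'} n \cdot 2^{k'}(k' + \log(n)))$ arithmetic operations; after computing $A_i'(\vec{c}_i',\cdot)$ I recover $A_i'(\vec{c}_i')$ as the \emph{maximum} $\kappa_i$ with $A_i'(\vec{c}_i',\kappa_i) > 0$.

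Finally I would sum the per-subset running times over all $2^k$ choices of $X_\sigma$, grouping subsets $X'$ of equal size and using the binomial theorem:
\[
	\BigO\!\left( \sum_{k'=0}^{k} \binom{k}{k'} 4^{k'} n (k' + \log(n)) \right) = \BigO\!\left( n(k + \log(n)) \sum_{k'=0}^{k} \binom{k}{k'} 4^{k'} 1^{k-k'} \right) = \BigO\!\left( n(k+\log(n)) 5^k \right),
\]
which does not match the claimed $\BigO(3^k n k(k+\log(n)))$ bound — so the main obstacle, and presumably the real content of the author's proof, is avoiding the blow-up caused by treating the $\rho$-coordinates as a full non-cyclic convolution. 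The fix is almost certainly the \emph{count-and-filter} trick announced in this subsection's heading: instead of convolving over $\Nat_{<2}^{k'}$ for the $\rho$-coordinates, embed the problem into the covering-product machinery (so that each $\rho$-coordinate costs only a factor $2$, giving the $3^k$ base) but add an extra counting coordinate recording the \emph{total number of $|1|_\rho$-labels} (ranging over $\Nat_{<k+1}$), and then filter the output to keep only those cells consistent with the exact (non-saturating) sum — i.e. discard contributions where some coordinate would have received two $|1|_\rho$-labels. This is exactly how Bj\"orklund et al.'s fast subset convolution turns a covering product into an exact convolution, and it is what accounts for the extra factor $k$ (the range of the counting coordinate) and the extra $\log$ factors in the stated bound; I would set up $N = \Nat_{<k+1} \times \Nat_{<n+1}$, apply Theorem~\ref{thrm:generalmobiousjoin} with $P^{k'} = \Nat_{<2}^{k'}$, and then argue that the filtering step correctly isolates the terms of Equation~\ref{eq:ssset}.
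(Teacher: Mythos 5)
Your final plan is exactly the paper's proof: fix the $\sigma$-labelled vertices by subset enumeration, add the solution size $\kappa$ and a counter $\iota$ for the total number of $|1|_\rho$-labels to the table index, evaluate the covering product over $\Nat_{<2}^{k'}$ combined with non-cyclic convolution over $N = \Nat_{<k+1}\times\Nat_{<n+1}$ via Theorem~\ref{thrm:generalmobiousjoin}, and read off the slice $\iota_i = \#_{|1|_\rho}(\vec{c}_i')$, which forces every $|1|_\rho$-coordinate of $\vec{c}_i'$ to receive exactly one $|1|_\rho$-label and so recovers the exact sum of Equation~\ref{eq:ssset}. Your initial detour through a direct non-cyclic convolution (giving $5^k$) is correctly flagged as inadequate and then discarded, so the proposal is sound and matches the paper's argument.
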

\begin{proof}
	We use the same construction as in Lemma~\ref{lem:countds} enumerating all subsets $X_\sigma \subseteq X_i$, $X' = X_i \setminus X_\sigma$, $k' = |X'|$; and for each subset~$X_\sigma$, we fix the labels of the vertices in $X_\sigma$ in $\vec{c}_l$, $\vec{c}_r$ and $\vec{c}_i$ to $|\!\geq\!0|_\sigma$.
	For the remaining subproblem, let $A_i'$, $A_l'$, $A_r'$ be the memoisation tables $A_i$, $A_l$, and $A_r$ after fixing the vertices with label $|\!\geq\!0|_\sigma$, i.e., they are indexed by state colourings~$\vec{c}_l'$ and~$\vec{c}_r'$ on $X_i'$ using only $|0|_\rho$ and $|1|_\rho$-labels.
	Next, we add the solution size to the index of these tables as in the proof of Lemma~\ref{lem:minids}: $A'_l(\vec{c}_l',\kappa_l) = 1$ if and only if $A'_l(\vec{c}_l') = \kappa_l$.
	Observe that now the join can be computed by letting $A'_i(\vec{c}_i')$ be the minimum value $\kappa_i$ for which $A'_i(\vec{c}_i',\kappa_i) > 0$, where:
	\begin{equation} \label{eq:maxsssjoin}
		A_i'(\vec{c}_i',\kappa_i) = \sum_{\vec{c}_l' + \vec{c}_r' = \vec{c}_i'} \sum_{\kappa_l + \kappa_r = \kappa_i} A_l'(\vec{c}_l',\kappa_l) \, A_r'(\vec{c}_r',\kappa_r)
	\end{equation}
	To compute the result of Equation~\ref{eq:maxsssjoin} efficiently, we add yet another parameter to the index of the tables.
	This parameter counts the number of $|1|_\rho$-labels in the state colouring~$c$.
	In other words:
	\begin{equation} \label{eq:maxsssextended}
		A_l'(\vec{c}_l',\kappa_l,\iota_l) = \left\{ \begin{array}{ll}
			A_l'(\vec{c}_l',\kappa_l) & \textrm{ if $\#_{|1|_\rho}(\vec{c}_l) = \iota_l$} \\
			0 & \textrm{ otherwise}
		\end{array} \right.
	\end{equation}
	where $\#_{|1|_\rho}(\vec{c}_l) = \iota_l$ is our notation for stating that $\vec{c}_l$ contains exactly $\iota_l$ $|1|_\rho$-labels.
	We claim that $A_i'(\vec{c}_i',\kappa_i)$ as defined in Equation~\ref{eq:maxsssjoin} equals $A_i'(\vec{c}_i',\kappa_i,\#_1(\vec{c}_i'))$, where $A_i'(\vec{c}_i',\kappa_i,\iota_i)$ is defined as:
	\begin{equation} \label{eq:maxsssjoin2}
		A_i'(\vec{c}_i',\kappa_i,\iota_i) = \sum_{\vec{c}_l' \vee \vec{c}_r' = \vec{c}_i'} \sum_{\kappa_l + \kappa_r = \kappa_i} \sum_{\iota_l + \iota_r = \iota_i} A_l'(\vec{c}_l',\kappa_l,\iota_l) \, A_r'(\vec{c}_r',\kappa_r,\iota_i)
	\end{equation}
	where $\vec{c}_l' \vee \vec{c}_r' = \vec{c}_i'$ is again defined coordinate-wise and by identifying $C_\rho$ with $\Nat_{<2}$.
	
	Notice that $\iota_l$ and $\iota_r$ track the number of $|1|_\rho$-labels used.
	Therefore, the total number of $|1|_\rho$-labels in a pair $(\vec{c}_l', \vec{c}_r')$ used as $\vec{c}_l' \vee \vec{c}_r' = \vec{c}_i'$ in a summand of the sum for $A_i'(\vec{c}_i',\kappa_i,\iota_i)$ is exactly~$\iota_i$.
	Since we need at least one $|1|_\rho$-label in $\vec{c}_l'$ or $\vec{c}_r'$ to realise each $|1|_\rho$-label in $\vec{c}_i$, we know that $A_i'(\vec{c}_i',\kappa_i,\#_{|1|_\rho}(\vec{c}_i'))$ uses $\#_{|1|_\rho}(\vec{c}_i')$ $|1|_\rho$-labels in total, and hence equals $A_i'(\vec{c}_i',\kappa_i)$ from Equation~\ref{eq:maxsssjoin}.
	
	By Theorem~\ref{thrm:generalmobiousjoin}, Equation~\ref{eq:maxsssjoin2} can be evaluated in $\BigO(2^{k'}nk'(k' + \log(n)))$ arithmetic operations.
	Summing the running time over all $2^k$ subsets $X_\sigma \subseteq X_i$ of vertices for which we fixed the label, this leads to the claimed running time in exactly the same way as in the proof of Lemma~\ref{lem:countds}.
\qed
\end{proof}

\begin{corollary} \label{cor:ssset}
	Given a graph~$G$ with a tree decomposition~$T$ of $G$ of width~$t$, the optimisation variants of {\sc Strong Stable Set}, {\sc Perfect Code} and {\sc Perfect Dominating Set} can be solved in $\BigO(3^tn^2t(t + \log(n)))$ arithmetic operations on $\BigO(t + \log(n))$-bit numbers.
\end{corollary}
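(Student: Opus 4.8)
The plan is to combine Lemma~\ref{lem:ssset} with the generic bottleneck bound. First I would observe that all three problems named in the statement have label set $C = C_\sigma \cup C_\rho$ with $|C_\sigma| = 1$ and $|C_\rho| = 2$, hence $s = |C| = 3$: {\sc Strong Stable Set} and {\sc Perfect Code} use $\sigma = \{0\}$, so $C_\sigma = \{|0|_\sigma\}$, whereas {\sc Perfect Dominating Set} uses $\sigma = \Nat$, so $C_\sigma = \{|\!\geq\!0|_\sigma\}$; in all three cases $\rho \subseteq \{0,1\}$ with $1 \in \rho$, so $C_\rho = \{|0|_\rho, |1|_\rho\}$. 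Consequently the three problems share the second join table of Figure~\ref{fig:jointablesds} up to the harmless renaming of $|0|_\sigma$ as $|\!\geq\!0|_\sigma$, and since the join algorithm of Lemma~\ref{lem:ssset} uses only the structure of this join table together with the count-and-filter trick (fixing the vertices in $X_\sigma$ to the unique available $\sigma$-label), it applies verbatim. Thus for each of the three problems the join for the relevant optimisation variant can be computed in $f(n,k) = \BigO(3^k n k (k + \log(n)))$ arithmetic operations, where the prime $p$ is chosen appropriately for the convolution dimensions $\Nat_{<n+1}$ and $\Nat_{<k+1}$ as in Section~\ref{sec:choosep}.

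Next I would invoke Proposition~\ref{prop:bottleneck2}, which extends Lemma~\ref{lem:bottleneck} to the maximisation and minimisation variants, with $s = 3$ and the above $f$. This yields a total bound of $\BigO((s^{t+1} t + f(n,t+1)) n)$ arithmetic operations. Substituting $k = t+1$ gives $f(n,t+1) = \BigO(3^{t+1}(t+1)\, n\, (t+1+\log(n))) = \BigO(3^t t\, n\, (t + \log(n)))$, and after multiplying by the outer factor $n$ this term dominates the $3^{t+1}t \cdot n$ contribution; hence the total is $\BigO(3^t n^2 t (t + \log(n)))$, exactly as claimed.

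Finally, for the bit-complexity I would track the magnitudes passing through $\Field_p$. The recovered partial-solution sizes $\kappa_i$ lie in $\{0,1,\ldots,n\}$, needing $\BigO(\log(n))$ bits. After the size- and $\iota$-indexing of Equations~\ref{eq:maxsssextended}--\ref{eq:maxsssjoin2}, the input tables $A_l', A_r'$ are $0/1$-valued, so the counting convolution produces entries bounded by the number of summands, at most $3^{k'} \le 3^{t+1}$; taking $p$ just above this bound (Section~\ref{sec:choosep}) keeps every arithmetic operation on $\BigO(t)$-bit numbers. Together this gives $\BigO(t + \log(n))$-bit numbers. The only thing that needs care is the exponent bookkeeping in the running-time substitution and the verification that the single-$\sigma$-label structure of each of the three problems genuinely reduces its join to the {\sc Strong Stable Set} join of Lemma~\ref{lem:ssset}; beyond that I do not expect a real obstacle, as this is a direct corollary.
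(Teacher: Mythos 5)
Your proposal is correct and follows essentially the same route as the paper: plug Lemma~\ref{lem:ssset} into the bottleneck bound (the paper cites Lemma~\ref{lem:bottleneck}, implicitly via Proposition~\ref{prop:bottleneck2} for the maximisation case), using that all three problems share the second join table of Figure~\ref{fig:jointablesds}, and track $\BigO(\log n)$ bits for solution sizes plus $\BigO(k)$ bits for the sums in Equation~\ref{eq:maxsssjoin2}. Your extra checks on the label sets and the exponent bookkeeping are just a more explicit version of what the paper leaves implicit.
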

\begin{proof}
	Plug Lemma~\ref{lem:ssset} into Lemma~\ref{lem:bottleneck}.
	We need $\log(n)$-bit numbers for the sizes of partial solutions, while the sums in Equation~\ref{eq:maxsssjoin2} can require up to $\BigO(k)$-bit numbers.
\qed
\end{proof}

\subsection{Fourier Transforms for Induced Bounded Degree or p-Regular Subgraph} \label{sec:regularsubgraph}
The results of the previous section can also be obtained using counting and filtering on top of Fourier transforms instead of on top of M\"obius transforms.
The resulting construction also results in fast joins for {\sc Induced Bounded Degree Subgraph} and {\sc $p$-Regular Subgraph}: these we present in this section.
The join table for {\sc 3-Regular Subgraph} is given in Figure~\ref{fig:jointablesds}: notice that it is identical to the join table for {\sc Induced Bounded Degree Subgraph} with degree bound three.

We should note that the results in this section can also be obtained using Cygan and Pilipczuk's method~\cite{CyganP10} where they encode solutions as a polynomial and use FFT-based fast polynomial multiplication.
The approach below is allows for easier combination with M\"obius transforms (as we will see in Section~\ref{sec:new}), and saves a factor of $t$ in the polynomial factors of the running time\footnote{The approach in~\cite{CyganP10} uses a factor $k^3$ (compared to our $k^2$) by performing $k^2$ FFT-based polynomial multiplications that each cost $\BigO((\ell+1)^k\log((\ell+1)^k)) = \BigO((\ell+1)^k k \log(\ell+1))$ arithmetic operations.}.

We start in the same setting as before: we fix the vertices with label $|\!\geq\!0|_\rho$ by looping over all subsets $X_\rho \subseteq X_i$ (now we fix states from $C_\rho$, previously from~$C_\sigma$) and let $X' = X_i \setminus X_\rho$ and $k'= |X'|$.
Let $A'_i$, $A'_l$ and $A'_r$ be the memoisation tables after fixing the vertices with the $|\!\geq\!0|_\rho$-label indexed by state colourings $\vec{c}_i'$, $\vec{c}_l'$, $\vec{c}_r'$ using labels from $C_\sigma = \{|0|_\sigma, |1|_\sigma, \ldots, |\ell - 1|_\sigma, |\ell|_\sigma \}$.

Define the projection function~$\pi$ on labels in $C_\sigma$ as to be $\pi(|l|_\sigma)=l$.
Also, define addition on state colourings as follows: $\vec{c}_l' + \vec{c}_r' = \vec{c}_i'$ if and only if, for all $j$, $\pi((\vec{c}_l')_j) + \pi((\vec{c}_r')_j) = \pi((\vec{c}_i')_j)$.
Computing $A_i'(\vec{c}_i')$ for the counting variant of the problem, or computing $A_i'(\vec{c}_i',\kappa_i)$ for the optimisation variant of the problem using the solution size~$\kappa$ as part of the index, now comes down to evaluating:
\begin{align} \label{eq:subgraph}
	A_i'(\vec{c}_i') = \sum_{\vec{c}_l' + \vec{c}_r' = \vec{c}_i'} A_l'(\vec{c}_l') A_r'(\vec{c}_r')
	&&
	A_i'(\vec{c}_i',\kappa_i) = \sum_{\vec{c}_l' + \vec{c}_r' = \vec{c}_i'} \sum_{\kappa_l + \kappa_r = \kappa_i} A_l'(\vec{c}_l',\kappa_l) \, A_r'(\vec{c}_r',\kappa_r)
\end{align}
Observe that this looks very similar to the statement of Lemma~\ref{lem:combinedconv}.
However, to obtain a non-cyclic convolution for $\vec{c}_l' + \vec{c}_r' = \vec{c}_i'$, a direct application of Lemma~\ref{lem:combinedconv} would require $\BigOs((\ell+1)^k2^k)$ arithmetic operations to evaluate either version of Equation~\ref{eq:subgraph}.
We will use cyclic convolution with counting and filtering to obtain the result in $\BigOs((\ell+1)^k)$ arithmetic operations, resulting in an $\BigOs((\ell+2)^k) = \BigOs(s^k)$ time join operation.

\begin{lemma} \label{lem:subgraph}
	The join for the counting variant of {\sc Induced Bounded Degree Subgraph} can be computed in $\BigO((\ell+2)^k k^2 \ell \log(\ell+1) )$ arithmetic operations.
\end{lemma}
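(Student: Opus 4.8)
The plan is to follow the same "fix the $\rho$-vertices, then handle the $\sigma$-part" template used in Section~\ref{sec:filtering}, but now replacing the M\"obius/covering machinery with cyclic Fourier convolution plus a counting-and-filtering trick to kill the wrap-around error. First I would loop over all $2^k$ subsets $X_\rho \subseteq X_i$ of vertices that receive the $|\!\geq\!0|_\rho$-label, set $X' = X_i \setminus X_\rho$, $k' = |X'|$, and restrict attention to the tables $A'_i, A'_l, A'_r$ indexed by state colourings over $C_\sigma = \{|0|_\sigma,\ldots,|\ell|_\sigma\}$ on $X'$. The target recurrence is the left equation of Equation~\ref{eq:subgraph}, a non-cyclic convolution on $\Nat_{<\ell+1}^{k'}$: $\pi((\vec c'_l)_j)+\pi((\vec c'_r)_j)=\pi((\vec c'_i)_j)$ for all $j$, with the proviso that sums exceeding $\ell$ are simply invalid (there is no $|\!\geq\!\ell|_\sigma$ label here since $\sigma$ is finite).

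The key idea is to do a \emph{cyclic} convolution modulo $\ell+1$ in each coordinate — which costs only $\BigO((\ell+1)^{k'}\cdot k'\log(\ell+1))$ operations via the multidimensional FFT (Proposition~\ref{prop:multidimfft}) — and then filter out the spurious contributions where some coordinate wrapped around. To do this, I would add an extra index $\tau$ to the tables that records the \emph{total degree} $\sum_j \pi((\vec c)_j)$ of the state colouring, i.e. set $A'_l(\vec c'_l,\tau_l) = A'_l(\vec c'_l)$ if $\sum_j\pi((\vec c'_l)_j)=\tau_l$ and $0$ otherwise, and likewise for $A'_r$. Now perform, for each pair $(\tau_l,\tau_r)$, the coordinate-wise cyclic convolution modulo $\ell+1$; a summand $A'_l(\vec c'_l,\tau_l)A'_r(\vec c'_r,\tau_r)$ lands in cell $\vec c'_i$ of the $(\tau_l,\tau_r)$-slice precisely when $\pi((\vec c'_l)_j)+\pi((\vec c'_r)_j)\equiv\pi((\vec c'_i)_j)\pmod{\ell+1}$ for all $j$. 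Summing over $\tau_l+\tau_r=\tau_i$ and reading off the slice with $\tau_i = \sum_j \pi((\vec c'_i)_j)$ selects exactly the combinations with \emph{no} wrap-around: if even one coordinate wrapped, that coordinate's true sum exceeds its residue by at least $\ell+1$, so the genuine total degree $\tau_l+\tau_r$ would strictly exceed $\sum_j\pi((\vec c'_i)_j)$, contradicting $\tau_l+\tau_r=\tau_i$. Conversely every wrap-free combination is counted once. This is the same filtering argument as for $\#_{|1|_\rho}$ in Lemma~\ref{lem:ssset}, just with total degree in place of the count of $1$-labels.

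Concretely this is an instance of Theorem~\ref{thrm:generalmobiousjoin}-style reasoning but with $P = \Int_{\ell+1}$ under cyclic convolution rather than a semilattice: take $N = \Nat_{<k'\ell+1}$ for the total-degree index $\tau$ (which ranges in $[0,k'\ell]$), and apply Lemma~\ref{lem:combinedconv} with the $k'$ cyclic coordinates of size $\ell+1$ and the single non-cyclic coordinate $\tau$ of size $\BigO(k'\ell)$. That gives $\BigO((\ell+1)^{k'}\cdot k'\ell\cdot(k'\log(\ell+1)+\log(k'\ell)+1))$ operations for the subproblem. Summing over the $2^k$ choices of $X_\rho$, grouping by $k'=|X'|$ and applying the binomial theorem exactly as in Lemma~\ref{lem:countds} turns the $(\ell+1)^{k'}2^{k-k'}$ factor into $(\ell+2)^k$, yielding $\BigO((\ell+2)^k k^2\ell\log(\ell+1))$ after absorbing the $\log(k'\ell)$ term into $k'\log(\ell+1)$ and bounding $k'\le k$. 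The main obstacle is getting the filtering invariant airtight: I need to be careful that the total-degree index does not itself overflow (hence $N$ sized $k'\ell+1$, non-cyclic), and that the claimed equivalence "read the $\tau_i=\sum_j\pi((\vec c'_i)_j)$ slice" really isolates the wrap-free summands for \emph{every} coordinate simultaneously, not just on average — which is exactly where the monotone relation "a wrap in any coordinate forces the true total degree strictly above the residue total degree" does the work.
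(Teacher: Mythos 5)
Your proposal is correct and follows essentially the same route as the paper's proof: fix the $|\!\geq\!0|_\rho$-vertices by subset enumeration, add the total label sum as a non-cyclic index, perform the coordinate-wise cyclic convolution modulo $\ell+1$ via Lemma~\ref{lem:combinedconv}, and read off the slice where the index equals $\Sigma(\vec{c}_i')$ to filter out wrap-around, with the same binomial-theorem summation over subsets. Your total-degree index $\tau$ is exactly the paper's $\iota$, and your wrap-around argument and running-time accounting match the paper's.
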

\begin{proof}
	For a state colouring $\vec{c}_i' \in C_\sigma^{k'}$, define $\Sigma(\vec{c}_i') = \sum_{j=1}^{k'} \pi((\vec{c}_i')_j)$.
	Now, similar to the proof of Lemma~\ref{lem:ssset} where we added the number of 1-labels to the index of the table, we now add the sum of the labels $\Sigma(\vec{c}_i')$ to the index of the table.
	That is, for both $A_l'$ and $A_r'$, we set:
	\begin{equation} \label{eq:subgraphextended}
		A_l'(\vec{c}_l', \iota_l) =  \left\{ \begin{array}{ll}
			A_l'(\vec{c}_l') & \textrm{ if $\Sigma(\vec{c}_l') = \iota_l$} \\
			0 & \textrm{ otherwise}
		\end{array} \right.
	\end{equation}
	To compute the join efficiently, we use that $A_i'(\vec{c}_i')$ as defined in Equation~\ref{eq:subgraph} (left equation) equals $A_i'(\vec{c}_i',\Sigma(\vec{c}_i'))$, where $A_i'(\vec{c}_i', \iota_i)$ is the result of the following summation:
	\begin{equation} \label{eq:subgraphjoin}
		A_i'(\vec{c}_i',\iota_i) = \sum_{\vec{c}_l' + \vec{c}_r' \equiv \vec{c}_i'} \sum_{\iota_l + \iota_r = \iota_i} A_l'(\vec{c}_l',\iota_l) \, A_r'(\vec{c}_r'\iota_i)
	\end{equation}	
	where $\vec{c}_l' + \vec{c}_r' \equiv \vec{c}_i'$ is now defined as, for all $j$, $\pi((\vec{c}_l')_j) + \pi((\vec{c}_r')_j) \equiv \pi((\vec{c}_i')_j)$
	modulo $\ell+1$, and $\iota_l + \iota_r = \iota_i$ is the standard addition without modulus.
	To see that this is correct, notice that the parameters $\iota_l$ and $\iota_r$ track the sum of the labels in $\vec{c}_l'$ and $\vec{c}_r'$, and $\iota_l + \iota_r = \Sigma(\vec{c}_i')$ implies that each of the individual components of $\vec{c}_l' + \vec{c}_r'$ cannot cycle as that would result in $\iota_l + \iota_r > \Sigma(\vec{c}_i')$.
	
	By Lemma~\ref{lem:combinedconv}, Equation~\ref{eq:subgraphjoin} can be evaluated in $\BigO((\ell+1)^{k'} k'^2 \ell \log(\ell+1) )$ arithmetic operations as the $\iota$ parameters range from $0$ to $\ell k'$.
	The claimed running time follows by summing this running time over all $X_\rho \subseteq X_i$ for which we fixed the label (similar to in the proof of Lemma~\ref{lem:countds}).
\qed
\end{proof}

\begin{corollary} \label{cor:subgraph}
Given a graph~$G$ with a tree decomposition~$T$ of $G$ of width~$t$, the counting variant of {\sc Induced Bounded Degree Subgraph} can be solved in $\BigO((\ell+2)^{t+1} t^2 n \ell \log(\ell+1) )$ arithmetic operations on $\BigO(n)$-bit numbers.
\end{corollary}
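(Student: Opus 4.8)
The plan is to derive this directly by plugging the join algorithm of Lemma~\ref{lem:subgraph} into the generic bottleneck reduction, in the form that is valid for counting variants, namely Proposition~\ref{prop:bottleneck2}. First I would read off the relevant parameters: for {\sc Induced Bounded Degree Subgraph} with degree bound~$p$ we have $\sigma = \{0,1,\ldots,p\}$ and $\rho = \Nat$, so in the notation of Section~\ref{sec:sigmarho} we get $\ell = \max\{\sigma\} = p$ and the label set $C = \{|0|_\sigma,|1|_\sigma,\ldots,|\ell|_\sigma\} \cup \{|\!\geq\!0|_\rho\}$, hence $s = |C| = \ell+2$. Lemma~\ref{lem:subgraph} then provides an algorithm $\mathcal{A}$ for the computations in a join node that, on a bag of size~$k$, runs in $f(n,k) = \BigO((\ell+2)^k k^2 \ell \log(\ell+1))$ arithmetic operations (note this bound does not depend on~$n$).

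Next I would invoke Proposition~\ref{prop:bottleneck2} with this $\mathcal{A}$. After transforming $T$ into a nice tree decomposition of $\BigO(n)$ nodes of the same width, the whole dynamic program uses $\BigO\big((s^{t+1}t + f(n,t+1))\,n\big)$ arithmetic operations. Substituting $s = \ell+2$ and $f(n,t+1) = \BigO((\ell+2)^{t+1}(t+1)^2\ell\log(\ell+1)) = \BigO((\ell+2)^{t+1}t^2\ell\log(\ell+1))$, the join contribution $f(n,t+1)\,n$ dominates the introduce/forget bookkeeping contribution $s^{t+1}t\,n = (\ell+2)^{t+1}t n$, since $t\,\ell\log(\ell+1) \geq 1$ in the relevant range. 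This yields the claimed total of $\BigO((\ell+2)^{t+1} t^2 n \ell \log(\ell+1))$ arithmetic operations.

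Finally, for the bit-size claim I would recall that in the counting variant each table entry $A_i(\vec{c})$ stores the number of partial solutions in the equivalence class of $\vec{c}$, of which there are at most $2^{|V_i|} \leq 2^n$; thus $M = 2^n$ suffices, and all arithmetic is performed in $\Field_p$ for a prime $p > 2^n$ that is moreover \emph{chosen appropriately} so that $\Field_p$ contains the roots of unity required by the Fourier transforms used inside Lemma~\ref{lem:subgraph}. Section~\ref{sec:choosep} guarantees that such a prime~$p$ with $\BigO(n)$ bits exists and can be found efficiently, so all numbers manipulated have $\BigO(n)$ bits.

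There is essentially no obstacle in this corollary; the only points that need a moment's care are (i) correctly reading off $s = \ell+2$ from the definition of the label set~$C$, and (ii) verifying that the $s^{t+1}t$ term from the introduce and forget nodes is absorbed into the join term so that the single-term running time in the statement is accurate.
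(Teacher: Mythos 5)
Your proposal is correct and matches the paper's own proof, which likewise just plugs Lemma~\ref{lem:subgraph} into the generic bottleneck bound, notes that $\BigO(n)$-bit numbers suffice because the count is at most $2^n$, and keeps $t+1$ in the exponent because $\ell$ is a variable and $t \geq k-1$. Your citation of Proposition~\ref{prop:bottleneck2} rather than Lemma~\ref{lem:bottleneck} is actually the more precise reference for a counting variant, but the argument is the same.
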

\begin{proof}
Plug Lemma~\ref{lem:subgraph} into Lemma~\ref{lem:bottleneck}.
We can use $\BigO(n)$-bit numbers as the result is at most $2^n$.
Since $\ell$ is a variable (not a consant), we need $t+1$ in the exponent as $t \geq k-1$ (see Definition~\ref{def:tw}).
\qed
\end{proof}
Adapting the above lemma and corollary to the problem's optimisation variant is a simple exercise.

\subsection{M\"obius Transforms with a Different Partial Order for Total Dominating Set} \label{sec:tds}
In the previous sections, we fixed the vertices with a $|\!\geq\!0|_\sigma$-label (or $|\!\geq\!0|_\rho$-label) and used a fast transform only on vertices with a label from $C_\rho$ (or $C_\sigma$).
Here, we give an example of a fast transform that deals with vertices with different labels from both $C_\sigma$ and $C_\rho$ simultaneously.

Consider the label set $C = \{|0|_\sigma, |\!\geq\!1|_\sigma, |0|_\rho, |\!\geq\!1|_\rho \}$ associated to the {\sc Total Dominating Set} problem.
On this label set, we impose the following partial order: all labels are incomparable except that, we impose $|0|_\sigma \leq |\!\geq\!1|_\sigma$ and $|0|_\rho \leq |\!\geq\!1|_\rho$.
Notice that, using this partial order, $C^k$ does not form a lattice, e.g., in $C^3$ the join $[|0|_\sigma,|0|_\rho,|\!\geq\!1|_\sigma] \vee [|\!\geq\!1|_\rho,|\!\geq\!1|_\rho,|0|_\sigma]$ is undefined due to the first coordinate.
However, it is not hard to see that, for every $\vec{x} \in C^k$, $\{\vec{y} \in P^k | \vec{y} \leq \vec{x}\}$ forms a join-semilattice: either $x_i$ equals $|0|_\sigma$ or $|0|_\rho$ and then $y_i$ must be equal to $x_i$, or $x_i$ equals $|\!\geq\!1|_\sigma$ or $|\!\geq\!1|_\rho$ and then $y_i$ has two choices which are comparable, and thus the join is defined (actually this is a lattice as the meet is also defined).

\begin{proposition} \label{prop:tdszeta}
	Let $C$ and its partial order be defined as above.
	There are algorithms for the zeta and M\"obius transform for functions $f:C^k \rightarrow \Field_p$ that require $\BigO(4^kk)$ arithmetic operations.
\end{proposition}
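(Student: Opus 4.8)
The plan is to invoke Lemma~\ref{lem:fastgenericzeta}, which reduces the task of building fast zeta and M\"obius transforms on $C^k$ to building them on the base partial order $C$ alone, at a multiplicative cost of $k$. Since $|C| = 4$, once I exhibit algorithms for the zeta and M\"obius transforms on $f:C\rightarrow\Field_p$ that run in $\BigO(|C|) = \BigO(1)$ arithmetic operations, Lemma~\ref{lem:fastgenericzeta} immediately yields the claimed $\BigO(4^k k)$ bound. So the entire argument comes down to the single-coordinate case.

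First I would describe the one-coordinate zeta transform explicitly. The partial order on $C = \{|0|_\sigma, |\!\geq\!1|_\sigma, |0|_\rho, |\!\geq\!1|_\rho\}$ is the disjoint union of two two-element chains: $|0|_\sigma \leq |\!\geq\!1|_\sigma$ and $|0|_\rho \leq |\!\geq\!1|_\rho$, with no other relations. Hence $\zeta(f)$ is given coordinate-wise by
\begin{align*}
	\zeta(f)(|0|_\sigma) &= f(|0|_\sigma), & \zeta(f)(|\!\geq\!1|_\sigma) &= f(|0|_\sigma) + f(|\!\geq\!1|_\sigma), \\
	\zeta(f)(|0|_\rho) &= f(|0|_\rho), & \zeta(f)(|\!\geq\!1|_\rho) &= f(|0|_\rho) + f(|\!\geq\!1|_\rho),
\end{align*}
which is clearly computable in $\BigO(1)$ operations. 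For the M\"obius transform I would simply invert this: $\mu(g)(|0|_\sigma) = g(|0|_\sigma)$, $\mu(g)(|\!\geq\!1|_\sigma) = g(|\!\geq\!1|_\sigma) - g(|0|_\sigma)$, and similarly on the $\rho$-chain. By Lemma~\ref{lem:mobiusinversion} the zeta transform has a unique inverse, and a direct check shows these formulas recover $f$ from $\zeta(f)$, so this is indeed the M\"obius transform; again it is $\BigO(1)$. One then has to observe that $C$ satisfies the hypothesis required by Lemma~\ref{lem:fastgenericzeta}, namely that for every $x \in C$ the down-set $\{y \in C \mid y \leq x\}$ is a join-semilattice — this is immediate since each such down-set is a chain (either a singleton or a two-element chain), and chains are trivially join-semilattices; this is exactly the observation already made in the paragraph preceding the proposition.

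I do not expect a genuine obstacle here: the proposition is essentially a bookkeeping corollary of the general machinery. The only mildly delicate point is making sure the base-case transforms are presented as honest zeta and M\"obius transforms in the sense of the definition (in particular that the M\"obius function of $C$ really does produce the inversion formula above), rather than ad hoc linear maps; invoking the uniqueness of the inverse from Lemma~\ref{lem:mobiusinversion} sidesteps any need to compute $\mu(x,y)$ on $C$ by hand, though doing so is trivial anyway since $C$ is a union of two-element chains where $\mu(x,x)=1$ and $\mu(|0|,|\!\geq\!1|) = -1$. With the base case in hand, Lemma~\ref{lem:fastgenericzeta} closes the argument.
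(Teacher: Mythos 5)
Your proposal is correct and follows essentially the same route as the paper: both give the explicit $k=1$ zeta and M\"obius transform formulas on the two disjoint two-element chains (requiring $\BigO(1)$ arithmetic operations) and then lift to $C^k$ via Lemma~\ref{lem:fastgenericzeta}. Your additional remarks on the join-semilattice hypothesis and the uniqueness of the inverse are sound but not needed beyond what the paper already establishes in the surrounding text.
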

\begin{proof}
	If $k=1$, the zeta and M\"obius transforms are:
	\begin{align*}
		\zeta(f)(|0|_\sigma) &= f(|0|_\sigma) & \mu(f)(|0|_\sigma) &= f(|0|_\sigma)  \\
		\zeta(f)(|\!\geq\!1|_\sigma) &= f(|0|_\sigma) + f(|\!\geq\!1|_\sigma) & \mu(|\!\geq\!1|_\sigma) &= f(|\!\geq\!1|_\sigma)- f(|0|_\sigma) \\
		\zeta(f)(|0|_\rho) &= f(|0|_\rho) & \mu(f)(|0|_\rho) &= f(|0|_\rho)  \\
		\zeta(f)(|\!\geq\!1|_\rho) &= f(|0|_\rho) + f(|\!\geq\!1|_\rho) & \mu(|\!\geq\!1|_\rho) &= f(|\!\geq\!1|_\rho)- f(|0|_\rho)
	\end{align*}
	The result follows from Lemma~\ref{lem:fastgenericzeta} as these require a constant amount of arithmetic operations. 
\qed
\end{proof}

\begin{lemma}[based on~\cite{vanRooijBR09}] \label{lem:mintds}
	The join for the minimisation variant of {\sc Total Dominating Set} can be computed in $\BigO(4^kk^2)$ arithmetic operations.
\end{lemma}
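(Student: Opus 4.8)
The plan is to follow the same pattern as the earlier join lemmas (Lemmas~\ref{lem:minids} and~\ref{lem:ssset}): embed the minimisation into a counting structure, realise the join as a combined covering/convolution product, and invoke Theorem~\ref{thrm:generalmobiousjoin}. The crucial difference is that here we do \emph{not} first fix the vertices with a $|\!\geq\!0|_\sigma$-label; instead, the whole label set $C = \{|0|_\sigma, |\!\geq\!1|_\sigma, |0|_\rho, |\!\geq\!1|_\rho\}$ carries the partial order of Proposition~\ref{prop:tdszeta}, for which we already have an $\BigO(4^kk)$-time zeta and M\"obius transform. So no outer loop over $2^k$ subsets is needed, and the running time will come out as a single application of the transform machinery.

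First I would verify that the join table for {\sc Total Dominating Set} (the rightmost table in Figure~\ref{fig:jointablesds}) is exactly computed by the join operation $\vec{c}_l' \vee \vec{c}_r'$ under the partial order on $C$: the $\sigma$-block and the $\rho$-block behave independently, and within each block $|0| \vee |0| = |0|$, $|0|\vee|\!\geq\!1| = |\!\geq\!1|$, $|\!\geq\!1|\vee|\!\geq\!1| = |\!\geq\!1|$, which matches the table; cross-block combinations are forbidden in the table and are precisely where the join is undefined, so those pairs are never summed. Hence, writing $A_l$, $A_r$, $A_i$ for the tables indexed by $C^k$, the counting version of the join is $A_i(\vec{c}_i) = \sum_{\vec{c}_l \vee \vec{c}_r = \vec{c}_i} A_l(\vec{c}_l) A_r(\vec{c}_r)$, which is the generalised covering product of Lemma~\ref{lem:coverprod} with $P = C$. (One must double-check that the agreement on which vertices lie in $D$ — i.e.\ which coordinates get a $\sigma$-label versus a $\rho$-label — is automatically enforced, since any pair with mismatched blocks has no join.)

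Next, for the minimisation variant I would add the partial-solution size $\kappa$ to the index exactly as in the proof of Lemma~\ref{lem:minids}: set $A_l(\vec{c}_l,\kappa_l) = 1$ iff $A_l(\vec{c}_l) = \kappa_l$ and $0$ otherwise, similarly for $A_r$, and define
\[
	A_i(\vec{c}_i,\kappa_i) = \sum_{\vec{c}_l \vee \vec{c}_r = \vec{c}_i} \sum_{\kappa_l + \kappa_r = \kappa_i} A_l(\vec{c}_l,\kappa_l)\, A_r(\vec{c}_r,\kappa_r),
\]
then recover $A_i(\vec{c}_i)$ as the smallest $\kappa_i$ with $A_i(\vec{c}_i,\kappa_i) > 0$. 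This is precisely the form handled by Theorem~\ref{thrm:generalmobiousjoin} with $P = C$, $N = \Nat_{<n+1}$ ($Q = n+1$), and $|P|=4$, since $P$ admits $\BigO(|P|)$-time zeta/M\"obius transforms by Proposition~\ref{prop:tdszeta}. The theorem then gives a running time of $\BigO(4^k \, n \, (k + \log n))$ — which is \emph{not} yet the claimed bound.

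To get the stated $\BigO(4^kk^2)$ I would, as in Corollary~\ref{cor:minds}, exploit the replacement property: {\sc Total Dominating Set} has it (this is remarked right after Definition~\ref{def:replacementprop}), so for every non-dominated equivalence class the spread of partial-solution sizes is at most $k$. Subtract the per-table minimum values $\xi_l, \xi_r$ from $A_l, A_r$ before adding $\kappa$ to the index, restrict $\kappa_l, \kappa_r, \kappa_i$ to $\{0,1,\ldots,k\}$, run the join via Theorem~\ref{thrm:generalmobiousjoin} with $Q = k+1$ (cost $\BigO(4^k (k+1)(k + \log(k+1))) = \BigO(4^k k^2)$), and add $\xi_l + \xi_r$ back to the results. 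The main obstacle I anticipate is the bookkeeping needed to justify that the replacement property genuinely applies coordinate-by-coordinate here — one has to argue that adding or removing all of $X_i$ from a partial total-dominating set yields a solution at least as unrestrictive as any other in its class, and that after the $\xi_l, \xi_r$ shift the relevant $\kappa$ values all stay in $[0,k]$ — together with checking that the interaction between the shift and the covering product does not distort which $\kappa_i$ is minimal. Everything else is a routine instantiation of the earlier lemmas.
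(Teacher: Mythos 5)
Your proposal is correct and follows essentially the same route as the paper's proof: the same partial order on $C$ from Proposition~\ref{prop:tdszeta}, the same size-indexed embedding and application of Theorem~\ref{thrm:generalmobiousjoin}, and the same use of the replacement property to restrict $\kappa$ to $[0,k]$ and reach $\BigO(4^kk^2)$. The bookkeeping you flag as a potential obstacle is handled in the paper exactly as you describe, by the shift-and-restore argument of Corollary~\ref{cor:minds}.
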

\begin{proof}
	Given $A_l$ and $A_r$, we can compute $A_i$ by evaluating the following equation that is equivalent to Equation~\ref{eq:minds} (notice that this corresponds exactly to the rightmost join table in Figure~\ref{fig:jointablesds}):
	\begin{equation} \label{eq:mintds}
		A_i(\vec{c}_i) = \min_{\vec{c}_l \vee \vec{c}_r = \vec{c}_i} A_l(\vec{c}_l) + A_r(\vec{c}_r)
	\end{equation}
	
	To obtain our result, we expand the memoisation tables $A_i$, $A_l$ and $A_r$ by having the solution size as part of the index of the table.
	That is, we let:
	\begin{equation} \label{eq:mintdsextended}
	A_l( \vec{c}_l, \kappa_l ) = \left\{ \begin{array}{ll}
	1 & \textrm{ if $A_l( \vec{c}_l ) = \kappa_l$} \\
	0 & \textrm{ otherwise} 
	\end{array} \right. 
	\end{equation}
	and similarly for $A_r$.
	Then, we compute $A_i(\vec{c}_i,\kappa_i)$ as defined below, computed using Theorem~\ref{thrm:generalmobiousjoin}:
	\begin{equation} \label{eq:mintdsjoin}
	A_i(\vec{c}_i,\kappa_i) = \sum_{\vec{c}_l \vee \vec{c}_r = \vec{c}_i} \sum_{\kappa_l + \kappa_r = \kappa_i} A_l(\vec{c}_l,\kappa_l) \, A_r(\vec{c}_r,\kappa_r)
	\end{equation}	
	Since $A_i(\vec{c}_i,\kappa_i) > 0$ if and only if there exists $\vec{c}_l, \kappa_l, \vec{c}_r, \kappa_r$ such that $\vec{c}_l \vee \vec{c}_r = \vec{c}_i$ and $A_l(\vec{c}_l) = \kappa_l$ $A_r(\vec{c}_r) = \kappa_r$, this allows us to obtain the result required by Equation~\ref{eq:mintds} by setting $A_i(\vec{c}_i)$ equal to the minimum value of $\kappa_i$ for which $A_i(\vec{c}_i,\kappa_i) > 0$.
	
	A direct application of Theorem~\ref{thrm:generalmobiousjoin} would allow us to evaluate Equation~\ref{eq:mintdsjoin} in $\BigO(4^kn^2(k+\log(n))$ arithmetic operations.
	However, if we restrict the ranges of $\kappa_i$, $\kappa_l$, $\kappa_k$ to $[0,1,\ldots,k]$, then Theorem~\ref{thrm:generalmobiousjoin} allows us to evaluate Equation~\ref{eq:mintdsjoin} in $\BigO(4^kk^2)$ arithmetic operations.
	We can do so because, just as in Corollary~\ref{cor:minds}, {\sc Total Dominating Set} satisfies the replacement property (Definition~\ref{def:replacementprop}): take the minimum value from $A_l$ and $A_r$ and subtract this from all values in $A_l$ and $A_r$ before adding the size-parameter to the index of the table.
	After the join, we can again add the sum of both minima to the results in $A_i$.
	The result now follows.
\qed
\end{proof}

\begin{corollary}
	Given a graph~$G$ with a tree decomposition~$T$ of $G$ of width~$t$, {\sc Total Dominating Set} can be solved in $\BigO(4^tt^2n)$ arithmetic operations on $\BigO(t + \log(n))$-bit numbers.
\end{corollary}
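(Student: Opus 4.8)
The plan is simply to combine Lemma~\ref{lem:mintds} with Lemma~\ref{lem:bottleneck}, exactly as in the corollaries following Lemmas~\ref{lem:minids} and~\ref{lem:ssset}. First I would note that for {\sc Total Dominating Set} the label set is $C = \{|0|_\sigma, |\!\geq\!1|_\sigma, |0|_\rho, |\!\geq\!1|_\rho\}$, so $s = |C| = 4$. Lemma~\ref{lem:mintds} supplies a join algorithm $\mathcal{A}$ running in $f(n,k) = \BigO(4^k k^2)$ arithmetic operations; in particular $f(n,t+1) = \BigO(4^{t+1}(t+1)^2) = \BigO(4^t t^2)$.

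Next I would feed these into Lemma~\ref{lem:bottleneck} (which internally converts $T$ into a nice tree decomposition with $\BigO(n)$ nodes), obtaining a total of $\BigO\big( (s^{t+1} t + f(n,t+1)) n \big)$ arithmetic operations. Substituting $s = 4$ and the bound on $f$, this becomes $\BigO\big( (4^{t+1} t + 4^t t^2) n \big) = \BigO(4^t t^2 n)$, since the $4^t t^2$ term coming from the join dominates the $4^t t$ term coming from the introduce/forget bookkeeping. Proposition~\ref{prop:bottleneck2} confirms that Lemma~\ref{lem:bottleneck} applies to the minimisation variant we are solving.

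For the bit-complexity, I would reason as in Corollary~\ref{cor:minds}. Partial-solution sizes lie in $\{0,1,\ldots,n\}$ and hence need $\BigO(\log n)$ bits. Because {\sc Total Dominating Set} has the replacement property (Definition~\ref{def:replacementprop}), in Lemma~\ref{lem:mintds} the size parameters $\kappa_i,\kappa_l,\kappa_r$ are shifted to range only over $[0,k]$ (this is precisely what yields the $k^2$ factor instead of $n(k+\log n)$). Consequently every intermediate value produced by the zeta transform, the non-cyclic convolution, and the M\"obius transform inside Theorem~\ref{thrm:generalmobiousjoin} is a sum of at most $2^{\BigO(k)}$ zero/one indicator terms, hence an $\BigO(k) = \BigO(t)$-bit integer. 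Combining the two contributions gives $\BigO(t + \log n)$ bits, as claimed.

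I expect essentially no obstacle here: all the substantive work has already been done in Lemma~\ref{lem:mintds} (the choice of the non-lattice partial order on $C$ whose principal down-sets are join-semilattices, together with the counting and replacement-property embedding) and in Theorem~\ref{thrm:generalmobiousjoin}. The only points requiring a little care are the bit-size bound and the elementary check that the $s^{t+1}t$ term from Lemma~\ref{lem:bottleneck} is absorbed into $\BigO(4^t t^2 n)$.
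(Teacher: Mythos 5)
Your proposal is correct and follows exactly the paper's own route: plug Lemma~\ref{lem:mintds} into Lemma~\ref{lem:bottleneck}, absorb the $s^{t+1}t$ term into $\BigO(4^t t^2 n)$, and bound the bit-size by $\BigO(\log n)$ for partial-solution sizes plus $\BigO(t)$ for the convolution sums in Equation~\ref{eq:mintdsjoin}. Your additional remarks on the replacement property and the $2^{\BigO(k)}$ bound on the intermediate sums are accurate elaborations of what the paper states more tersely.
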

\begin{proof}
	Plug Lemma~\ref{lem:mintds} into Lemma~\ref{lem:bottleneck} and observe that we need $\log(n)$-bit numbers for the sizes of partial solutions, while the sums in Equation~\ref{eq:mintdsjoin} can require up to $\BigO(t)$-bit numbers.
\qed
\end{proof}

\section{Bringing it Together: Faster Algorithms for $[\sigma,\rho]$-Domination} \label{sec:new}
In the previous section, we have surveyed a number of approaches to realise fast joins operations and have given some concrete examples.
In this section, we will integrate several of these approaches into a new result obtaining the currently fastest algorithm for $[\sigma,\rho]$-domination in its general form.
The worst-case running time of the new algorithm is of the form $\BigO(s^t (nts)^{\BigO(1)})$, where the previously fastest algorithm by Van Rooij et al.~\cite{vanRooij11,vanRooijBR09}, has~$s$ as an exponent in the polynomial part of the running time, i.e., $\BigO(s^{t+2} (tn)^s (nts)^{\BigO(1)})$.
Here, we write $s = |C|$, where $C$ is the set of labels used in the dynamic programming algorithm for a specific $[\sigma,\rho]$-domination problem (as in Section~\ref{sec:sigmarho}).
To limit the (already heavy) notational burden, we give our result for a $[\sigma,\rho]$-domination problem with \emph{cofinite} $\sigma$ and \emph{finite} $\rho$.
It is not hard to modify the proofs for the other cases.

For state colourings $\vec{c}_l$, $\vec{c}_r$ of $X_i$ with labels from $C$, we define the operator $\vec{c}_l \oplus \vec{c}_r = \vec{c}_i$ as the coordinate-wise addition operator that keeps addition within both parts of the label set.
This operator is only defined if $\vec{c}_l$ and $\vec{c}_r$ agree on which vertices are labelled with labels from $C_\sigma$ and from $C_\rho$ and if the result again is in $C_\sigma$ or $C_\rho$.
More formally, when we write $\vec{c}_l \oplus \vec{c}_r = \vec{c}_i$ and if $(\vec{c}_l)_j = |x_l|_\rho$, $(\vec{c}_r)_j = |x_r|_\rho$, then $(\vec{c}_i)_j = |x_l + x_r|_\rho$ if $|x_l + x_r|_\rho \in C_\rho$ and otherwise it is undefined.
And, if $(\vec{c}_l)_j = |x_l|_\sigma$, $(\vec{c}_r)_j = |x_r|_\sigma$, then $(\vec{c}_i)_j = |x_l + x_r|_\sigma$ if $x_l + x_r < \ell_\sigma$ and $(\vec{c}_i)_j = |\!\geq\!\ell|_\sigma$ otherwise.
Observe that this corresponds exactly to the structure of how a join for a $[\sigma,\rho]$-domination problem with cofinite $\sigma$ and finite $\rho$ should be performed.
Besides the $\oplus$-operator, we will also use the standard $+$-operator on state colourings: $\vec{c}_l + \vec{c}_r = \vec{c}_i$.
Here, the underlying operation is the standard addition operator within each half of the label set, which is undefined if any $|\!\geq\!\ell|_\sigma$ or $|\!\geq\!\ell|_\rho$-label is involved. 
That is, if $(\vec{c}_l)_j = |x_l|_\rho$, $(\vec{c}_r)_j = |x_r|_\rho$, then $(\vec{c}_i)_j = |x_l + x_r|_\rho$, which is defined only if $|x_l + x_r|_\rho \in C_\rho$.
Addition with the $+$-operator is similar for the $\sigma$-labels.

Let again $A_l$ and $A_r$ be indexed by both the state colouring and the solution size (similar to Equations~\ref{eq:minds2} and~\ref{eq:mintdsextended}).
Performing the join for the minimisation variant of a $[\sigma,\rho]$-domination problem can now be done by extracting, for each $\vec{c}_i$, the minimum value of $\kappa_i$ for with the following expression is non-zero:
\begin{equation} \label{eq:generaljoin}
	A_i(\vec{c}_i,\kappa_i) = \sum_{\vec{c}_l \oplus \vec{c}_r = \vec{c}_i} \sum_{\kappa_l + \kappa_r = \kappa_i} A_l(\vec{c}_l,\kappa_l) A_r(\vec{c}_r,\kappa_r)
\end{equation}

To obtain a fast evaluation algorithm for Equation~\ref{eq:generaljoin}, we use both zeta/M\"obius transforms and Fourier transforms.
To do so, we impose the following partial order $\mathbbm{p}$ on the label set $C$: all labels are incomparable except that, because $\sigma$ is cofinite, we impose for all $|l|_\sigma \in C_\sigma$: $|l|_\sigma \leq |\!\geq\!\ell|_\sigma$.

Given a state colouring $\vec{c}_i$ of the vertices in $X_i$ with labels from $C = C_\sigma \cup C_\rho$, we write $\vec{c}_i = [\vec{c}^\sigma_i,\vec{c}^\rho_i] = [\vec{c}^{\geq\!\ell_\sigma}_i,\vec{c}^{<\!\ell_\sigma}_i,\vec{c}^\rho_i]$ to differentiate between the vertices with label from $C_\sigma$ and $C_\rho$, and also to further differentiate between vertices with the label $|\!\geq\!\ell|_\sigma$ and vertices with a label from $\{|0|_\sigma,|1|_\sigma,\ldots,|\ell-1|_\sigma\}$.
By splitting $\vec{c}_i$ in this way we notationally split the different coordinates.
This is just notation: we do not reorder anything in the dynamic programming table (e.g., $\vec{c}_i^\rho$ can contain the first coordinate of $\vec{c}_i$ and the last, while $\vec{c}_i^\sigma$ contains the ones in between).

Using this notation, the zeta transform of a memoisation table $A_i$ indexed by both a state colouring $\vec{c}_i = [\vec{c}^{\geq\!\ell_\sigma}_i,\vec{c}^{<\!\ell_\sigma}_i,\vec{c}^\rho_i]$ and additional indices $\vec{x}_i$ (whose purpose will become clear later) becomes:
\begin{equation}  \label{eq:mytransform0}
	\zeta(A_i)(\vec{c}_i,\vec{x}_i) 
	= \sum_{\vec{d} \leq \vec{c}_i} A_i(\vec{d}_i,\vec{x}_i)
	= \sum_{\vec{d_1} \leq \vec{c}^{\geq\!\ell_\sigma}_i} A_i([\vec{d_1},\vec{c}^{<\!\ell_\sigma}_i,\vec{c}^\rho_i],\vec{x}_i)
\end{equation}

\begin{proposition} \label{prop:mytransform}
	Given the memoisation table $A_i(\vec{c}_i,\vec{x}_i)$ indexed by state colourings $\vec{c}_i \in C^k$ over the label set $C$ ($|C|=s$) and some additional indices $\vec{x}_i$ with domain $I$, the zeta transform $\zeta(A_i)$ of $A_i$ based on the partial order $\mathbbm{p}$ can be computed in $\BigO(s^kk|I|)$ arithmetic operations.
	Also, given $\zeta(A_i)$, $A_i$ can be reconstructed in $\BigO(s^kk|I|)$ arithmetic operations.
\end{proposition}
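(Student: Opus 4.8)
The plan is to obtain this as a direct instance of the generic coordinate-wise transform of Lemma~\ref{lem:fastgenericzeta}, applied with $P = C$ equipped with the partial order $\mathbbm{p}$, while treating the extra index $\vec{x}_i \in I$ as a passive spectator: the zeta transform acts only on the $\vec{c}_i$-coordinates, so we simply run one transform of a function $C^k \to \Field_p$ for each of the $|I|$ fixed values of $\vec{x}_i$. Thus everything reduces to exhibiting an $\BigO(s)$-time algorithm for the one-dimensional zeta transform and its inverse on $(C,\mathbbm{p})$, after which we quote Lemma~\ref{lem:fastgenericzeta} and multiply by $|I|$.

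First I would spell out the one-dimensional case. Under $\mathbbm{p}$ every label is $\leq$ only itself, except that each $|l|_\sigma$ with $l < \ell$ (as well as $|\!\geq\!\ell|_\sigma$ itself) lies below $|\!\geq\!\ell|_\sigma$. Hence, for $f : C \to \Field_p$, the zeta transform is the identity on every $\rho$-label and on every $|l|_\sigma$ with $l < \ell$, while $\zeta(f)(|\!\geq\!\ell|_\sigma) = f(|\!\geq\!\ell|_\sigma) + \sum_{l=0}^{\ell-1} f(|l|_\sigma)$; this uses $\BigO(|C_\sigma|) = \BigO(s)$ additions. The Möbius transform $\mu(\cdot)$ of $(C,\mathbbm{p})$ is equally simple: it is the identity on all labels except $|\!\geq\!\ell|_\sigma$, for which $\mu(g)(|\!\geq\!\ell|_\sigma) = g(|\!\geq\!\ell|_\sigma) - \sum_{l=0}^{\ell-1} g(|l|_\sigma)$, again $\BigO(s)$ operations; that this is genuinely the inverse of $\zeta$ follows from Lemma~\ref{lem:mobiusinversion} (or by direct substitution). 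I would also observe that $(C,\mathbbm{p})$ satisfies the down-set condition needed elsewhere in the paper: for $x = |\!\geq\!\ell|_\sigma$ the set $\{y \le x\}$ is a join-semilattice since the pairwise-incomparable labels $|0|_\sigma,\ldots,|\ell-1|_\sigma$ have $|\!\geq\!\ell|_\sigma$ as their only common upper bound, and for every other $x$ the down-set is a singleton; but strictly speaking the present proposition needs only the fast one-dimensional algorithm.

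Then I would invoke Lemma~\ref{lem:fastgenericzeta} with this $P = (C,\mathbbm{p})$ and $|P| = s$ to conclude that the zeta transform and the Möbius transform of a single function $C^k \to \Field_p$ can be computed in $\BigO(s^k k)$ arithmetic operations; the coordinate-wise procedure there only ever ``sums down'' on the coordinates carrying a $|\!\geq\!\ell|_\sigma$ label, which is precisely what Equation~\ref{eq:mytransform0} records, so its output is indeed the claimed $\zeta(A_i)$. Running this once for each $\vec{x}_i \in I$ yields $\zeta(A_i)$ in $\BigO(s^k k |I|)$ operations, and running the Möbius transform once per $\vec{x}_i$ reconstructs $A_i$ from $\zeta(A_i)$ within the same bound. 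I do not expect a real obstacle: the only points requiring care are bookkeeping — that the spectator index $\vec{x}_i$ multiplies the operation count by exactly $|I|$ and nothing more — and confirming that the one-dimensional cost is $\BigO(s)$ rather than, say, $\BigO(s^2)$, which holds because the single $|\!\geq\!\ell|_\sigma$ entry is the only non-trivial output value.
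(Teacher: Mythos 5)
Your proposal is correct and follows essentially the same route as the paper: both reduce to the one-dimensional case on $(C,\mathbbm{p})$, where the zeta and M\"obius transforms are the identity on every label except $|\!\geq\!\ell|_\sigma$ (costing $\BigO(s)$ operations for that single entry), and then lift to $C^k$ via Lemma~\ref{lem:fastgenericzeta} while treating $\vec{x}_i$ as a passive index contributing the factor $|I|$. The explicit one-dimensional formulas you give match the paper's, so no further comment is needed.
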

\begin{proof}
	We will show that for $k=1$ and hence $\vec{c}_i \in C^1$, we have zeta and M\"obius transforms on $A_i(\vec{c}_i,\vec{x}_i)$ requiring $\BigO(s|I|)$ arithmetic operations.
	The result then follows from Lemma~\ref{lem:fastgenericzeta} and the fact that the transforms operate independent of the parameter~$\vec{x}_i$.
	
	By definition of the partial order $\mathbbm{p}$, the following formulas compute $\zeta(A_i)$ from $A_i$ and vice versa when $k = 1$:
	\begin{align*}
		\zeta(A_i)(\vec{c}_i,\vec{x}) &= 
		\left\{\begin{array}{ll} 
			A_i(\vec{c}_i,\vec{x}) & \textrm{ if $\vec{c}_i \neq [ |\!\geq\!\ell|_\sigma ]$} \\
			\sum_{z \in C_\sigma} A_i([z],\vec{x}) & \textrm{ if $\vec{c}_i = [ |\!\geq\!\ell|_\sigma ]$}
		\end{array} \right. \\
		A_i(\vec{c}_i,\vec{x}) &= 
		\left\{\begin{array}{ll} 
			\zeta(A_i)(\vec{c}_i, \vec{x}) & \textrm{ if $\vec{c}_i \neq [ |\!\geq\!\ell|_\sigma ]$} \\
			\zeta(A_i)(\vec{c}_i, \vec{x}) - \sum_{z \in C_\sigma \setminus \{|\!\geq\!\ell|_\sigma\}} \zeta(A_i)([z],\vec{x}) & \textrm{ if $\vec{c}_i = [ |\!\geq\!\ell|_\sigma ]$} \\
	\end{array} \right. 
	\end{align*}
	Each requires $\BigO(s|I|)$ arithmetic operations as the sums are computed only when $\vec{c}_i = [ |\!\geq\!\ell|_\sigma ]$.
\qed
\end{proof}

\begin{theorem}[bringing it all together] \label{thrm:final}
	For an optimisation variant of a $[\sigma,\rho]$-domination problem with $s=|C|$, the join can be computed in $\BigO(s^{k+1}kn(k\log(s) + \log(n)))$ arithmetic operations.
\end{theorem}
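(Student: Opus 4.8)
The plan is to evaluate Equation~\ref{eq:generaljoin} by conjugating it through the zeta/M\"obius transform of Proposition~\ref{prop:mytransform} and reducing the remaining work to the combined cyclic/non-cyclic convolution of Lemma~\ref{lem:combinedconv}, using the filtering trick from the proof of Lemma~\ref{lem:subgraph} to avoid a $2^k$ blow-up. As in Equations~\ref{eq:minds2} and~\ref{eq:mintdsextended} I index $A_l$ and $A_r$ by the state colouring together with the solution size $\kappa_l,\kappa_r\in\{0,\dots,n\}$ in indicator encoding, and I work in $\Field_p$ for an appropriately chosen prime $p$ larger than every integer that arises during the computation (such values are bounded by $s^{\BigO(k)}$, so $\BigO(k\log s)$-bit operations suffice). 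After computing $A_i(\vec{c}_i,\kappa_i)$ from Equation~\ref{eq:generaljoin} it then only remains to read off, for each $\vec{c}_i$, the least $\kappa_i$ with a non-zero entry. I first apply the zeta transform based on $\mathbbm{p}$ from Proposition~\ref{prop:mytransform} to $A_l$ and $A_r$ with $\kappa$ playing the role of the extra index, at cost $\BigO(s^k k n)$; the matching M\"obius transform recovers $A_i$ from its zeta transform at the end for the same cost.

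The point of this transform is that, just like the covering-product identity in Lemma~\ref{lem:coverprod}, it turns the $\oplus$-convolution of Equation~\ref{eq:generaljoin} into a product that is block diagonal over the coordinates of $X_i$. Concretely, writing out the single-coordinate computation and iterating over coordinates as in the proof of Lemma~\ref{lem:coverprod} gives
\[\zeta(A_i)(\vec{c}_i,\kappa_i)=\sum_{\vec{c}_l\boxtimes\vec{c}_r=\vec{c}_i}\ \sum_{\kappa_l+\kappa_r=\kappa_i}\zeta(A_l)(\vec{c}_l,\kappa_l)\,\zeta(A_r)(\vec{c}_r,\kappa_r),\]
where on a $\rho$-coordinate $\boxtimes$ is ordinary non-cyclic addition of the $s_\rho$ possible $\rho$-labels, and on a $\sigma$-coordinate $\boxtimes$ is the \emph{partial} operation that sends $|a|_\sigma,|b|_\sigma$ to $|a+b|_\sigma$ when $a+b<\ell_\sigma$, sends the pair $|\!\geq\!\ell|_\sigma,|\!\geq\!\ell|_\sigma$ to $|\!\geq\!\ell|_\sigma$, and is undefined otherwise. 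Intuitively, the pairs of $\sigma$-labels whose sum reaches $\ell_\sigma$ or beyond are already aggregated into the $|\!\geq\!\ell|_\sigma$-entry by the transform, and the M\"obius transform afterwards subtracts this overshoot back out. Verifying this identity carefully is one of the two places where care is needed.

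Given the identity, I loop over the $3^k$ ways to split $X_i$ into the coordinates on which $\vec{c}_i$ carries a $\rho$-label (set $X_\rho$), a label $|l|_\sigma$ with $l<\ell_\sigma$ (set $X_<$), or the label $|\!\geq\!\ell|_\sigma$ (set $X_\geq$, the remainder); by the block structure of $\boxtimes$ every output colouring lies in exactly one part, so no double counting occurs. In a fixed part the $X_\geq$-coordinates are pinned to $|\!\geq\!\ell|_\sigma$ in $\vec{c}_l,\vec{c}_r,\vec{c}_i$ and merely contribute the pointwise product of the $|\!\geq\!\ell|_\sigma$-slices of $\zeta(A_l)$ and $\zeta(A_r)$, whereas the computation over $X_\rho\cup X_<$ together with $\kappa$ is exactly a combined convolution as in Lemma~\ref{lem:combinedconv}. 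To realise the exact non-cyclic addition needed on all of $X_\rho\cup X_<$ without paying a factor $2^{|X_\rho|+|X_<|}$, I use the filtering technique from Lemma~\ref{lem:subgraph}: run a cyclic convolution modulo $s_\rho$ on the $X_\rho$-coordinates and modulo $\ell_\sigma$ on the $X_<$-coordinates, adjoin a single non-cyclic index $\iota$ equal to the total of those label values (of range $\BigO(sk)$) that forbids any wrap-around, and leave only $\iota$ and $\kappa$ genuinely non-cyclic. By Lemma~\ref{lem:combinedconv} the part with $|X_\rho|=a$ and $|X_<|=b$ costs $\BigO\!\big(s_\rho^{a}\,\ell_\sigma^{b}\cdot sk\cdot n\cdot(k\log s+\log n)\big)$ operations, and since $s_\sigma=\ell_\sigma+1$ we have $\sum_{a+b\le k}\binom{k}{a}\binom{k-a}{b}\,s_\rho^{a}\,\ell_\sigma^{b}\,1^{k-a-b}=(s_\rho+\ell_\sigma+1)^k=s^k$, so summing over all parts yields the claimed $\BigO(s^{k+1}kn(k\log s+\log n))$ bound, into which the two transform passes and the final per-colouring minimisation are absorbed. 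The other delicate point, and the main obstacle, is making this mixed-modulus filtered convolution precise: that one shared filter index $\iota$ still rules out every wrap-around even though the $X_\rho$- and $X_<$-coordinates use the different moduli $s_\rho$ and $\ell_\sigma$, and that it meshes correctly with the $\mathbbm{p}$-zeta aggregation of the $|\!\geq\!\ell|_\sigma$ entries.
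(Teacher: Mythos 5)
Your proposal is correct and follows essentially the same route as the paper's own proof: conjugation by the zeta/M\"obius transform for the partial order $\mathbbm{p}$, a $3^k$-fold partition of $X_i$ according to the label type of $\vec{c}_i$ whose costs sum to $s^k$ by the multinomial theorem, and a single label-sum filter index $\iota$ that converts the non-cyclic additions into cyclic ones handled by Lemma~\ref{lem:combinedconv}, followed by the inverse M\"obius transform. The only difference is presentational: you package the post-transform recurrence as a clean $\boxtimes$-convolution in which the $|\!\geq\!\ell|_\sigma$-coordinates contribute a pointwise product of the aggregated slices, whereas the paper arrives at the same expression (its Equation~\ref{eqn:mytransformlast}) by explicitly reordering sums; both of the ``delicate points'' you flag do check out exactly as in the paper.
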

\begin{proof}
	We want to evaluate Equation~\ref{eq:generaljoin} using a fast transform.
	Consider the what happens to this equation if we apply the zeta transform based on the partial order $\mathbbm{p}$ (Equation~\ref{eq:mytransform0}) to it:
	\begin{align} \label{eqn:general1}
		\zeta(A_i)(\vec{c}_i,\kappa_i) 
		&= \sum_{\vec{d_1} \leq \vec{c}^{\geq\!\ell_\sigma}_i}
		   A_i( [\vec{d_1},\vec{c}^{<\!\ell_\sigma}_i,\vec{c}^\rho_i],\kappa_i )\\
		&= \sum_{\vec{d_1} \leq \vec{c}^{\geq\!\ell_\sigma}_i} \;\;
		   \sum_{\vec{d}_l \oplus \vec{d}_r = [\vec{d_1},\vec{c}^{<\!\ell_\sigma}_i,\vec{c}^\rho_i]} \;\;
		   \sum_{\kappa_l + \kappa_r = \kappa_i} 
		   A_l(\vec{d}_l,\kappa_l) A_r(\vec{d}_r,\kappa_r) \label{eqn:general2}
	\end{align}
	Continuing from here, we can decompose $\vec{d}_l$ and $\vec{d}_r$ coordinate-wise in the same way as we have decomposed $\vec{c}_i$ as $[\vec{c}^{\sigma\geq\!\ell}_i,\vec{c}^{\sigma<\ell}_i,\vec{c}^\rho_i]$ (to be clear: we split the coordinates of $\vec{d}_l$ and $\vec{d}_r$ based on the labels in $\vec{c}_i$, not based on the actual labels in $\vec{d}_l$ and $\vec{d}_r$).
	Let $\vec{d}_l = [\vec{d_1}_l,\vec{d_2}_l,\vec{d_3}_l]$, $\vec{d}_r = [\vec{d_1}_r,\vec{d_2}_r,\vec{d_3}_r]$.
	Now, observe that in $\vec{d}_l \oplus \vec{d}_r$, any pair $\vec{d_1}_l$ and $\vec{d_1}_r$ on the coordinates of $\vec{c}^{\geq\!\ell_\sigma}_i$ is summed over exactly once because for any pair there is exactly one $\vec{d_1}$ such that $\vec{d_1}_l \oplus \vec{d_1}_r = \vec{d_1}$.
	Also observe that because the other coordinates of $\vec{d}_l$ and $\vec{d}_r$ correspond to the vertices from the $\vec{c}^{<\!\ell_\sigma}_i$ and $\vec{c}^\rho_i$ parts of $\vec{c}_i$ their $\oplus$-sum is the standard (non-cyclic) $+$-addition on labels.
	As such, we obtain:	
	\begin{align} \label{eqn:general3}
		\zeta(A_i)(\vec{c}_i,\kappa_i)
		&= \sum_{\vec{d_1}_l, \vec{d_1}_r \leq \vec{c}^{\geq\!\ell_\sigma}_i}
		   \sum_{\vec{d_2}_l + \vec{d_2}_r = \vec{c}^{<\!\ell_\sigma}_i} 
   		   \sum_{\vec{d_3}_l + \vec{d_3}_r = \vec{c}^\rho_i}
		   \sum_{\kappa_l + \kappa_r = \kappa_i} 
		   A_l(\vec{d}_l,\kappa_l) A_r(\vec{d}_r,\kappa_r)
	\end{align}
	We note that, in the first sum, we sum over all $\vec{d_1}_l , \vec{d_1}_r \leq \vec{c}^{\geq\!\ell_\sigma}_i$ which is consistent with earlier notation, but by definition of $\mathbbm{p}$ equals all $\vec{d_1}_l, \vec{d_1}_r \in C_\sigma$.
	
	For our fast join operation, we need the sums $\vec{d_2}_l + \vec{d_2}_r$ and $\vec{d_3}_l + \vec{d_3}_r$ to be cyclic in order to use cyclic convolution.
	Therefore, we apply the same trick as in the proof of Lemma~\ref{lem:subgraph} and replace the tables  $A_l$ and $A_r$ with expanded tables that include the sums of the labels in a state colouring as an additional parameter in the index.
	Here, we do so by defining this sum of the labels of a state colouring as the sum of the number in the labels, ignoring whether they are from $C_\sigma$ or $C_\rho$ and excluding the $|\!\leq\!\ell|_\sigma$ label.
	That is, let the projection function~$\pi$ on labels to be $\pi(|l|_\sigma) =\pi(|l|_\rho)=l$.
	Then, for a state colouring $\vec{c}_i \in C^k$, define $\Sigma(\vec{c}_i) = \Sigma([\vec{c}^{\geq\!\ell_\sigma}_i, \vec{c}^{<\!\ell_\sigma}_i, \vec{c}^\rho_i]) = \sum_{j=1}^{|\vec{c}^{<\!\ell_\sigma}_i|} \pi( (\vec{c}^{\sigma<\ell}_i)_j ) + \sum_{j=1}^{|\vec{c}^\rho_i|} \pi( (\vec{c}^\rho_i)_j )$.
	For example, $\Sigma([|0|_\sigma,|0|_\rho,|\!\leq\!\ell|_\sigma] ])=0$ and $\Sigma([|2|_\sigma,|1|_\rho,|0|_\sigma] ])=3$.
	Now, we can define the expanded tables $A_l$, $A_r$ as:
	\begin{equation} \label{eq:generalextended}
		A_l(\vec{c}_l,\kappa_l, \iota_l) = \left\{ \begin{array}{ll}
		A_l(\vec{c}_l,\kappa_l) & \textrm{ if $\Sigma(\vec{c}_l) = \iota_l$} \\
		0 & \textrm{ otherwise}
		\end{array} \right.
	\end{equation}
	
	Now, we can continue from (\ref{eqn:general3}) replacing the sums with sums coordinate-wise modulo $\ell_\sigma$ and $\ell_\rho + 1$, using the additional parameter to prevent the modular-cycling to happen for the result: if cycling occurs at a coordinate, the sums do not add up any more.
	That is, we now compute $\zeta(A_i)(\vec{c}_i,\kappa_i)$ by evaluating the formula below using $\zeta(A_i)(\vec{c}_i,\kappa_i) = \zeta(A_i)(\vec{c}_i,\kappa_i,\Sigma(\vec{c}_i))$ where:
	\begin{align} \label{eqn:general4}
		\zeta(A_i)(\vec{c}_i,\kappa_i,\iota_i)
		&= \!\!\!\!\!\! \sum_{\vec{d_1}_l, \vec{d_1}_r \leq \vec{c}^{\geq\!\ell_\sigma}_i}
		   \sum_{\vec{d_2}_l + \vec{d_2}_r \equiv \vec{c}^{<\!\ell_\sigma}_i} 
		   \sum_{\vec{d_3}_l + \vec{d_3}_r \equiv \vec{c}^\rho_i}
		   \sum_{\kappa_l + \kappa_r = \kappa_i} 
		   \sum_{\iota_l + \iota_r = \iota_i} \!\!
		   A_l(\vec{d}_l,\kappa_l,\iota_l) A_r(\vec{d}_r,\kappa_r,\iota_r)
	\end{align}	
	Where $\vec{d_2}_l + \vec{d_2}_r \equiv \vec{c}^{<\!\ell_\sigma}_i$ is modulo $\ell_\sigma$and $\vec{d_3}_l + \vec{d_3}_r \equiv \vec{c}^\rho_i$ is modulo $\ell_\rho+1$ (the difference is due to the existence of the $|\!\geq\!\ell|_\sigma$-label).
	
	Next, we continue from Equation~\ref{eqn:general4} by change the order of summation, taking the outermost sum inwards, and by reordering the resulting inner terms:
	\begin{align}
		&= 	\!\! \sum_{\vec{d_2}_l + \vec{d_2}_r \equiv \vec{c}^{<\!\ell_\sigma}_i} 
			\sum_{\vec{d_3}_l + \vec{d_3}_r \equiv \vec{c}^\rho_i}
			\sum_{\kappa_l + \kappa_r = \kappa_i} 
			\sum_{\iota_l + \iota_r = \iota_i}
			\!\left( 			
				\sum_{\vec{d_1}_l \leq \vec{c}^{\geq\!\ell_\sigma}_i} 
				A_l(\vec{d}_l,\kappa_l,\iota_l) 
			\right) \!\!
			\left( 			
				\sum_{\vec{d_1}_r \leq \vec{c}^{\geq\!\ell_\sigma}_i} 
				A_r(\vec{d}_r,\kappa_r,\iota_r)
			\right)\! \\
		&= \!\!\!\!\!\!\!\!\!\!\! \sum_{\vec{d_2}_l + \vec{d_2}_r \equiv \vec{c}^{<\!\ell_\sigma}_i} 
			\sum_{\vec{d_3}_l + \vec{d_3}_r \equiv \vec{c}^\rho_i}
			\sum_{\kappa_l + \kappa_r = \kappa_i} 
			\sum_{\iota_l + \iota_r = \iota_i} \!\!\!
			\zeta(A_l)([\vec{c}^{\geq\!\ell_\sigma}_i\!\!,\vec{d_2}_l,\vec{d_3}_l],\kappa_l,\iota_l) \; \zeta(A_r)([\vec{c}^{\geq\!\ell_\sigma}_i\!\!,\vec{d_2}_r,\vec{d_3}_r],\kappa_r,\iota_r)
			\label{eqn:mytransformlast}
	\end{align}
	Where in the last step, we apply the definition of the $\zeta$-transform for $\mathbbm{p}$ (Equation~\ref{eq:mytransform0}).
	As a result, we obtain a standard convolution sum that can be evaluated using Lemma~\ref{lem:combinedconv} given that the vertices with label $|\!\geq\!\ell|_\sigma$ in $\vec{c}_i$ are fixed.
	
	To be more precise, let us partition $X_i$ into three parts $X_{\geq\!\ell_\sigma}$, $X_{<\!\ell_\sigma}$, $X_\rho$ and say that a state colouring $\vec{c}_i$ is compatible with this partition if: all vertices in $X_{\geq\!\ell_\sigma}$ have the $|\!\geq\!\ell|_\sigma$-label; all vertices in $X_{<\!\ell_\sigma}$ have a label from $C_\sigma \setminus \{|\!\geq\!\ell|_\sigma\}$; and all vertices in $X_\rho$ have a label from $C_\rho$.
	Then, given such a partition of $X_i$, Lemma~\ref{lem:combinedconv} evaluates Equation~\ref{eqn:mytransformlast} for all $\vec{c}_i$ compatible with $(X_{\geq\!\ell_\sigma},X_{<\!\ell_\sigma},X_\rho)$.
	Consequently, we can compute $\zeta(A_i)(\vec{c}_i,\kappa_i,\iota_i)$ for all values of $\vec{c}_i$, $\kappa_i$, and $\iota_i$ by enumerating all partitions of $X_i$ into $(X_{\geq\!\ell_\sigma},X_{<\!\ell_\sigma},X_\rho)$ and evaluating Equation~\ref{eqn:mytransformlast} using Lemma~\ref{lem:combinedconv} for each subset of compatible $\vec{c}_i$ values, and then taking the results together.
	
	As a result, we can evaluate Equation~\ref{eq:generaljoin} using a fast transform that takes the following steps in the following amount of operations:
	\begin{itemize}
		\item Expand the tables $A_l$ and $A_r$ taking the sums of the labels using Equation~\ref{eq:generalextended} to $A_l(\vec{c}_l,\kappa_l,\iota_l)$ and $A_r(\vec{c}_r,\kappa_r,\iota_r)$.
		This takes $\BigO(s^{k+1} kn)$ time, as $\vec{c}_l$ takes $\BigO(s^k)$ values, $\kappa_l$ takes $\BigO(n)$ values and $\iota_l$ takes $\BigO(sk)$ values.
		
		\item Compute $\zeta(A_l)$ and $\zeta(A_r)$ in $\BigO(s^{k+1}k^2n)$ arithmetic operations using Proposition~\ref{prop:mytransform}.
		
		\item Enumerate all partitions of $X_i$ into $(X_{\geq\!\ell_\sigma},X_{<\!\ell_\sigma},X_\rho)$.
		For each such partition, compute the part of the table $\zeta(A_i)(\vec{c}_i,\kappa_i,\iota_i)$ for all $\vec{c}_i$ using Equation~\ref{eqn:mytransformlast} that are compatible with this partition using Lemma~\ref{lem:combinedconv}.
		Then, combine the results to obtain $\zeta(A_i)(\vec{c}_i,\kappa_i,\iota_i)$ for all $\vec{c}_i$, $\kappa_i$ and $\iota_i$.
		
		\smallskip
		For each partition $(X_{\geq\!\ell_\sigma},X_{<\!\ell_\sigma},X_\rho)$, this takes $\BigO( (|C_\sigma|-1)^{|X_{<\!\ell_\sigma}|} |C_\rho|^{|X_\rho|} nsk(k\log(s) + \log(n)))$ arithmetic operations by Lemma~\ref{lem:combinedconv}.
		By summing over all partitions and using the multinomial theorem, we find $\BigO(s^{k+1}kn(k\log(s) + \log(n)))$ arithmetic operation for this whole step, as:
		\[ \sum_{\substack{(X_{\geq\!\ell_\sigma},X_{<\!\ell_\sigma},X_\rho)\\\textrm{partition of }X_i}} \!\!\!\! \!\!\!\! \!\! (|C_\sigma|-1)^{|X_{<\!\ell_\sigma}|} |C_\rho|^{|X_\rho|} 
		 \! = \!\!\!\! \sum_{\substack{x_1 + x_2 + x_3\\ = k}} \!\! \binom{k}{x_1, x_2, x_3} 1^{x_1} (|C_\sigma|-1)^{x_2} |C_\rho|^{x_3} \! = \! (|C_\sigma| + |C_\rho)^k \!=\! s^k
		\]

		\item Extract non-cycling values using $\zeta(A_i)(\vec{c}_i,\kappa_i) = \zeta(A_i)(\vec{c}_i,\kappa_i,\Sigma(\vec{c}_i))$.
		
		\item Compute the M\"obius transform of the result obtaining $A_i$ as $\mu(\zeta(A_i)) = A_i$.
		This is done $\BigO(s^kkn)$ arithmetic operations using Proposition~\ref{prop:mytransform}.
	\end{itemize}
	By summing over these steps we conclude that the algorithm requires $\BigO(s^{k+1}kn(k\log(s) + \log(n)))$ arithmetic operations.
\qed
\end{proof}

Above, the state colourings $\vec{c}$ are split in three components $\vec{c}_i = [\vec{c}^{\geq\!\ell_\sigma}_i,\vec{c}^{<\!\ell_\sigma}_i,\vec{c}^\rho_i]$.
If both $\sigma$ and $\rho$ are finite, we need no M\"obius transforms and it would suffice to use $\vec{c}_i = [\vec{c}^\sigma_i,\vec{c}^\rho_i]$.
If both $\sigma$ and $\rho$ are co-finite, we need M\"obius transforms for both parts of the label set and would need to use $\vec{c}_i = [\vec{c}^{\geq\!\ell_\sigma}_i,\vec{c}^{<\!\ell_\sigma}_i,\vec{c}^{\geq\!\ell_\rho}_i,\vec{c}^{<\!\ell_\rho}_i]$, and adjust the partial order $\mathbbm{p}$ in a way similar to as in Section~\ref{sec:tds}.

\begin{corollary} \label{cor:final}
	Given a graph~$G$ with a tree decomposition~$T$ of $G$ of width~$t$, the optimisation variant of a $[\sigma,\rho]$-domination problem that involves $s=|C|$ labels can be solved in $\BigO(s^{t+2}tn^2(t\log(s)+\log(n)))$ arithmetic operations on $\BigO(t\log(s) + \log(n))$-bit numbers.
\end{corollary}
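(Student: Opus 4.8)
The plan is to obtain Corollary~\ref{cor:final} as an essentially mechanical consequence of Theorem~\ref{thrm:final} and Lemma~\ref{lem:bottleneck}; all the real work has already been done in the proof of Theorem~\ref{thrm:final}. First I would invoke Theorem~\ref{thrm:final} to get a join algorithm $\mathcal{A}$ that, for a join node $i$ with $|X_i| = k$, computes $A_i$ from $A_l$ and $A_r$ using $f(n,k) = \BigO(s^{k+1}kn(k\log(s)+\log(n)))$ arithmetic operations. Since Theorem~\ref{thrm:final} is stated for an optimisation variant, I would apply Lemma~\ref{lem:bottleneck} in the form extended by Proposition~\ref{prop:bottleneck2}, which then gives that $\mathcal{P}$ can be solved on $G$ in $\BigO((s^{t+1}t + f(n,t+1))n)$ arithmetic operations, because every bag of the nice tree decomposition has at most $t+1$ vertices.

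The next step is the asymptotic simplification. Substituting $k = t+1$ into $f$ yields $f(n,t+1) = \BigO(s^{t+2}(t+1)n((t+1)\log(s)+\log(n))) = \BigO(s^{t+2}tn(t\log(s)+\log(n)))$; here I keep $t+2$ in the exponent rather than $t+1$ precisely because $s$ is in general a variable and $t+1$ is only an upper bound on $k$, exactly as in the remark following Corollary~\ref{cor:subgraph}. Multiplying by the outer factor $n$ from Lemma~\ref{lem:bottleneck} gives $\BigO(s^{t+2}tn^2(t\log(s)+\log(n)))$ for the join contribution, and the term $s^{t+1}tn$ coming from the introduce/forget bookkeeping is dominated by this (it is smaller by a factor of at least $s n (t\log(s)+\log(n))$). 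This establishes the claimed operation count.

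Finally I would argue the bit-length bound. The table $A_i$ ultimately stores sizes of partial solutions, which lie in $\{0,1,\ldots,n\}$ and hence need $\BigO(\log(n))$ bits. Inside the join of Theorem~\ref{thrm:final} one works with expanded tables whose entries are $0/1$ indicators (Equation~\ref{eq:generalextended}), indexed additionally by a size parameter $\kappa \in \{0,\ldots,n\}$ and a label-sum parameter $\iota \in \{0,\ldots,sk\}$; the zeta and Fourier transforms accumulate sums of products of these indicators over at most $\BigO(s^k n s k)$ summands, so the intermediate values fit in $\BigO(k\log(s)+\log(n)) = \BigO(t\log(s)+\log(n))$ bits. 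Choosing the prime $p$ as described in Section~\ref{sec:choosep}, so that $\Field_p$ contains the roots of unity required by Lemma~\ref{lem:combinedconv} while $p$ exceeds this bound, all arithmetic can be carried out on $\BigO(t\log(s)+\log(n))$-bit numbers.

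I do not expect a genuine obstacle here: the only subtlety is the careful tracking of magnitudes through the counting embedding (the $\iota$ index and the accumulated convolution sums) to justify the $\BigO(t\log(s)+\log(n))$ word size, and the observation that the exponent must be $s^{t+2}$ rather than $s^{t+1}$ because $s$ is treated as a variable. Everything structural — that the $|\!\geq\!\ell|_\sigma$ bookkeeping together with the modular and non-modular additions reduces exactly to the combined convolution of Lemma~\ref{lem:combinedconv} — is already contained in the proof of Theorem~\ref{thrm:final}, so no further case analysis is needed for the corollary itself.
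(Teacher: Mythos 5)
Your proposal is correct and follows essentially the same route as the paper: plug Theorem~\ref{thrm:final} into Lemma~\ref{lem:bottleneck}, note that the exponent becomes $t+2$ because $s$ is a variable and $k\leq t+1$, and bound the word size by $\BigO(t\log(s)+\log(n))$ via the magnitude of the accumulated convolution sums (the paper phrases this as the total mass of the expanded tables being at most $s^k$, plus $\log(n)$ bits for solution sizes, but the conclusion is the same).
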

\begin{proof}
	Plug Theorem~\ref{thrm:final} into Lemma~\ref{lem:bottleneck} and observe that all arithmetic operations can be done using $\BigO(t\log(s)+\log(n))$-bit numbers: the sum of all the entries in $A_l$ and $A_r$ is at most $s^k$, hence $t\log(s)$ bits, while we need the additional $\log(n)$ bits to store partial solution sizes (see also Corollary~\ref{cor:mids}).
	The $t+2$ in the exponent comes from the fact that $t \geq k-1$ (the minus one in Definition~\ref{def:tw})
\qed
\end{proof}

We conclude by summarising results for the other variants of the $[\sigma,\rho]$-domination problems.
\begin{theorem}[results for {$[\sigma,\rho]$}-domination problem variants]
Given a graph~$G$ with a tree decomposition~$T$ of $G$ of width~$t$, the different problem variants of a $[\sigma,\rho]$-domination problem involving $s=|C|$ labels can be solved with the following amount of effort:
\begin{itemize}
	\item Existence: $\BigO(s^{t+2}t^2n\log(s))$ operations on $\BigO(t\log(s))$-bit numbers.
	\item Optimisation: $\BigO(s^{t+2}tn^2(t\log(s)+\log(n)))$ operations on $\BigO(t\log(s) + \log(n))$-bit numbers.
	\item Counting: $\BigO(s^{t+2}t^2n\log(s))$ operations on $\BigO(n)$-bit numbers.
	\item Counting optimisation: $\BigO(s^{t+2}tn^2(t\log(s)+\log(n)))$ operations on $\BigO(n)$-bit numbers.
\end{itemize}
\end{theorem}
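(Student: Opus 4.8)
The plan is to obtain all four bounds by feeding an appropriate join algorithm into Lemma~\ref{lem:bottleneck}/Proposition~\ref{prop:bottleneck2}, reusing the construction of Theorem~\ref{thrm:final} almost verbatim. The optimisation case is already Corollary~\ref{cor:final}, so only the existence, counting, and counting‑optimisation variants remain. Throughout, for problems in which $\sigma$ and/or $\rho$ is cofinite rather than finite one uses the variant of Theorem~\ref{thrm:final} described in the remark after its proof (splitting each state colouring into two, three, or four blocks according to how many of $\sigma,\rho$ are cofinite, and adjusting the partial order $\mathbbm{p}$ accordingly); this changes none of the running‑time estimates, so I would just note it once and then argue for the finite/cofinite representative case.

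For the counting‑optimisation variant I would observe that the join to be performed is \emph{exactly} Equation~\ref{eq:generaljoin}, the only difference from the optimisation case being that the entries $A_l(\vec{c}_l,\kappa_l)$, $A_r(\vec{c}_r,\kappa_r)$ now hold the \emph{number} of partial solutions of the given equivalence class and size instead of a $0/1$ indicator. Since the algorithm of Theorem~\ref{thrm:final} evaluates Equation~\ref{eq:generaljoin} irrespective of what the input entries are, it computes this join in $\BigO(s^{k+1}kn(k\log(s)+\log(n)))$ arithmetic operations, and Proposition~\ref{prop:bottleneck2} takes care of the remaining node types (keeping both the minimum size and the count attaining it). All table entries count subsets of $V_i$, hence are at most $2^n$, so it suffices to compute in $\Field_p$ for a prime $p>2^n$ carrying the roots of unity required by Theorem~\ref{thrm:final}; such a $p$ with $\log p=\BigO(n)$ exists by the argument of Section~\ref{sec:choosep}, because the product of the needed root orders is polynomial in $s$ and $k$. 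Plugging $f(n,k)=\BigO(s^{k+1}kn(k\log(s)+\log(n)))$ into Proposition~\ref{prop:bottleneck2} gives the stated bound on $\BigO(n)$‑bit numbers. For the pure counting variant I would drop the size index $\kappa$ everywhere and compute
\[ A_i(\vec{c}_i)=\sum_{\vec{c}_l\oplus\vec{c}_r=\vec{c}_i} A_l(\vec{c}_l)\,A_r(\vec{c}_r) \]
by rerunning the proof of Theorem~\ref{thrm:final} with the $\kappa$‑component deleted: in the application of Lemma~\ref{lem:combinedconv} in step~3 the non‑cyclic part then consists only of the label‑sum parameter $\iota$ (so $l=1$ and $Q=\BigO(sk)$ in place of $l=2$ and $Q=\BigO(nsk)$), and the $\log(n)$ term vanishes. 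Each of the $s^k$ partitions costs $\BigO(R\,sk\,(k\log s))$ with $\sum R=s^k$, so the join runs in $\BigO(s^{k+1}k^2\log(s))$ operations, the expansion, zeta and M\"obius steps being no more expensive; counts are again $\le 2^n$, so a prime $p>2^n$ and $\BigO(n)$‑bit arithmetic suffice, and Proposition~\ref{prop:bottleneck2} yields $\BigO(s^{t+2}t^2n\log(s))$ operations.

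For the existence variant I would use the counting join just described, but replace every nonzero entry of the output table by $1$ after each join node. All recurrences other than the join preserve $\{0,1\}$‑valuedness (the size‑minima in the forget and root recurrences become logical disjunctions), so the inputs $A_l,A_r$ to every join are $\{0,1\}$‑valued; hence the integer value computed for each $A_i(\vec{c}_i)$ before thresholding counts pairs of state colourings related by the $\oplus$‑cover relation and is at most $s^{\BigO(k)}$, and likewise every intermediate integer quantity in the algorithm of Theorem~\ref{thrm:final} is bounded by $s^{\BigO(k)}$. It is therefore enough to work in $\Field_p$ for a prime $p>s^{\BigO(k)}$ with the required roots of unity; since the product of those root orders is $\mathrm{poly}(s,k)$, Section~\ref{sec:choosep} supplies such a $p$ with $\log p=\BigO(k\log s)=\BigO(t\log s)$. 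The arithmetic‑operation count is that of the counting join, $\BigO(s^{k+1}k^2\log(s))$ per join, giving $\BigO(s^{t+2}t^2n\log(s))$ operations overall on $\BigO(t\log(s))$‑bit numbers.

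The hard part will be the bit‑size claim for the existence variant: one must justify the thresholding step (without it the final value — the total number of $[\sigma,\rho]$‑dominating sets — would again be as large as $2^n$) and, crucially, verify that the join "black box" of Theorem~\ref{thrm:final} returns the \emph{exact} integer count whenever that count is bounded by $s^{\BigO(k)}$, so that a prime of only $\BigO(t\log s)$ bits can be used with no loss of information. The remaining cases are essentially bookkeeping once one notices that the algorithm of Theorem~\ref{thrm:final} already evaluates Equation~\ref{eq:generaljoin} itself, independently of whether its operands are indicators or counts, and independently of whether the size index $\kappa$ is present.
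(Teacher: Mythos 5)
Your proposal is correct and follows essentially the same route as the paper: reuse the join of Theorem~\ref{thrm:final} with count-valued (rather than indicator) entries for the counting variants, delete the size index $\kappa$ for the existence and counting variants and redo the analysis of the convolution step, and bound the bit-sizes by $2^n$ for counts and by the $s^{\BigO(k)}$ table sums (after thresholding to $0/1$) for existence. The paper's own proof is terser but makes exactly these moves; your explicit accounting of $l=1$, $Q=\BigO(sk)$ in Lemma~\ref{lem:combinedconv} and of the thresholding step just spells out what the paper leaves implicit.
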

\begin{proof}
	The result for the optimisation problem follows from Corollary~\ref{cor:final}, and underlying Theorem~\ref{thrm:final}.
	
	For the counting optimisation problem, we use the same construction, only without expanding tables by solution sizes and extracting the non-zero entries: at every step of the algorithm, we let $A_i(\vec{c},\kappa)$ be the number of partial solutions of size $\kappa$ that correspond to the equivalence class identified by~$\vec{c}$.
	The join then also comes down to evaluating Equation~\ref{eq:generaljoin}, resulting in the same amount of arithmetic operations.
	For the existence and counting problems, we observe that we can remove the parameter~$\kappa$ at step of the algorithm, as solution sizes do not matter.
	Redoing the analysis of the resulting algorithm gives in the claimed amount of arithmetic operations.
	
	For both counting problems variants, we need $\BigO(n)$-bit numbers as there can be $\BigO(2^n)$ solutions to count.
	For the existence problem, we need $\BigO(t\log(s))$ as the sum of all entries in a $\BigO(s^{t+1})$ table with zero-one entries can be at most $\BigO(s^{t+1})$.
\qed
\end{proof}

\section{Conclusion}
In this paper, we have shown how M\"obius and Fourier transforms can be used to speed-up computations for dynamic programming algorithms on tree decompositions.
This led us to the currently fastest algorithm for the general case of the $[\sigma,\rho]$-domination problems on tree decompositions.
Additionally, we generalised the covering product from~\cite{BjorklundHKK07} from being defined on the subset lattice to more general partial orders (Lemma~\ref{lem:coverprod} and Theorem~\ref{thrm:generalmobiousjoin}).

The same algebraic transforms can, and have been, used for many different problems.
For example, the M\"obius-transform-based approach has been used for clique packing, partitioning and covering problems such as {\sc Partition Into Triangles} or {\sc Minimum Cover By Cliques}; see~\cite{vanRooijBR09}.
Also, the Fourier-transform-based approach (the variant from~\cite{CyganP10}) has been used for {\sc Bandwidth}~\cite{CyganP10} and {\sc Connected Vertex Cover}~\cite{vanRooijR19}.
The Fourier-transform-based approach in this paper originates from~\cite{CyganNPPRW11a}, where it was used together with counting and filtering to obtain the join table in Figure~\ref{fig:hamcyclejoin} for {\sc Longest/Hamiltonian Path/Cycle} and {\sc (Parital) Cycle Cover}.
\begin{figure}[tb]
	\begin{center}
		\scalebox{.9}{
			\begin{tabular}{c|cccc|} 
				& $0$ & $1_1$ & $2$ & $1_2$ \\ \hline 
				$0$ & $0$ & $1_1$ & $2$ & $1_2$ \\
				$1_1$ & $1_1$ & $2$ & & \\
				$2$ & $2$& & &  \\
				$1_2$ & $1_2$ & & & $2$ \\ \hline	
			\end{tabular}
			\hspace{0.2cm}
			\begin{tabular}{c|cccc|} 
				& $0$ & $1_1$ & $2$ & $1_2$ \\ \hline 
				$0$ & $0$ & $1_1$ & $2$ & $1_2$ \\
				$1_1$ & $1_1$ & $2$ & $1_2$ & $0$ \\
				$2$ & $2$ & $1_2$ & $0$ & $1_1$ \\
				$1_2$ & $1_2$ & $0$ & $1_1$ & $2$ \\ \hline	
			\end{tabular}
	}\end{center}
	\caption{The right join table is for a join for the {\sc Longest/Hamiltonian Path/Cycle} and {\sc (Parital) Cycle Cover} problems, as described in the appendix of~\cite{CyganNPPRW11a}. It is obtained by using a variant of `counting and filtering' (Section~\ref{sec:filtering}) on the left join table, for which a fast FFT-based join exists.}
	\label{fig:hamcyclejoin}
\end{figure}

We want to emphasise the more general observation that almost any join for which the join table has a certain `max' or `addition' or `modulo' structure, or a combination of those, can be done fast using the tools from this paper.
For example, the Fourier transform and corresponding cyclic convolution theorem can be used to obtain algorithms for problems where there is some modulo relation in the definition of the problem's solution set~$D$, e.g., an odd number of neighbours in~$D$.

The approaches in this paper have wider use beyond tree decompositions.
For example, they can be applied to branch decomposition instead of tree decompositions, obtaining faster algorithms there as well, with faster exact $\BigO(c^{\sqrt{n}})$-time algorithms on planar graphs as direct corollaries~\cite{BodlaenderLRV10,PinoBR16}.
 
\paragraph{Open Problems.}
What we see is that, in order to use fast algebraic transforms, we embed the problem into algebraic structures that we further parameterise by the solution size ($\kappa$ in the algorithms in this paper).
However, without the replacement property (Definition~\ref{def:replacementprop}), this leads to algorithms with super-linear dependence on~$n$, while algorithms that are exponentially-slower in~$t$ but linear in~$n$ exists.
Can we somehow remove this super-linear dependence on $n$?

Moreover, when we consider weighted versions of the problems, the weights will appear in the run times of the exponentially-optimal algorithms.
For the exponentially-slower algorithms (e.g., those by Alber et al.~\cite{AlberBFKN02}) weights play no role in the worst-case running times.
Can we somehow remove the dependence on the weights and obtain $\BigOs(s^t)$-time algorithms for weighted problems?

\end{document}